\newtheorem{theorem}{Theorem}
\newtheorem{remark}{Remark}
\newtheorem{proposition}{Proposition}
\newtheorem{lemma}{Lemma}
\newtheorem{corollary}{Corollary}
\theoremstyle{definition}
\newtheorem{definition}{Definition}
\newcommand{\vf}{\mathbf{f}}
\newcommand{\f}{\mathrm{f}}
\newcommand{\vfx}{\mathbf{f}_X}
\newcommand{\fx}{\mathrm{f}_X}
\newcommand{\vfy}{\mathbf{f}_Y}
\newcommand{\vfz}{\mathbf{f}_Z}
\newcommand{\fz}{\mathrm{f}_Z}
\newcommand{\vfyh}{\mathbf{f}_{\hat{Y}}}
\newcommand{\fyh}{\mathrm{f}_{\hat{Y}}}
\newcommand{\vfzh}{\mathbf{f}_{\hat{Z}}}
\newcommand{\fzh}{\mathrm{f}_{\hat{Z}}}
\newcommand{\mc}[1]{\mathcal{#1}}
\newcommand{\msc}[1]{\mathscr{#1}}
\newcommand{\op}[1]{\mathop{\operatorname{#1}}}
\newcommand{\rk}{R_{K}}
\newcommand{\trk}{R'_{K}}
\newcommand{\bW}{\mathbf{W}}
\newcommand{\bX}{\mathbf{X}}
\newcommand{\bY}{\mathbf{Y}}
\newcommand{\ux}{\underline{x}}
\newcommand{\uy}{\underline{y}}
\newcommand{\uX}{\underline{X}}
\newcommand{\uY}{\underline{Y}}
\newcommand{\uZ}{\underline{Z}}
\newcommand{\hX}{\hat{X}}
\newcommand{\hY}{\hat{Y}}
\newcommand{\hZ}{\hat{Z}}
\newcommand{\uhX}{\underline{\hat{X}}}
\newcommand{\uhY}{\underline{\hat{Y}}}
\newcommand{\epm}{\epsilon_\text{m}}
\newcommand{\epw}{\epsilon_\text{w}}
\newcommand{\tR}{\tilde{R}}
\newcommand{\tW}{\tilde{W}}
\newcommand{\tw}{\tilde{w}}
\newcommand{\code}{\mc{C}_{ml}}
\newcommand{\randcode}{\msc{C}_{ml}}
\newcommand{\E}{\mathbb{E}}
\newcommand{\stv}{\mathrm{S}_{\op{\scriptscriptstyle TV}}}
\newcommand{\skl}{\mathrm{S}_{\op{\scriptscriptstyle KL}}}
\newcommand{\dtv}{d_{\op{\scriptscriptstyle TV}}}
\title{The Secure Storage Capacity \\ of a DNA Wiretap Channel Model}
\author{Praneeth Kumar Vippathalla and Navin Kashyap \thanks{ Praneeth Kumar V.\ (praneethv@iisc.ac.in) and N.\ Kashyap (nkashyap@iisc.ac.in)  are with the Department of Electrical Communication Engineering, Indian Institute of Science, Bangalore 560012. This work was supported in part by a grant, IE/RERE-19-0514, from the Indian Institute of Science. It was presented in part at the 2022 IEEE International Symposium on Information Theory (ISIT 2022), Espoo, Finland.}}
\begin{document}
\IEEEoverridecommandlockouts
\maketitle
\begin{abstract}
  In this paper, we propose a strategy for making DNA-based data storage information-theoretically secure through the use of wiretap channel coding. This motivates us to extend the shuffling-sampling channel model of Shomorony and Heckel (2021) to include a wiretapper. Our main result is a characterization of the \emph{secure storage capacity} of our DNA wiretap channel model, which is the maximum rate at which data can be stored within a pool of DNA molecules so as to be reliably retrieved by an authorized party (Bob), while ensuring that an unauthorized party (Eve) gets almost no information from her observations. Furthermore, our proof of achievability shows that index-based wiretap channel coding schemes are optimal.

% In this paper, we introduce a DNA-based storage wiretap channel that models secure DNA data storage where Alice hides data in DNA which can be reliably retrieved by Bob while keeping Eve in ignorance of it. One way of achieving security is to use primers as key, which is privately shared between Alice and Bob. This method was suggested by Clelland \emph{et al.} (1999), where Alice synthesizes the data using the primers as key, and then hides it within an ocean of junk DNA. We propose to tune the parameters of this process so that we can create an advantage to Bob (trusted party) over Eve (untrusted party) resulting in a wiretap channel. Our model allows us to further use (wiretap) coding of the data  to boost the security. As a first step in this direction, we introduce \emph{noisy shuffling sampling} wiretap channel, built upon a data storage model considered by Shomorony and Heckel (2021). For \emph{noise-free shuffling sampling} wiretap channel, we characterize the \emph{secure storage capacity}, the maximum rate at which data can be stored that can be  reliably retrieved by Bob, while ensuring that Eve gets (almost) no information. Furthermore, we show that index-based coding schemes are indeed optimal in terms of secure storage capacity.
\end{abstract} 

\begin{IEEEkeywords}
DNA-based secure data storage, information-theoretic security, wiretap channel coding, shuffling-sampling channel, shared key, secure storage capacity
\end{IEEEkeywords}

\section{Introduction} \label{sec:introduction}
In DNA-based storage, raw data (e.g., text) is converted into sequences over an alphabet consisting of the building blocks of DNA, namely, the four nucleotide bases, adenine (A), guanine (G), thymine (T), and cytosine (C). The sequences over the DNA alphabet are then physically realized by artificially synthesizing DNA molecules (called oligonucleotides, or oligos, in short) corresponding to the string of A, G, T, C letters forming the sequences. These synthetic DNA molecules (oligos) can be mass-produced and replicated to make many thousands of copies and the resulting pool of oligos can be stored away in a controlled environment.  At the time of data retrieval, sequencing technology is used to determine the A-G-T-C sequence forming each oligo from the pool. Prior to sequencing, the pool of oligos is subjected to several cycles of Polymerase Chain Reaction (PCR) amplification. In each cycle of PCR, each molecule in the pool is replicated (``amplified'') by a factor of 1.6--1.8. The PCR amplification process requires knowledge of short initial segments (prefixes) and final segments (suffixes) of the oligos to be amplified. This knowledge is used to design the  required primers to initiate PCR amplification. After PCR, a small amount of material from the amplified pool is passed through a sequencing platform \cite{goodwin16} that randomly samples and sequences (by patching together ``reads'' of a relatively short length) the molecules from the pool. The raw data is then retrieved from these sequences. 

While DNA-based data storage technology provides a reliable solution for long-term data storage, there often can arise security issues. Suppose Alice wants to store using DNA-based data storage some sensitive information which is to be retrieved later by a trusted party, Bob. A solution to this problem is that Alice uses a private key $K$, which is shared with Bob, to one-time pad her information $W$ and then store it into a pool of synthetic DNA oligonucleotides. But a significant drawback of this scheme is that the size of the key $K$ should be comparable to the size of information $W$, which is not practically feasible.  Another simple solution, proposed by Clelland \emph{et al}. \cite{Cle99}, is to design the oligos that encode $W$ in such a way that the keystring $K$ can be converted to the specific primers needed for initiating PCR-based amplification of these oligos. Then, these oligos can be synthesized in small amounts (low copy numbers), and then hidden within an ocean of ``background''  or ``junk'' DNA. The background DNA could, for example, be fragments from the human genome.  A vial containing the composite pool of information-bearing and background DNA is stored in a DNA-storage repository.   

Since Bob knows $K$, given the vial containing the composite oligo pool, he can provide the primers needed to selectively amplify only the information-bearing oligos using PCR. After several cycles of PCR, the copy numbers of these oligos in the post-PCR sample become large enough that when the sample is fed into a sequencing platform, each of these oligos get a large number of reads. After filtering out the reads that come from the background DNA (this is easily possible if the background DNA is made up of human genome fragments), the remaining reads correspond to the information-bearing oligos. These reads can be assembled using standard sequence assembly algorithms, after which their information content, $W$, can be recovered by Bob.

Eve, on the other hand, does not know $K$, so is unable to selectively amplify the information-bearing oligos. Instead, her only option is to amplify the entire composite oligo pool, including background DNA, using whole-library PCR. (In whole-library PCR, known adapters are non-selectively ligated onto the oligos in the pool, and the complementary sequences of these adapters are used as primers to initiate PCR.)  However, this process does not discriminate between information-bearing oligos and background DNA, so it does not significantly change the proportion of information-bearing molecules to background DNA. Since background DNA still constitutes the overwhelming majority of molecules in the amplified pool, it is highly unlikely that the information-bearing oligos will get sufficiently many reads to get reliably sequenced. This protects the data $W$ from being reliably recovered by Eve. 

\begin{figure}[h]
\centering
\includegraphics[width=0.7\columnwidth]{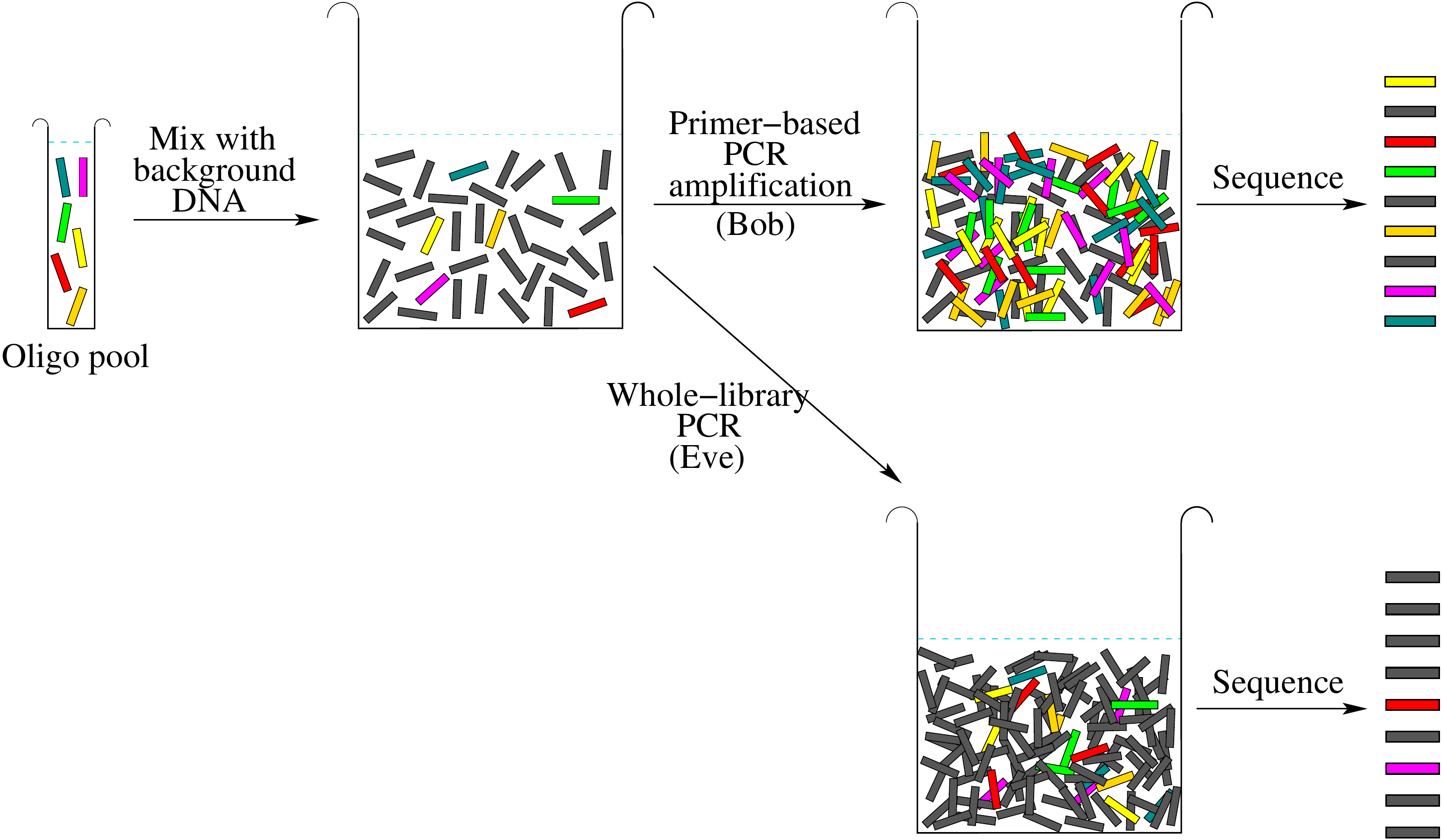}
\caption{The basic strategy of secure DNA-based data storage.}
\label{fig:dna_storage}
\end{figure}

One issue is that Eve's protocol will allow her to partially recover the data $W$, which may be undesirable. This is possible even after just one round of sequencing, but if her budget allows for multiple rounds of sequencing, then she can use the sequences recovered from the first round to determine some of the primers. She can then use these primers for selective amplification in the next round of sequencing to boost her chances of getting more information about $W$. The more the rounds of sequencing she is allowed, the better she can do.

\subsection{Our proposed strategy: Using coding to boost security}

What we propose is the use of coding to greatly (and cheaply) improve upon the basic scheme above. As observed above, the use of the shared key $K$ as primers gives Bob an advantage over Eve. In information-theoretic parlance, we say that the use of primers creates a ``channel'' from Alice to Bob that is much less noisy (much more reliable) than the ``channel'' from Alice to Eve. In the area of information-theoretic security, this situation is called a \emph{wiretap channel} \cite{Wyner75}, and the design of coding schemes for such a channel is a well-investigated research topic -- see e.g., \cite{csiszar78,bloch_barros_2011,Thangaraj07}. These coding schemes greatly boost any advantage that Bob has over Eve, however small it may initially be. To clarify, the aim of a wiretap channel coding scheme is to enable Alice to encode the data $W$ into a codeword $X$ (in our case, this will be realized as a pool of oligos) such that the difference in noise statistics between the Alice-Bob and Alice-Eve channels can be exploited, as envisioned below:
\begin{itemize}
\item[(a)] When $X$ passes through the channel from Alice to Bob, what Bob receives is a slightly noisy version of $X$, which we denote by $Y$, from which he is able to recover the data $W$ highly reliably;
\item[(b)] but when $X$ passes through the channel from Alice to Eve, Eve observes a significantly noisier version $Z$, from which she gets almost \emph{no information} about the data $W$.
\end{itemize}

The proportion, $\rho$, of information-bearing molecules in the composite pool is a knob that we can use to control the noise statistics of Eve's channel --- see Appendix~\ref{app:erasure_probs} for a brief discussion of how $\rho$ affects Eve's probability of oligo erasure. It is desirable to keep $\rho$ small ($10^{-3}$ or lower) so as to put Eve at a significant disadvantage. But it cannot be so small that the primers used to initiate PCR fail to find a sufficient quantity of information-bearing molecules to amplify, causing even Bob's protocol to fail.

The advantages of our approach over the basic approach are two-fold:
\begin{itemize}
\item[(i)] Our scheme should be able to tolerate higher ratios $\rho$ than the basic scheme of Clelland et al.\ \cite{Cle99}, as it is enough to choose a $\rho$ that gives a minor statistical advantage for Bob over Eve. Higher ratios $\rho$ will allow Bob's protocol to work much more reliably. 
\item[(ii)] Wiretap channel coding is supposed to ensure that Eve gets essentially no information about the data $W$, \emph{no matter what strategy she uses} to recover $W$ from her observations $Z$.
\end{itemize}
\subsection{Main contribution}
We build on the recent pioneering work of Shomorony and Heckel \cite{shomorony21} that models and determines the fundamental limits of DNA-based data storage (but without any considerations of security). In this work, data $W$ is encoded in the form of an oligo pool, which is viewed as a multiset $X$, each element of which is a sequence of length $L$ over the DNA alphabet $\Sigma = \{A,C,G,T\}$. From the multiset $X$, a multiset of $N$ sequences is randomly drawn according to some fixed probability distribution $P$. The $P$ distribution encapsulates the randomness inherent in the processes of oligo synthesis, PCR amplification and sequencing. The resulting multiset, $Y$, of sequences is observed by the decoder, which attempts to retrieve $W$ from $Y$. This channel is referred to as a \emph{(noise-free) shuffling-sampling channel} with distribution $P$. Shomorony and Heckel \cite{shomorony21} studied the \emph{storage capacity} of this model, defined to be the maximum rate at which data $W$ can be stored in an oligo pool $X$, and reliably retrieved from the observed multiset $Y$. Their study in fact extends to a \emph{noisy} model wherein the $N$ sequences sampled from $X$ may further be corrupted by insertion, deletion and substitution errors affecting the individual letters making up each sequence. 

\begin{figure}
\centering
\resizebox{0.75\columnwidth}{!}{\input{DNA_storage_wiretap_channel.pdf_t}}
\caption{Our DNA storage wiretap channel model.}
\label{fig:dna_wtc_model}
\end{figure}

We extend the noise-free Shomorony-Heckel model above by introducing an additional noise-free shuffling-sampling channel with distribution $Q$ for the unauthorized party (Eve) 
% and by allowing Alice and Bob to share a key $K$ of rate $\rk$ 
--- see Fig.~\ref{fig:dna_wtc_model}. 
% Bob's ability to selectively amplify information-bearing oligos using his knowledge of primers allows us to assume that the probability, $q_0$, that a particular oligo is not seen by Eve is larger than the corresponding probability, $p_0$, for Bob.  
We study the \emph{secure storage capacity}, $C_s$, of the noise-free shuffling-sampling model, which we define to be the maximum rate at which data $W$ can be stored in an oligo pool $X$, and reliably retrieved from Bob's observation $Y$ and $K$, while ensuring that Eve gets (almost) no information about $W$ from her observations. We completely characterize $C_s$, giving an expression for it that depends only on $q_0$ and $p_0$. 
We also study the secure storage capacity of the noise-free shuffling-sampling model with a secret key $K$ of rate $\rk$ shared between Alice and Bob. We give a precise characterization of the secure storage capacity as a function of $\rk$. 

\subsection{Related work}

DNA-based data storage, and various situations that arise in communication networks, recently spurred interest in an independent study of the permutation (shuffling) channel \cite{anuran_permutation, kovacevic_2018, anshoo_2019}. In the work \cite{anuran_permutation}, Makur carried out an information-theoretic study of   noisy permutation channels and established some coding theorems. A noisy permutation channel acts on a sequence of symbols in the following way: first, the channel permutes the order of the  symbols, and then corrupts each symbol by passing it through a discrete memoryless channel. 
The main difference between this model and the one considered in \cite{shomorony21} is that the size of the alphabet of each symbol is fixed here, while it grows with the length of the sequence\footnote{In \cite{shomorony21}, we can view each oligo of length $L$ as a symbol taking values in the alphabet $\Sigma^L$.} in \cite{shomorony21}. Since the order of the sequences gets lost in the channel, we can encode messages, without sacrificing performance, into different multisets (empirical distributions or frequency vectors), and transmit representative sequences corresponding to each multiset across the channel. A study of codes over multisets, correcting insertions, deletions, and substitution errors, was done by Kova\v{c}evi\'c and Tan in \cite{kovacevic_2018}. Permutation channels also appear in the modeling of the bee-identification problem, which was studied by Tandon, Tan, and Varshney  in  \cite{anshoo_2019}.

In the seminal paper \cite{shomorony21}, Shomorony and Heckel proposed an information-theoretic model for DNA-based data storage and studied the fundamental limits on reliable data storage in the DNA medium. This model, which they called a \emph{noisy shuffling-sampling} channel, captures the randomness that arises in DNA-based data storage.  For a noise-free version of this channel, they exactly characterized the storage capacity, which is the maximum rate at which a message can be reliably stored in the oligo pool. Moreover, \cite{shomorony21}
considered an extension of the noise-free version, wherein a certain probabilistic model was assumed for substitution errors. Under some constraints on the parameters of this noisy model, Shomorony and Heckel gave an exact characterization of the storage capacity. Other works \cite{nir22, lenz19} studied the noisy shuffling-sampling channel under a different sampling model.  An experimental study of the probabilities of substitution, insertion and deletion errors at each phase (synthesis, amplification, and sequencing) can be found in the work of Heckel, Mikutis, and Grass \cite{heckel_errors}. 

A work that studies privacy in the context of DNA sequencing is \cite{mazooji_2020}. The idea of mixing 
a DNA sample with ``noise" DNA samples (known only to the individual preparing the mixed sample) is used in \cite{mazooji_2020} to hide the sequenced genetic information of an individual from a sequencing laboratory. The focus of \cite{mazooji_2020} is to find an optimal mixing strategy to maximize privacy. In our current work, we study information-theoretic security in the context of DNA-based data storage; to our knowledge, this has not been addressed in any prior published work.\footnote{A preliminary study of secure DNA-based data storage was reported in the Master's thesis of Johns U.K. \cite{johns_report}, supervised by the second author (N.K.).}

\subsection{Notation}
A useful fact about multisets is that they can be uniquely identified with frequency vectors. Given a multiset $A$ whose elements are from a finite set $\mathcal{A}=\{a_1,\ldots, a_{|\mathcal{A}|}\}$, we denote by $\vf_A=[\f_A(a_1), \ldots, \f_A(a_{|\mathcal{A}|})]$ the frequency vector corresponding to $A$. For $a_i \in  \mathcal{A}$, the component $\f_A(a_i)$ counts the number of occurrences of $a_i$ in the set $A$.  The components of a frequency vector will be typeset in regular font, for example $\fx$. The symbols $\mathbb{N}$ and $\mathbb{R}$ denote the set of natural numbers $\{1,2,3,\ldots \}$ and the set of real numbers, respectively. Let $h(x)$ denote the binary entropy function, i.e, $h(x)= -x \log x-(1-x) \log(1-x)$, for $x \in (0,1)$. The $n$-dimensional probability simplex, denoted by $\Delta^n$, is defined as 
\begin{align*}
    \Delta^n := \Big\{(x_0,x_1,\ldots,x_n)\in \mathbb{R}^{n+1}: \sum\limits_{i=0}^n x_i =1 \Big\}.
\end{align*}
For a positive integer $n$, we use the notation $[n]$ to denote the set $\{1,2,\ldots,n\}$. For real-valued sequences $(a_n)_{n\geq 1}$ and $(b_n)_{n\geq 1}$ with $b_n>0$ for all $n$ large enough, we write $a_n=O(b_n)$ if $\limsup_{n \to \infty} |a_n|/b_n<\infty$;  $a_n=o(b_n)$ if $\lim_{n \to \infty} a_n/b_n=0$; and $a_n \sim b_n$ if $\lim_{n\to \infty}a_n/b_n = 1$.
\subsection{Organization}
The rest of this paper is organized as follows. In Section~\ref{sec:problem},  we formally introduce a DNA storage wiretap channel model and define a notion of secure storage capacity for this channel model. In the same section, we present one of our main results on the characterization of the secure storage capacity of our model. Next, in Section~\ref{sec:problem_shared_key}, we study an extension of the DNA wiretap channel model, allowing Alice and Bob to use a shared key in the wiretap channel coding scheme. Section~\ref{sec:proof_main_result} contains proofs of our main results, with some of the auxiliary results relegated to Appendices~\ref{app:A}--\ref{app:block-erasure}. Appendix~\ref{app:block-erasure}, in particular, is devoted to the so-called block-erasure wiretap channel, which plays a crucial role in the achievability arguments for our theorems. 
%detailed proofs of some of the converse steps are given Appendix~\ref{app:A} and Appendix~\ref{app:B}. An alternative proof in a special case is presented in Appendix~\ref{app:C}. 
In Appendix~\ref{app:erasure_probs}, we make some observations concerning oligo erasure probabilities when sequencing a mixed sample.

\section{A DNA Storage Wiretap Channel Model}\label{sec:problem}
Let $M$ denote the number of DNA molecules in the oligo pool, and $L$ denote the length of each molecule.
Though $\{A,C,G,T\}$ is the alphabet of DNA coding, we work with $\Sigma:=\{0,1\}$ for the sake of simplicity. However, in the case of  general alphabet $\Sigma$, the results will involve an extra $\log_2 |\Sigma|$ term.
%Define $\Sigma:=\{A,C,G,T\}$, which is the alphabet of DNA coding. 
Fix a constant $\beta := \lim_{M \to \infty} \frac{L}{\log M} > 1$; throughout this paper, all logarithms are to the base $2$. Let $W$ be a uniform random variable taking values in $\mathcal{W}\triangleq \{1, 2, \ldots, 2^{MLR}\}$ for some $R \geq 0$. Alice encodes (maps) the message $W$ into $M$ DNA molecules, each of length  $L$, which is denoted by a  multiset $X^{ML}=\{X_1^L, \ldots,X_M^L\}$. The stored $X^{ML}$ is amplified and sequenced to recover the message $W$. A fundamental model that captures the cumulative effect of these processes without errors is the \emph{noise-free shuffling-sampling channel} with some distribution $(\pi_0,\pi_1,\ldots)$ \cite{shomorony21}. This channel randomly permutes (shuffles) the order of the $L$-length molecules, and independently outputs each molecule $n \geq 0$ times with probability $\pi_n$. The output of this channel is also a multiset.  We can use this model for the amplification and synthesis at both Bob and Eve's side. Hence Bob and Eve observe multisets $Y^{N_{\text{m}}L}=\{Y_1^L, \ldots,Y_{N_{\text{m}}}^L\}$ and $Z^{N_{\text{w}}L}=\{Z_1^L, \ldots,Z_{N_{\text{w}}}^L\}$, respectively, which are obtained by passing the input $X^{ML}$ through two independent noise-free shuffling-sampling channels with distributions $P=(p_0,p_1,\ldots)$ and  $Q=(q_0,q_1,\ldots)$, respectively. See Fig.~\ref{fig:dna_wtc}. The quantities\footnote{The subscripts `m' and `w' stand for ``main'' and ``wiretap'', respectively.} $N_{\text{m}}$ and $N_{\text{w}}$ are random variables taking non-negative integer values, and $Y_i^L, Z_j^L$ are  elements of $\Sigma^L$. The goal of Alice is to store a message that can only be recovered by Bob while keeping Eve in ignorance of it. 
Formally, let $\phi: \mathcal{W} \longrightarrow {\mathbb{N}}^{\Sigma^L}$ be an encoding function (possibly a stochastic function) of Alice, and $\psi: {\mathbb{N}}^{\Sigma^L} \longrightarrow \mathcal{W} \cup \{\textbf{e}\}$ be a decoding function of Bob, where \textbf{e} denotes an error, and ${\mathbb{N}}^{\Sigma^L}$ denotes the set of all  multisets with finite cardinality over $\Sigma^L$.  We say that a secure message rate $R$ is achievable if there exists a sequence of pairs of encoding and decoding functions $\{(\phi,\psi)\}_{M=1}^{\infty}$ that satisfy Bob's recoverability condition,
 \begin{align} \label{eq:bob_recoverability}
     \mathbb{P}\{\psi(Y^{N_{\text{m}}L}) \neq W\} \to 0,
 \end{align}
 and the (strong) secrecy condition, \begin{align}\label{eq:eve_secrecy}
    I(W ; Z^{N_{\text{w}}L}) \to 0
 \end{align}
 as $M \to \infty$.
 The \emph{secure storage capacity}, $ C_s $, is defined by 
 \begin{align}
     C_s \triangleq \sup\{R: R \text{ is achievable}\}.
 \end{align}

\begin{figure}[h]
\centering
\resizebox{0.85\width}{!}{% \tikzstyle{block}=[rectangle, draw, thick, minimum width=2em, minimum height=2em]

% \begin{tikzpicture}[node distance=4cm,auto,>=latex']
%     \node (x) {$X^{ML}$};
%     \node (dummy) [right of=x, node distance=4cm] {};
%     \node [block, align=center] (main) [above of = dummy, node distance=1cm] {Noise-free \\ shuffling sampling channel\\ $P=(p_0,p_1,p_2,\ldots)$};
%     \node [block, align=center] (wiretap) [below of = dummy, node distance=1cm] {Noise-free \\ shuffling sampling channel\\ $Q=(q_0,q_1,q_2,\ldots)$};
%     \node (y) [right of=main] {$Y^{N_{\text{m}}L}$};
%     \node (z) [right of=wiretap] {$Z^{N_{\text{w}}L}$};
    
%     \draw[->, thick] (x.east) --  (main.west);
%     \draw[->, thick] (x.east) --  (wiretap.west);
%     \draw[->, thick] (main) --  (y);
%     \draw[->, thick] (wiretap) --  (z);

% \end{tikzpicture}

\tikzstyle{block}=[rectangle, draw, thick, minimum width=2em, minimum height=2em]

\begin{tikzpicture}[node distance=3.5cm,auto,>=latex']
    \node (x) {$X^{ML}$};
    \node (dummy) [right of=x, node distance=3.5cm] {};
    \node [block, align=center] (main) [above of = dummy, node distance=1cm] {Noise-free \\ shuffling-sampling channel\\ $P=(p_0,p_1,p_2,\ldots)$};
    \node [block, align=center] (wiretap) [below of = dummy, node distance=1cm] {Noise-free \\ shuffling-sampling channel\\ $Q=(q_0,q_1,q_2,\ldots)$};
    \node (y) [right of=main] {$Y^{N_{\text{m}}L}$};
    \node (z) [right of=wiretap] {$Z^{N_{\text{w}}L}$};
        
    \node (w) [left of=x, node distance=1.5cm] {$W$};
       
    \node (w_hat) [right of=y, node distance=1.5cm] {$\widehat{W}$};
        
     \node (alice) [above of=w, node distance=1.2cm] {Alice};
     \node (bob) [above of= y, node distance=1cm] {Bob};
     \node (eve) [below right of= z, node distance=0.8cm] {Eve};
    
    \draw[->, thick] (x.east) --  (main.west);
    \draw[->, thick] (x.east) --  (wiretap.west);
    \draw[->, thick] (main) --  (y);
    \draw[->, thick] (wiretap) --  (z);
     \draw[->, thick] (w) -- (x.west);
    \draw[->, thick] (y.east) -- (w_hat);
   
\end{tikzpicture}}
\caption{A DNA storage wiretap channel model.}
\label{fig:dna_wtc}
 \end{figure}

% In the $q_0=1$ case, where none of the molecules are sampled by Eve's channel, the secure storage capacity is nothing but the storage capacity of Bob's channel \cite{shomorony21}. 
In a slight abuse of notation, we use $\vfx$, $\vfy$
and $\vfz$ to denote the frequency (random) vectors corresponding to the multisets $X^{ML},Y^{N_{\text{m}}L}$ and $Z^{N_{\text{w}}L}$, respectively, over $\Sigma^L$. 
%Since the channels do a random permutation of the $L$-length molecules, the conditional distribution of $(Y^{N_{\text{m}}L}, Z^{N_{\text{w}}L})$ given  $X^{ML}$ is a function of $\vfx$. 
As frequency vectors and multisets are interchangeable\footnote{Since a frequency vector gives an equivalent representation of a multiset, we have that for a random multiset $A$, $H(\vf_A|A)=0=H(A|\vf_A)$.}, the Markov chain $W-X^{ML}-(Y^{N_{\text{m}}L}, Z^{N_{\text{w}}L})$
is equivalent to
%It means that $X^{ML}-\vfx-(Y^{N_{\text{m}}L}, Z^{N_{\text{w}}L})$ form a Markov chain. As $\vfx$ is a function of $X^{ML}$, we have the Markov chain $W-X^{ML}-\vfx-(Y^{N_{\text{m}}L}, Z^{N_{\text{w}}L})$, which is equivalent to  
\begin{align}\label{mc:original_model}
    W-\vfx-(\vfy,\vfz)
\end{align}
for any joint distribution that is induced by an encoding function.

In the $q_0=1$ case, where none of the molecules are sampled by Eve's channel, the secure storage capacity is nothing but the storage capacity of Bob's channel \cite{shomorony21}.

\begin{theorem}[\cite{shomorony21}, Theorem~1]
The storage capacity of a noise-free shuffling-sampling channel with the distribution $(\pi_0,\pi_1,\ldots)$ is equal to $\left(1-\frac{1}{\beta}\right)(1-\pi_0)$. 
\end{theorem}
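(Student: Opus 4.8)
The plan is to prove the capacity by establishing matching achievability (lower) and converse (upper) bounds, both equal to $(1-\tfrac1\beta)(1-\pi_0)$. The two factors have a clean operational meaning that guides the whole argument. Since $|\Sigma^L| = 2^L \approx M^{\beta}$ with $\beta>1$, the $M$ stored molecules can (and, for a good code, will) be distinct, but the channel destroys their ordering, so the object the decoder sees is an \emph{unordered} collection; replacing the $M!$ orderings of a size-$M$ tuple by a single multiset costs a factor $1-\log M/L \to 1-\tfrac1\beta$ of the raw $ML$ bits. The factor $1-\pi_0$ is simply the fraction of molecules that survive sampling, i.e.\ appear in the output at least once.

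\emph{Achievability (index-based coding).} I would reserve the first $\lceil \log M\rceil$ symbols of molecule $i$ to store its address $i$, leaving $\ell := L-\lceil\log M\rceil = L\,(1-\tfrac1\beta)(1+o(1))$ payload symbols per molecule. Because the addresses are distinct, all $M$ molecules are distinct, so the support of the output multiset reveals exactly which addresses were sampled; as the channel is noise-free, the payload of each sampled molecule is then recovered verbatim. The number of sampled addresses, $S=\sum_{i=1}^{M}\mathbf{1}[\text{molecule }i\text{ sampled}]$, is a sum of i.i.d.\ Bernoulli$(1-\pi_0)$ variables, so $S\ge (1-\pi_0-\epsilon)M$ with probability tending to $1$ by concentration. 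I would therefore place an MDS (erasure-correcting) code of length $M$ and dimension $k=\lceil(1-\pi_0-\epsilon)M\rceil$ over the alphabet $\Sigma^\ell$ across the payload blocks, using $W$ to index its codewords. Decoding succeeds whenever at most $M-k$ blocks are erased, an event of probability $\to 1$, giving rate $\tfrac{k\ell}{ML}\to(1-\pi_0)(1-\tfrac1\beta)$ as $\epsilon\downarrow 0$.

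\emph{Converse.} Starting from $MLR=H(W)$, Fano's inequality and Bob's vanishing error probability give $H(W)\le I(W;Y)+o(ML)\le H(Y)+o(ML)$, so it suffices to show $H(Y)\le M(1-\pi_0)(L-\log M)(1+o(1))$. I would split the output multiset $Y$ into its \emph{support} $\tilde Y$ (the set of distinct molecules present, of size $S$) and its vector of \emph{multiplicities}, and bound $H(Y)\le H(\tilde Y)+H(\text{multiplicities}\mid\tilde Y)$. For the support, conditioning on $S$ (which costs only $H(S)=O(\log M)=o(ML)$) and counting $S$-subsets of $\Sigma^L$ gives
\begin{align}
H(\tilde Y\mid S)\le \mathbb{E}\Big[\log\binom{2^L}{S}\Big]\le \mathbb{E}[S]\,L-\mathbb{E}[S\log S]+O(\mathbb{E}[S]).
\end{align}
Convexity of $s\mapsto s\log s$ yields $\mathbb{E}[S\log S]\ge \mathbb{E}[S]\log\mathbb{E}[S]$, so the right side is $\mathbb{E}[S]\,(L-\log\mathbb{E}[S])+O(M)$, which is increasing in $\mathbb{E}[S]$ on $[0,M]$ because $\log\mathbb{E}[S]\le\log M\ll L$. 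If the encoder uses each distinct value with multiplicity $c_m$ (so $\sum_m c_m=M$), molecule $m$ is absent from $Y$ with probability $\pi_0^{c_m}$, whence
\begin{align}
\mathbb{E}[S]=\sum_m\big(1-\pi_0^{c_m}\big)\le \sum_m c_m\,(1-\pi_0)=M(1-\pi_0),
\end{align}
using $1-\pi_0^{c}\le c(1-\pi_0)$ for integers $c\ge 1$ (concavity of $c\mapsto 1-\pi_0^{c}$). Substituting this bound gives $H(\tilde Y)\le M(1-\pi_0)(L-\log M)(1+o(1))$ and hence $R\le(1-\pi_0)(1-\tfrac1\beta)$.

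\emph{Main obstacle.} The one genuinely delicate step is controlling $H(\text{multiplicities}\mid\tilde Y)$ and showing it is $o(ML)$, since multiplicities are unbounded and could a priori smuggle in information. Bounding the number of compositions of the total sample count $N_{\text{m}}$ into $S$ positive parts by $\binom{N_{\text{m}}-1}{S-1}$ gives $H(\text{multiplicities}\mid\tilde Y)\le \mathbb{E}\big[S\log(eN_{\text{m}}/S)\big]$, which is $O(M)=o(ML)$ once the sampling distribution has finite mean (so $N_{\text{m}}=O(M)$ with high probability); the heavy-tailed case needs an extra truncation of large multiplicities, whose removed contribution cannot change the answer since the capacity depends only on $\pi_0$. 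The remaining pieces—Fano, the subset count, and the convexity estimate—are routine.
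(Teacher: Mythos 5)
Your overall architecture matches the one this paper builds on (following Shomorony--Heckel): indexing turns the channel into a block-erasure channel for achievability, and the converse is Fano plus an entropy/counting bound. Several of your steps are sound, in particular the estimate $\mathbb{E}[S]=\sum_m\bigl(1-\pi_0^{c_m}\bigr)\le M(1-\pi_0)$ handling repeated input molecules, and the Jensen/monotonicity chain for $H(\tilde Y\mid S)$. However, there are two genuine gaps.

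First, the achievability as stated fails for $1<\beta<2$. You ask for an MDS code of length $M$ and dimension $\lceil(1-\pi_0-\epsilon)M\rceil$ over the alphabet $\Sigma^{\ell}$, whose size is $2^{\ell}\approx M^{\beta-1}$. When $\beta<2$ this alphabet is smaller than the code length, and nontrivial MDS codes of length exceeding (roughly) the field size do not exist: Reed--Solomon requires $n\le q$, and the MDS conjecture forbids much more. The conclusion survives because MDS is not actually needed --- any capacity-achieving code for the block-erasure channel suffices, e.g.\ random (linear) coding. This is precisely why the paper routes its achievability through the information-stability results (Theorem~\ref{thm:capacity_stable} and Corollary~\ref{cor:capacity_bec}) rather than through an explicit algebraic code.

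Second, and more seriously, your handling of the multiplicity term is circular. Your composition-counting bound on $H(\text{multiplicities}\mid\tilde Y)$ needs $N_{\text{m}}$ to be controlled, i.e.\ $\sum_n n\pi_n<\infty$; for heavy-tailed $(\pi_n)$ (infinite mean, or even infinite $H(N_{\text{m}})$) the bound is vacuous, yet the theorem is claimed for arbitrary sampling distributions. Your proposed repair --- truncate large multiplicities because ``the capacity depends only on $\pi_0$'' --- assumes the very statement being proved; moreover, truncation replaces $Y$ by a function of $Y$, which can only \emph{decrease} $I(W;Y)$, so it cannot be used to upper-bound the mutual information available to the actual decoder. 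The correct fix is the one used in Section~\ref{sec:converse} of the paper (taken from Shomorony--Heckel's proof): give Bob genie side information $S$ that resolves whether identical output reads came from the same input copy. Then $(S,Y^{N_{\text{m}}L})$ is a stochastic function of the multiset $\hat Y^{M_{\text{m}}L}$ of input molecules sampled at least once, i.e.\ $W-\vfx-\vfyh-(S,Y^{N_{\text{m}}L})$ is a Markov chain, hence $I(W;Y^{N_{\text{m}}L})\le I(W;\vfyh)$. The multiplicities in $\hat Y^{M_{\text{m}}L}$ are \emph{input} multiplicities, which sum to at most $M$, so the unbounded-output-multiplicity problem disappears for every $(\pi_n)$, and your support-counting argument then applies essentially verbatim to $\vfyh$. (Equivalently, one can bound $I(W;\text{multiplicities}\mid\tilde Y)\le H(\text{input multiplicities on the support})=O(M)$ instead of bounding the unconditional entropy of the output multiplicities --- the same idea in disguise.)
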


One of our main results is the following expression for the secure storage capacity of our model.
\begin{theorem}\label{thm:dna_wtc_capacity}
For a DNA storage wiretap channel with $q_0 \geq p_0$,
\begin{align}
    C_s = \left(1-\frac{1}{\beta}\right)(q_0-p_0).
\end{align}
\end{theorem}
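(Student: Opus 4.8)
The plan is to prove the two bounds $C_s \ge (1-\tfrac1\beta)(q_0-p_0)$ and $C_s \le (1-\tfrac1\beta)(q_0-p_0)$ separately, guided by the observation that the claimed value is exactly the difference of the storage capacities of Bob's and Eve's channels as given by the Shomorony--Heckel theorem quoted above. This is the form one expects from a \emph{degraded} wiretap channel, and the first conceptual step is to record that, because $q_0 \ge p_0$, Eve's sampling is a stochastically degraded version of Bob's: each molecule is sampled by Bob with probability $1-p_0$ and by Eve with probability $1-q_0\le 1-p_0$, so retaining each molecule that Bob samples independently with probability $(1-q_0)/(1-p_0)$ reproduces Eve's marginal. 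Since $I(W;\vfy)$ and $I(W;\vfz)$ depend only on the respective marginal channels $P_{\vfy\mid\vfx}$ and $P_{\vfz\mid\vfx}$, we may replace the true (conditionally independent) pair $(\vfy,\vfz)$ by a physically degraded one realizing the Markov chain $W-\vfx-\vfy-\vfz$ without changing either quantity.

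For achievability I would use an \emph{index-based} construction. Reserve the first $\approx \log M$ bits of each of the $M$ oligos for a distinct address in $[M]$ and treat the remaining $\approx L-\log M \approx L(1-\tfrac1\beta)$ bits as a payload. After shuffling, the addresses let Bob and Eve reassemble payloads by position, so the permutation is neutralized and the channel seen by the $M$ payloads is exactly a \emph{block-erasure wiretap channel}: the $i$-th payload block (a symbol over $\{0,1\}^{L-\log M}$) is delivered intact to Bob with probability $1-p_0$ and to Eve with probability $1-q_0$, and is otherwise erased as a whole. Applying a capacity-achieving wiretap code for this channel (the content of Appendix~\ref{app:block-erasure}) encodes $W$ into the payloads at a secure rate of $[(1-p_0)-(1-q_0)](L-\log M)=(q_0-p_0)L(1-\tfrac1\beta)$ bits per block; summing over the $M$ blocks and normalizing by $ML$ yields the rate $(1-\tfrac1\beta)(q_0-p_0)$. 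Bob's reliability follows from the erasure-correcting property of the code, while the addresses are fixed position labels independent of $W$, so the entire secrecy burden rests on the payload code.

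For the converse, start from $MLR=H(W)=I(W;\vfy)+H(W\mid\vfy)$. Fano's inequality with Bob's recoverability gives $H(W\mid\vfy)=o(ML)$, and the secrecy constraint gives $I(W;\vfz)=o(1)$, so
\begin{align}
MLR \le \big(I(W;\vfy)-I(W;\vfz)\big)+o(ML).
\end{align}
Using the degraded chain above and the standard manipulation $I(W;\vfy)-I(W;\vfz)=I(W;\vfy\mid\vfz)\le I(\vfx;\vfy\mid\vfz)=I(\vfx;\vfy)-I(\vfx;\vfz)$, it remains to bound this channel-difference term. I would first note that in the noise-free model multiplicities carry no information, so $\vfy$ and $\vfz$ may be replaced by the mere \emph{sets} of sampled molecules; conditioning on Eve's set, the extra information Bob holds is the multiset $C_{\mathrm{new}}$ of molecules sampled by Bob but not Eve, whose size concentrates around $M(q_0-p_0)$. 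The decisive estimate is the combinatorial bound $H(C_{\mathrm{new}}\mid |C_{\mathrm{new}}|=k)\le \log\binom{2^L}{k}\approx k\,(L-\log M)$, which for $k\approx M(q_0-p_0)$ evaluates to $M(q_0-p_0)L(1-\tfrac1\beta)(1+o(1))$; this is exactly where the factor $1-\tfrac1\beta$, the cost of losing the ordering of a length-$L$ multiset, enters. Combining gives $R\le(1-\tfrac1\beta)(q_0-p_0)+o(1)$.

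The main obstacle is the achievability analysis of the block-erasure wiretap channel under the \emph{strong} secrecy criterion $I(W;\vfz)\to 0$: unlike a standard discrete memoryless wiretap channel, the per-use alphabet $\{0,1\}^{L-\log M}$ grows with $M$, the number of channel uses $M$ and the block length grow simultaneously, and erasures act on whole blocks rather than single symbols, so off-the-shelf secrecy-capacity theorems do not apply directly. Establishing that a random coset/binning code simultaneously achieves vanishing error for Bob and vanishing \emph{total} information leakage to Eve in this doubly-asymptotic regime is the crux, and is precisely what Appendix~\ref{app:block-erasure} is designed to supply. A secondary technical point, arising in the converse, is controlling molecule collisions when $1<\beta<2$, where distinct oligos need not yield distinct sequences; these collisions must be shown to cost only $o(ML)$ bits, so that the clean identity $H(\vfy\mid\vfx)\approx M\,h(p_0)$ and the counting estimate above remain valid.
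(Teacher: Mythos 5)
Your achievability argument is essentially the paper's own: index the $M$ oligos with $\log M$-bit addresses, reduce to a block-erasure wiretap channel, and invoke the strong-secrecy coding theorem of Appendix~\ref{app:block-erasure}; there is nothing to flag there.

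The converse, however, has a genuine gap, and it is precisely the gap that forces the paper onto a different route. Your two reductions --- ``Eve's channel is a stochastically degraded version of Bob's because $q_0 \ge p_0$'' and ``multiplicities carry no information, so $\vfy$ and $\vfz$ may be replaced by the sets of sampled molecules'' --- are not valid for general sampling distributions $P=(p_0,p_1,\ldots)$ and $Q=(q_0,q_1,\ldots)$. Multiplicities do carry information: if sequence $a$ is stored with multiplicity $\fx(a)$, then $\fy(a)$ is a sum of $\fx(a)$ i.i.d.\ draws from $P$, so output multiplicities are informative about input multiplicities; consequently, replacing $\vfy$ by the set of sampled sequences is a data-processing step in the \emph{wrong direction} for a converse (it can only decrease $I(W;\vfy)$). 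Moreover, your coupling ``retain each Bob-sampled molecule with probability $(1-q_0)/(1-p_0)$'' reproduces only Eve's \emph{set} marginal, not her multiset law under $Q$. A concrete counterexample: take $P=(0,\tfrac12,\tfrac12,0,\ldots)$ and $Q=(0,1,0,\ldots)$, so $q_0=p_0=0$. Eve's channel is noiseless ($\vfz=\vfx$), while Bob's is genuinely lossy (the inputs $\{a,a,b\}$ and $\{a,b,b\}$ can both produce Bob's output $\{a,a,b,b\}$), so no Markov chain $W-\vfx-\vfy-\vfz$ with the correct marginals exists, and the identity $I(W;\vfy)-I(W;\vfz)=I(W;\vfy\mid\vfz)$ on which your bound rests fails. (The theorem still holds there --- it gives $C_s=0$, as it must since Eve sees everything --- but your argument cannot produce it.)

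The paper avoids this by changing the observations \emph{before} claiming degradation: Bob is genie-aided with side information $S$ so that his observation is equivalent to $\vfyh$, the multiset of input molecules sampled at least once (a valid upper-bounding step, unlike passing to sets), and Eve is weakened to the indicator vector $\vfzh$; only then does Lemma~\ref{lem:degraded} establish degradation of the resulting per-component channels, which depend on $P$ and $Q$ only through $p_0$ and $q_0$ --- this is what makes the hypothesis $q_0\ge p_0$ sufficient. Even after this fix, your counting step still fails in general: under the weakened Eve, $\lVert\vfzh\rVert_1$ does \emph{not} concentrate at $M(1-q_0)$ when the pool contains repeated molecules (if all $M$ molecules are identical, $\lVert\vfzh\rVert_1\le 1$), so $H(\vfyh\mid\vfzh)$ cannot be bounded by $\approx M(q_0-p_0)(L-\log M)$ by counting alone. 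This is why the paper replaces counting with the analytic optimization over input distributions (Section~\ref{sec:upp_component} onward, Appendix~\ref{app:A}, and Proposition~\ref{prop:sup_entropy_diff}). Your combinatorial converse is, in essence, the paper's Appendix~\ref{app:C}, which works only for $Q=(q_0,q_1)$ --- Bernoulli sampling makes $\lVert\vfz\rVert_1$ concentrate regardless of input repetitions and makes Eve's multiset a componentwise sub-multiset of Bob's (Lemma~\ref{lem:degraded_special}) --- and the authors explicitly state they were unable to extend that argument to general $Q$.
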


\begin{proof}
    See Section~\ref{sec:proof_main_result}.
\end{proof}

\section{A DNA Storage Wiretap Channel Model with a Shared Key}\label{sec:problem_shared_key}
In addition to using a shared key as primers, if we allow Alice and Bob to use another shared key for the purpose of wiretap channel coding, then the resulting model is called a \emph{DNA storage wiretap channel model with a shared key.} The main difference between these two shared keys is that the former key creates the DNA storage wiretap channel model on which Alice and Bob use coding to boost security, whereas the latter is used solely for wiretap channel coding. Bearing this distinction in mind, we study in this section the effect of the latter shared key on the secure storage capacity of the DNA storage wiretap channel considered in Section~\ref{sec:problem}. 

For some $\rk \geq 0$, let $K \in \mathcal{K}_M\triangleq \{1, 2, \ldots, 2^{ML\rk}\}$  be a uniform random variable, called a \emph{shared key}. Only Alice and Bob have access to $K$.  Alice uses $K$ to encode a message $W$, which is independent of $K$, into a multiset $X^{ML}$.  Bob and Eve observe multisets $Y^{N_{\text{m}}L}$ and $Z^{N_{\text{w}}L}$, respectively, which are obtained by passing the input $X^{ML}$ through two independent noise-free shuffling-sampling channels with distributions $P=(p_0,p_1,\ldots)$ and  $Q=(q_0,q_1,\ldots)$, respectively. Using $K$, Bob tries to decode the message from $Y^{ML}$. Let $\phi: \mathcal{W} \times \mc{K}_M \longrightarrow {\mathbb{N}}^{\Sigma^L}$ be an encoding function (possibly a stochastic function) of Alice, and $\psi: {\mathbb{N}}^{\Sigma^L} \times \mc{K}_M \longrightarrow \mathcal{W} \cup \{\textbf{e}\}$ be a decoding function of Bob, where \textbf{e} denotes an error, and ${\mathbb{N}}^{\Sigma^L}$ denotes the set of all  multisets with finite cardinality over $\Sigma^L$.  We say that a secure message rate $R$ is achievable with a shared key of rate $\rk$ if there exists a sequence of pairs of encoding and decoding functions $\{(\phi,\psi)\}_{M=1}^{\infty}$ that uses a sequence of shared keys $\{K\}_{M=1}^{\infty}$, and satisfy Bob's recoverability condition,
 \begin{align} \label{eq:bob_recoverability}
     \mathbb{P}\{\psi(Y^{N_{\text{m}}L}, K) \neq W\} \to 0,
 \end{align}
 and the (strong) secrecy condition, \begin{align}\label{eq:eve_secrecy}
    I(W ; Z^{N_{\text{w}}L}) \to 0
 \end{align}
 as $M \to \infty$.
 The secure storage capacity of a DNA storage wiretap channel with a shared key of rate $\rk$ is defined by 
 \begin{align}
     C_s(\rk) \triangleq \sup\{R: R \text{ is achievable with a shared key of rate $\rk$}\}.
 \end{align}

% \begin{figure}[h]
% \centering
% \resizebox{0.8\width}{!}{\input{Figures/shared_key_dna_wtc.tex}}
% \caption{DNA storage wiretap channel model with a shared key $K$.}
% \label{fig:dna_wtc}
%  \end{figure}

The following theorem characterizes $C_s(\rk)$ of a DNA storage wiretap channel.
\begin{theorem}\label{thm:dna_wtc_key_capacity}
For a DNA storage wiretap channel with a shared key of rate $\rk$,
\begin{align}
    C_s(\rk) = \min \left\lbrace\left(1-\frac{1}{\beta}\right)[q_0-p_0]^{+}+\rk,\  C_M\right\rbrace,
\end{align}
where $[x]^{+}=\max\left\lbrace x, 0\right\rbrace$ and $C_M :=\left(1-\frac{1}{\beta}\right)(1-p_0)$ is the storage capacity of the main (Bob's) channel.
\end{theorem}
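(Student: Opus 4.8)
The plan is to prove the two inequalities $C_s(\rk)\le \min\{\,\cdot\,\}$ and $C_s(\rk)\ge \min\{\,\cdot\,\}$ separately. Throughout I write $n:=ML$ and abbreviate $C_s:=\left(1-\frac1\beta\right)[q_0-p_0]^{+}$, so that the target is $C_s(\rk)=\min\{C_s+\rk,\,C_M\}$. The achievability reduces the shuffling-sampling wiretap channel to a \emph{block-erasure wiretap channel with a shared key} (the object of Appendix~\ref{app:block-erasure}), whereas the converse splits into a reliability bound $R\le C_M$ and a secrecy bound $R\le C_s+\rk$.

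For achievability I would use index-based coding (which the abstract already advertises as optimal). Reserve the first $\log M$ coordinates of each oligo to carry its index $i\in[M]$ and treat the remaining $\ell:=L-\log M$ coordinates as a payload $V_i\in\Sigma^{\ell}$. Since the channels are noise-free, any oligo sampled at least once reveals its payload exactly, so after sorting by index the pool presents Bob and Eve with $M$ independent uses of a single block-erasure wiretap channel in which the payload is erased for Bob with probability $p_0$ and for Eve with probability $q_0$, independently. It then suffices to code for this channel using $K$. In Appendix~\ref{app:block-erasure} I would show that, by superposing a wiretap (coset/binning) code securing a fraction $[q_0-p_0]^{+}$ of the payload bits against Eve's erasures with a one-time pad that uses the key to secure additional bits up to Bob's erasure limit, the per-block secure payload $\min\{\ell[q_0-p_0]^{+}+L\rk,\;\ell(1-p_0)\}$ is achievable under \emph{strong} secrecy (allocating $L\rk$ key-bits to each of the $M$ blocks, for a total of $n\rk$). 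Summing over the $M$ blocks and normalizing by $n$ gives $R\to\min\{(1-\tfrac1\beta)[q_0-p_0]^{+}+\rk,\;(1-\tfrac1\beta)(1-p_0)\}=\min\{C_s+\rk,C_M\}$.

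For the converse, the bound $R\le C_M$ is routine: since $W\perp K$, Fano and Bob's recoverability give $nR=H(W|K)=I(W;Y^{N_{\text{m}}L}|K)+o(n)\le I(\vfx;\vfy|K)+o(n)=\sum_k \mathbb P(K{=}k)\,I(\vfx;\vfy|K{=}k)$, and each conditional term is at most the information capacity of Bob's channel, which the storage-capacity characterization of \cite{shomorony21} equates to $nC_M+o(n)$. The bound $R\le C_s+\rk$ is the delicate one. The point is that Bob's error and the secrecy index $I(W;Z^{N_{\text{w}}L})$ are functionals of the marginal laws $P_{\vfy|\vfx}$ and $P_{\vfz|\vfx}$ alone, so any code meeting \eqref{eq:bob_recoverability}--\eqref{eq:eve_secrecy} for the actual (independent) channels meets them verbatim for any coupling with the same marginals; I may therefore evaluate the information quantities under a convenient coupling. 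For $q_0\ge p_0$ Eve's erasure channel is stochastically degraded with respect to Bob's, so I take the physically degraded coupling $\vfx-\vfy-\vfz$, under which, using $I(W;\vfz)=o(n)$ and Fano,
\begin{align*}
nR = H(W)=H(W|\vfz)+o(n) &\le I(W;\vfy,K\,|\,\vfz)+o(n)\\
&= I(W;K\,|\,\vfz)+I(W;\vfy\,|\,K,\vfz)+o(n)\\
&\le H(K)+I(\vfx;\vfy\,|\,K,\vfz)+o(n),
\end{align*}
where the last line uses $I(W;K|\vfz)\le H(K)=n\rk$ and the Markov chain $(W,K)-\vfx-(\vfy,\vfz)$. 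Degradedness yields $I(\vfx;\vfy|K,\vfz)=I(\vfx;\vfy|K)-I(\vfx;\vfz|K)\le \sup_{P_{\vfx}}[I(\vfx;\vfy)-I(\vfx;\vfz)]$, and the converse of Theorem~\ref{thm:dna_wtc_capacity} identifies this supremum (normalized by $n$) with $C_s=(1-\tfrac1\beta)(q_0-p_0)$, giving $R\le \rk+C_s+o(1)$. When $q_0<p_0$ I instead use the coupling $\vfx-\vfz-\vfy$, for which $I(\vfx;\vfy|K,\vfz)=0$, so $R\le \rk+o(1)$, matching $C_s=0$.

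I expect the main obstacle to be on the achievability side, in the block-erasure wiretap channel coding theorem with a shared key: one must show that superposing a one-time pad onto a wiretap code preserves \emph{strong} secrecy $I(W;\vfz)\to0$ while still meeting Bob's reliability right up to the $\min$, uniformly as the payload length $\ell=L-\log M\to\infty$; this is where the careful secrecy analysis (via channel resolvability or a leftover-hash argument) must be carried out, and I would relegate it to Appendix~\ref{app:block-erasure}. On the converse side the only non-routine ingredient is the marginal-only reduction licensing the degraded coupling; once that is installed, the bound $R\le C_s+\rk$ follows from the mutual-information chain above together with Theorem~\ref{thm:dna_wtc_capacity}.
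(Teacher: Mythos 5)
Your achievability argument is essentially the paper's: index-based encoding reduces both channels to a block-erasure wiretap channel with erasure probabilities $p_0$ and $q_0$ acting on payload blocks of length $L-\log M$, and a key-assisted wiretap coding theorem for that channel (with strong secrecy established by channel-resolvability arguments, deferred to Appendix~\ref{app:block-erasure}) yields exactly the claimed rate; your rate accounting matches the paper's Theorem~\ref{thm:capacity_bewtc}.

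The converse, however, has a genuine gap. Your entire chain rests on the claim that for $q_0\ge p_0$ the channel $P_{\vfz|\vfx}$ is stochastically degraded with respect to $P_{\vfy|\vfx}$, so that a physically degraded coupling $\vfx-\vfy-\vfz$ with the correct marginals exists. This is false for general sampling distributions: the hypothesis $q_0\ge p_0$ constrains only the zero-sampling masses, and the multiplicity information carried by $\vfy$ and $\vfz$ breaks degradation. Concretely, take $P=(1/2,\,1/4,\,1/4)$ and $Q=(0.51,\,0.49)$, so $q_0>p_0$. On a single frequency component, input multiplicity $i=0$ forces any degrading kernel $Q'$ to satisfy $Q'(0\mid 0)=1$; multiplicity $i=1$ then forces $Q'(0\mid1)+Q'(0\mid2)=4\epsilon$ and $Q'(1\mid1)+Q'(1\mid2)=2-4\epsilon$ with $\epsilon=0.01$, hence $Q'(1\mid1),Q'(1\mid2)\ge 1-4\epsilon$; but at multiplicity $i=2$ the simulated probability that Eve sees exactly one copy is then at least $\left(\tfrac14+\tfrac{5}{16}\right)(1-4\epsilon)\approx 0.54$, whereas the true value is $2q_0(1-q_0)<0.5$ --- a contradiction. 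What has happened is that you carried the block-erasure picture, which is valid only under index-based encoding (an achievability device), into the converse, where arbitrary codes and the raw frequency-vector channels must be handled. This is precisely the obstruction the paper's converse is built to circumvent: Bob is \emph{strengthened} with genie side information $S$ so that his observation becomes $\vfyh$ (component law $\mathrm{Bin}(\fx,1-p_0)$), Eve is \emph{weakened} to the indicator vector $\vfzh$ (component law determined by $\Pr\{\fzh=0\mid\fx=i\}=q_0^i$), and only for these modified component channels does degradation hold (Lemma~\ref{lem:degraded}).

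A second, related problem is your final step, which bounds $I(\vfx;\vfy\mid K,\vfz)$ by $\sup_{P_{\vfx}}[I(\vfx;\vfy)-I(\vfx;\vfz)]$ and then asserts that "the converse of Theorem~\ref{thm:dna_wtc_capacity} identifies this supremum with $ML\left(1-\tfrac1\beta\right)(q_0-p_0)$." No such identification is available to you: in the paper, Theorem~\ref{thm:dna_wtc_capacity} is itself obtained as the $\rk=0$ special case of Theorem~\ref{thm:dna_wtc_key_capacity}, so the appeal is circular; and even granting an independent converse for Theorem~\ref{thm:dna_wtc_capacity}, a bound on achievable \emph{rates} does not supply the needed bound on the mutual-information supremum. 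That supremum bound is the technical heart of the converse, and the paper proves it only for the modified pair $(\vfyh,\vfzh)$, via the decomposition into $|\Sigma^L|$ component channels, the concavity/averaging step, the entropy estimate of Appendix~\ref{app:A}, and the constrained optimization of Appendix~\ref{app:B}. Your proposal engages with none of this. Tellingly, the paper's Appendix~\ref{app:C} shows that a degradation-style shortcut of the kind you propose does work when Eve's distribution is Bernoulli, $Q=(q_0,q_1)$, and the authors explicitly state they could not extend that argument to general $Q$ --- which is exactly the case your converse silently assumes away.
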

\begin{proof}
    See Section~\ref{sec:proof_main_result}.
\end{proof}

The inequality $ C_s(\rk) \leq C_M$ is intuitively clear because Alice cannot store a  message securely at a rate larger than the storage capacity of Bob's channel, which is $C_M$.  Assume that $\left(1-\frac{1}{\beta}\right)[q_0-p_0]^{+}+\rk \leq  C_M$. Then, the result of Theorem~\ref{thm:dna_wtc_key_capacity} can be interpreted as follows.  If the length of a shared key is $ML\rk$ bits, then Alice and Bob can use the initial part of the key of length 
% $ c\log M \sim c'L$  (for some constants $c, c'>0$)  
$o(\log M)=o(L)$ for generating PCR primers. This will result in a DNA storage wiretap channel (with $q_0 > p_0$) over which the rest of the key of length $ML\rk - o(\log M)$ can be used for the purpose of wiretap channel coding. Since $\log M/ML \to 0$ as $M \to \infty$, the rate of the key that is being used for wiretap channel coding is still $\rk$. Hence, by Theorem~\ref{thm:dna_wtc_key_capacity}, the maximum rate at which Alice can securely store a message is $(1-1/\beta)(q_0-p_0)+\rk$.

On the other hand, if Alice and Bob were to use the entire shared key of length $ML\rk$ for one-time padding of a message, then the rate at which a message can be securely stored is $\rk$. Therefore, by using a negligible amount of the total key for creating a DNA storage wiretap channel, the users can improve the secure storage rates by $(1-1/\beta)(q_0-p_0)$ over that of the na\"ive one-time pad scheme.

% In the case when Alice and Bob do not use a shared key for wiretap coding, we get the following result as a corollary of the above theorem.
% \begin{corollary}
% If $\rk = 0$ then $C_s =\left(1-\frac{1}{\beta}\right)(q_0-p_0)$.
% \end{corollary}

\section{Proofs of Theorem~\ref{thm:dna_wtc_capacity} and Theorem~\ref{thm:dna_wtc_key_capacity}}\label{sec:proof_main_result}
Since the result of Theorem~\ref{thm:dna_wtc_capacity} can be obtained by setting the shared key $K$ to a constant, we have $C_s = C_s(\rk)$ when $\rk =0$.  Hence, it suffices to prove Theorem~\ref{thm:dna_wtc_key_capacity}, as Theorem~\ref{thm:dna_wtc_capacity} then follows immediately. In the rest of this section, we will focus on proving Theorem~\ref{thm:dna_wtc_key_capacity}.

\subsection{Achievability part}\label{sec:achievability}
Alice encodes a message by using distinct indices for each of the $M$ sequences. The initial segment of length $\log M$ of a sequence contains the index, and the rest of the sequence is used to encode the message. However, since the index is only used to order the molecules, the rate is scaled by a factor of $\left(1-\frac{1}{\beta}\right)$. The technique of indexing converts a noise-free shuffling-sampling channel with distribution $(\pi_0,\pi_1,\ldots)$ into a block-erasure channel (acting on a block of length $L-\log M$) with erasure probability $\pi_0$; see Appendix~\ref{app:block-erasure} for a detailed description of the block-erasure channel and block-erasure wiretap channel. It is important to note that a block-erasure channel is not memoryless. However, the channel is close to being memoryless: we can view each block as a symbol, and the channel acts independently on each of the blocks. If the size of the symbol alphabet were fixed (as $M \to \infty$), then we could directly apply standard results for discrete memoryless channels. However, the symbol alphabet $|\Sigma|^{L-\log M}$ grows with $M$. To handle this, we apply capacity results for ``information-stable'' channels from \cite{Dobrushin,verdu_han_capacity,vembu95} --- see Theorems~\ref{thm:capacity_stable} and~\ref{thm:capacity_bewtc} in Appendix~\ref{app:block-erasure}. In proving Theorem~\ref{thm:capacity_bewtc}, we use the fact that the blocklength of each sequence grows logarithmically in $M$ to show the strong secrecy condition \eqref{eq:eve_secrecy}.

As a result of indexing,  both Bob and Eve's channels are equivalent to block-erasure channels with respective erasure probabilities $\epsilon_{\text{m}}=p_0$ and $\epsilon_{\text{w}}=q_0$, yielding a block-erasure wiretap channel. 
% \begin{theorem}
% The secrecy capacity of a block-erasure wiretap channel is 
% \emph{
% \begin{align*}
%     \min\left\{R'_K+[\epsilon_{\text{w}}-\epsilon_{\text{m}}]^{+}, 1-\epsilon_{\text{m}}\right\},
% \end{align*}
% }
% where $R'_K$ is the rate of the shared key between Alice and Bob.
% \end{theorem}
% \begin{proof}
% See Appendix~\ref{app:block-erasure}.
% \end{proof}
Since a shared key of rate $\rk$ (which is $H(K)$ normalized by $ML$) is being used to encode into a block of length $M(L-\log M)$, the rate of the shared key used for the block-erasure channel is $R'_K:= \frac{1}{1-1/\beta}\rk$. Hence, by applying Theorem~\ref{thm:capacity_bewtc} with $m=M$, $l\sim (\beta-1)\log M$, and $R'_K:= \frac{1}{1-1/\beta}\rk$, we obtain
\begin{align*}
   C_s & \geq\left(1-\frac{1}{\beta}\right)\min\left\{R'_K+[\epsilon_{\text{w}}-\epsilon_{\text{m}}]^{+}, 1-\epsilon_{\text{m}}\right\}\\
   &= \left(1-\frac{1}{\beta}\right)\min\left\{\frac{1}{1-1/\beta}\rk +[q_0-p_0]^{+}, 1-p_0\right\}\\
   &= \min\left\{\left(1-\frac{1}{\beta}\right) [q_0-p_0]^{+} + \rk, \left(1-\frac{1}{\beta}\right)(1-p_0)\right\}.
\end{align*}

\subsection{Converse part}\label{sec:converse}
In this section, we prove the converse result that
\begin{align}\label{ineq:conv_proof}
    C_s \leq \min \left\lbrace\left(1-\frac{1}{\beta}\right)[q_0-p_0]^{+}+\rk,\  C_M\right\rbrace, 
\end{align}
where $C_M =\left(1-\frac{1}{\beta}\right)(1-p_0)$. Let us first argue that it is enough to prove \eqref{ineq:conv_proof} in the case of $q_0\geq p_0$.
Suppose $q_0 < p_0$. Consider a modified scenario  where Eve observes a processed version of $Z^{N_{\text{w}}L}$ instead. Here, ``processed version of $Z^{N_{\text{w}}L}$"  means that it is a random variable obtained by applying a stochastic function (independent of everything else) to $Z^{N_{\text{w}}L}$.  In particular, Eve observes $\tilde{Z}^{N'_{\text{w}}L}$ that is obtained by passing $Z^{N_{\text{w}}L}$ through a sampling channel with the distribution $(a_0, 1-a_0, 0, \ldots)$, where $a_0$ satisfies $\sum_{n=0}^{\infty}q_n a_0^n =p_0$.
We can view $\tilde{Z}^{N'_{\text{w}}L}$ as the output of a shuffling-sampling channel with distribution $(p_0, q'_1, \ldots)$ for some $q'_i$, $i \geq 1$. Therefore, we have a new DNA wiretap channel where Eve observes $\tilde{Z}^{N'_{\text{w}}L}$ while Bob observations remain the same. Since weakening Eve can only increase the secure storage capacity, $C_s$ is upper bounded by the secure storage capacity $\tilde{C}_s$ of the new DNA wiretap channel. Moreover, observe that for this new wiretap channel, the probability that Bob's channel does not sample a sequence is the same as Eve's channel. So, if we showed \eqref{ineq:conv_proof} in the case of  $q_0\geq p_0$, then it follows that
\begin{align}
    C_s \leq \tilde{C}_s \leq \left\lbrace \rk,\  \left(1-\frac{1}{\beta}\right)(1-p_0)\right\rbrace.
\end{align} 

Now consider the case of $q_0\geq p_0$. We prove \eqref{ineq:conv_proof} by considering a modified scenario. In this new scenario, Bob has access to a (genie-aided) side information $S$ in addition to $Y^{N_{\text{m}}L}$. As the side information can only increase the ability of Bob to recover the message, the capacity of this scenario is at least $C_s$. For Eve, we weaken her observations\footnote{If we provide Eve additional information, then the capacity will reduce. So, the idea of genie-aided Bob and genie-aided Eve will not work, in general, as the individual effects on capacity are conflicting with each other.} by providing her with a processed version of $Z^{N_{\text{w}}L}$. It is clear that weakening Eve can only make the secrecy capacity larger, so that 
% Any achievable message rate in this case can also be achievable when Eve has $Z^{N_{\text{w}}L}$. So, the capacity can only be larger than the original model. Let $\hat{C}_s$ denote secure storage capacity of the new scenario. As a result of the above observations, we have
\begin{align}
    C_s \leq \hat{C}_s,
    \label{ineq:Cs_le_hatCs}
\end{align} 
where $\hat{C}_s$ is the secure storage capacity of the model with a genie-aided Bob and a weaker Eve.

The side information $S$ we give to Bob is the same as that considered in the proof of \cite[Theorem~1]{shomorony21}:  If $Y_i^L$ and $Y_j^L$, $i\neq j$ are two identical $L$-length molecules, then $S$ distinguishes whether they were sampled from the same input molecule $X_k^L$ or from two identical input molecules $X^L_{k_1}$ and $X^L_{k_2}$, $k_1\neq k_2$. Using $(S,Y^{N_{\text{m}}L})$, Bob can compute the multiset $\hat{Y}^{M_{\text{m}}L} \subseteq \{X_1^L, \ldots,X_M^L\}$ that contains all the molecules of $X^{ML}$ that were sampled at least once. Here the random variable $M_{\text{m}}$ denotes the cardinality of the multiset. Let $\vfyh$ denote the frequency vector corresponding to $\hat{Y}^{M_{\text{m}}L}$ over $\Sigma^L$. Note  that the distribution of $(S, Y^{N_{\text{m}}L})$ depends on  $\vfx$ only through $\vfyh$, which implies the Markov chain
\begin{align}\label{mc:new_model:bob}
    (W, K) -\vfx-\vfyh-(S, Y^{N_{\text{m}}L}).
\end{align}

Instead of $Z^{N_{\text{w}}L}$, Eve has only access to $\hat{Z}^{M_{\text{w}}L}$, which is the set of all distinct $L$-length molecules in $\Sigma^L$ that appear in $Z^{N_{\text{w}}L}$. Let $\vfzh$ denote the frequency vector corresponding to $\hat{Z}^{M_{\text{w}}L}$ over $\Sigma^L$. The entries of the frequency vector are $\fzh(a_i)= \mathbbm{1}\{\fz(a_i)>0\}$ for $a_i \in \Sigma^L$, which indicates whether an $L$-length molecule appears in the multiset $Z^{N_{\text{w}}L}$ or not. By \eqref{mc:original_model}, we have %$W-\vfx-\vfz-\vfzh$.
\begin{align*}%\label{mc:new_model:eve}
    (W, K)-\vfx-\vfz-\vfzh.
\end{align*}
While there are other choices for Eve that provide a reasonable estimate for $X^{ML}$ through $Z^{N_{\text{w}}L}$ (maximum likelihood (ML) estimate of $\vfx$ based on $\vfz$ is one such choice), we choose $\hat{Z}^{M_{\text{w}}L}$ for the purpose of simpler analysis.

Let us derive an upper bound on the secure storage capacity $\hat{C}_s$ of the new scenario, where Bob has $(S, Y^{N_{\text{m}}L}, K)$ and Eve has  $\hat{Z}^{M_{\text{w}}L}$. Suppose that $R$ is an achievable secure message rate for the new scenario, i.e., there exists a sequence of pair of encoding and decoding functions $\{(\phi,\psi)\}_{M=1}^{\infty}$ that satisfy Bob's recoverability condition, $\mathbb{P}\{\psi(Y^{N_{\text{m}}L}, S, K) \neq W\} \to 0,$  and Eve's secrecy (strong) condition, $I(W ; \hat{Z}^{M_{\text{w}}L}) \to 0 $ as $M \to \infty$.  Then $R$ can be upper bounded as follows.
\begin{align}
         MLR\notag& = \log |\mathcal{W}| = H(W) =H(W|K)\notag\\
        & \stackrel{(a)}{\leq} H(W|K) - H(W \mid Y^{N_{\text{m}}L},S, K) + MLR\delta_M+1\notag\\
        & = I(W; Y^{N_{\text{m}}L},S|K) + MLR\delta_M+1\notag\\
        & = I(W,K; Y^{N_{\text{m}}L},S) + MLR\delta_M+1\label{ineq:conv_dna_channel_1}\\
        & = I(W; Y^{N_{\text{m}}L},S)+ I(W; K| Y^{N_{\text{m}}L},S) + MLR\delta_M+1\notag\\
        & \leq I(W; Y^{N_{\text{m}}L},S)+ H(K) + MLR\delta_M+1\notag\\
        & = I(W; Y^{N_{\text{m}}L},S)+ ML\rk + MLR\delta_M+1\notag\\
        & \stackrel{(b)}{\leq} I(W; Y^{N_{\text{m}}L},S) - I(W; 
        \hat{Z}^{N_{\text{w}}L}) + ML\rk + \delta'_M + MLR\delta_M+1\notag\\
        & \stackrel{(c)}{\leq} I(W; \vfyh) - I(W; \vfzh) + ML\rk + \delta'_M + MLR\delta_M+1\notag
\end{align}
where $(a)$ is because of the inequality $H(W \mid Y^{N_{\text{m}}L},S, K) \leq 1+MLR \delta_M$ for $\delta_M \to 0$, which is a consequence of Bob's recoverability condition \eqref{eq:bob_recoverability} and Fano's inequality,  $(b)$  follows from Eve's secrecy condition \eqref{eq:eve_secrecy}  where $\delta'_M \to 0$, and $(c)$ holds because of the data processing inequality  $I(W; Y^{N_{\text{m}}L}, S)\leq I(W; \vfyh)$  for the Markov chain \eqref{mc:new_model:bob}, and $I(W; \hat{Z}^{N_{\text{w}}L})= I(W;\vfzh)$. We can rewrite the above upper bound on $R$ as
\begin{align}\label{eq:mutual_difference}
   (1-\delta_M)R - \delta''_M - \rk &  \leq \frac{1}{ML} [I(W; \vfyh) - I(W; \vfzh)] 
\end{align}
where $\delta''_M=\frac{\delta'_M+1}{ML}\to 0$, $\delta_M\to 0$ and $W-\vfx-(\vfyh, \vfzh)$.\\

\subsubsection{Degradation of the frequency vector channels} The term $I(W; \vfyh) - I(W; \vfzh)$ in \eqref{eq:mutual_difference} depends on $P_{\vfyh,\vfzh|\vfx}$ only through the marginal distributions $P_{\vfyh|\vfx}$ and $P_{\vfzh|\vfx}$. Hence, with no loss in generality, we can work with a new coupled distribution $\tilde{P}_{\vfyh,\vfzh|\vfx}$ that has the same marginals as that of $P_{\vfyh,\vfzh|\vfx}$. For that, first note that the conditional joint distribution $P_{\vfyh,\vfzh|\vfx}$ is $P_{\vfyh|\vfx}P_{\vfzh|\vfx}$ because given $\vfx$, Bob's side information $S$  depends only on Bob's noise-free shuffling-sampling channel which is independent of Eve's channel.  Furthermore, we can decompose the channel between $\vfx$  and  $\vfyh ( \text{or }\vfzh)$ into  $|\Sigma^L|$ i.i.d. channels, one for each of the components of the frequency vectors, because the sampling process is independent across all the DNA molecules (see Fig.~\ref{fig:channel_decomposition}). Since $X^{ML}$ contains $M$ molecules, the input frequency vector is constrained by $\sum\limits_{i=1}^{|\Sigma^L|}\fx(a_i)=M$. Let $P_{\fyh\fzh|\fx}=P_{\fyh \mid \fx}P_{\fzh \mid \fx}$ denote the channel transition probability between the components $\fx(a_i)$ and $(\fyh(a_i), \fzh(a_i))$, for $a_i \in \Sigma^L$. Since $\fyh$ is a sum of $\fx$ number of independent Ber($p_0$) random variables and $\fzh=\mathbbm{1}\{\fz>0\}$, the channel transition probabilities are given by 
\begin{align*}
    P_{\fyh \mid \fx}(j \mid i)= \begin{cases}
    \binom{i}{j}p_0^{i-j}(1-p_0)^j, & \text{if $j \leq i$ and }\\
            0, & \text{otherwise}
    \end{cases}
\end{align*}
and 
\begin{align*}
    P_{\fzh \mid \fx}(j \mid i)= \begin{cases}
    q_0^i, & \text{if $j=0$ and}\\
            1-q_0^i, & \text{if $j=1$}
    \end{cases}
\end{align*}
where the alphabets for $\fx$, $\fyh$ and $\fzh$  are $\{0,1,\ldots,M\}$, $\{0,1,\ldots,M\}$ and $\{0,1\}$, respectively.

\begin{figure}[h]
\centering
\resizebox{0.9\width}{!}{\tikzstyle{block}=[rectangle, draw, thick, minimum width=2em, minimum height=3em]

\begin{tikzpicture}[node distance=3cm,auto,>=latex']
    \node (fx1) {$\fx(a_1)$};
    %\node (fx2) [below of=fx1, node distance= 1.5cm] {$\fx(a_2)$};
    \node (ipdots) [below of=fx1, node distance= 1cm] {$\vdots$};
    \node (fxsigmaL) [below of=ipdots, node distance= 1.25cm] {$\fx(a_{|\Sigma^L|})$};
   
   \node [block, align=center] (ch1) [right of= fx1, node distance=3cm] {$P_{\fyh\fzh|\fx}$};
  % \node [block, align=center] (ch2) [right of= fx2, node distance=3cm] {$P_{\fyh\fzh|\fx}$};
   \node (chdots) [right of= ipdots, node distance=3cm] {$\vdots$};
   \node [block, align=center] (chsigmaL) [right of= fxsigmaL, node distance=3cm] {$P_{\fyh\fzh|\fx}$};

    \node (fy1) [right of= ch1, node distance= 3cm, yshift=0.3cm]{$\fyh(a_1)$};
    \node (fz1) [right of= ch1, node distance= 3cm,yshift=-0.3cm]{$\fzh(a_1)$};
   % \node (fy2) [right of= ch2, node distance= 3cm, yshift=0.3cm]{$\fyh(a_2)$};
   % \node (fz2) [right of= ch2, node distance= 3cm,yshift=-0.3cm]{$\fzh(a_2)$};
   \node (opdots) [right of= chdots, node distance=3cm] {$\vdots$};
    \node (fys) [right of= chsigmaL, node distance= 3cm, yshift=0.3cm]{$\fyh(a_{|\Sigma^L|})$};
    \node (fzs) [right of= chsigmaL, node distance= 3cm,yshift=-0.3cm]{$\fzh(a_{|\Sigma^L|})$};
   
    \draw[->, thick] (fx1.east) --  (ch1.west);
   % \draw[->, thick] (fx2.east) --  (ch2.west);
    \draw[->, thick] (fxsigmaL.east) --  (chsigmaL.west);
    \draw[->, thick] (fy1-|ch1.east) --  (fy1.west);
    \draw[->, thick] (fz1-|ch1.east) --  (fz1.west);
   % \draw[->, thick] (fy2-|ch2.east) --  (fy2.west);
   % \draw[->, thick] (fz2-|ch2.east) --  (fz2.west);
    \draw[->, thick] (fys-|chsigmaL.east) --  (fys.west);
    \draw[->, thick] (fzs-|chsigmaL.east) --  (fzs.west);
\end{tikzpicture}}
\caption{The channel between $\vfx$ and $(\vfyh,\vfzh)$ is decomposed into $|\Sigma^L|$ i.i.d. channels between the components with transition probability $P_{\fyh\fzh|\fx}$. The components of the input frequency vector $\vfx$ are constrained by $\sum\limits_{i=1}^{|\Sigma^L|}\fx(a_i)=M$.}
\label{fig:channel_decomposition}
 \end{figure}

\begin{lemma}\label{lem:degraded}
If $q_0\geq p_0$, then the channel $P_{\fzh \mid \fx}$ is a degraded version of $P_{\fyh \mid \fx}$, i.e., there exists a channel $Q$ such that
\begin{align*}
    P_{\fzh \mid \fx}(j|i)=\sum \limits_{k=0}^M Q(j|k)P_{\fyh \mid \fx}(k|i)
\end{align*}
for all $i \in \{0,1,\ldots,M\}$ and $j \in \{0,1\}$.
\end{lemma}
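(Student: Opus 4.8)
The plan is to exhibit the degrading channel $Q$ explicitly rather than argue abstractly. Since $P_{\fzh \mid \fx}$ produces only outputs $j \in \{0,1\}$, the channel $Q$ I seek is fully determined by the single function $Q(0 \mid k)$, $k \in \{0,\ldots,M\}$, together with $Q(1 \mid k) = 1 - Q(0 \mid k)$. Moreover, once the degradation identity is verified for $j=0$, the $j=1$ case is automatic: both $Q(\cdot \mid k)$ and $P_{\fyh \mid \fx}(\cdot \mid i)$ are probability distributions, so summing the $j=0$ identity against $1$ and subtracting yields the $j=1$ identity. Thus the whole lemma reduces to finding numbers $Q(0 \mid k) \in [0,1]$ with
\begin{align*}
    \sum_{k=0}^{i} Q(0 \mid k)\binom{i}{k}p_0^{i-k}(1-p_0)^k = q_0^i \qquad \text{for all } i \in \{0,1,\ldots,M\}.
\end{align*}

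The key step is to read off the correct $Q$ from a probabilistic coupling. Recall that $\fyh$ given $\fx=i$ counts how many of the $i$ identical input molecules survive Bob's erasure channel (per-molecule survival probability $1-p_0$), while $\fzh = \mathbbm{1}\{\fz>0\}$ indicates whether at least one of them survives Eve's channel (per-molecule survival probability $1-q_0$). Because $q_0 \geq p_0$ forces $1-q_0 \leq 1-p_0$, I can couple the two channels so that every molecule that survives Eve also survives Bob: conditioned on surviving Bob, a molecule independently survives Eve with probability $\frac{1-q_0}{1-p_0}$. Under this coupling, the event ``no molecule survives Eve'' depends on the input only through the number $k$ of Bob-survivors, with conditional probability $\bigl(1-\frac{1-q_0}{1-p_0}\bigr)^k$. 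This makes the ansatz inevitable:
\begin{align*}
    Q(0 \mid k) := r^k, \qquad r := \frac{q_0-p_0}{1-p_0}, \qquad Q(1 \mid k) := 1 - r^k.
\end{align*}

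Verification is then a one-line binomial computation. For each $i$,
\begin{align*}
    \sum_{k=0}^{i} r^k \binom{i}{k}p_0^{i-k}(1-p_0)^k = \bigl(p_0 + r(1-p_0)\bigr)^i = q_0^i,
\end{align*}
where the last equality uses $p_0 + r(1-p_0) = q_0$. It remains only to confirm that $Q$ is a legitimate channel, i.e.\ that $r^k \in [0,1]$; this is where the hypothesis enters, since $r \geq 0 \iff q_0 \geq p_0$ and $r \leq 1 \iff q_0 \leq 1$, both of which hold. Hence $Q(0 \mid k), Q(1 \mid k) \in [0,1]$ for every $k$, and $Q$ degrades $P_{\fyh \mid \fx}$ into $P_{\fzh \mid \fx}$.

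I do not expect a genuine analytic obstacle here: once the channel is guessed correctly, the computation collapses via the binomial theorem, and the hypothesis $q_0 \geq p_0$ is used solely to guarantee $r \geq 0$ so that $Q$ is a valid stochastic matrix. The one place requiring care is the conceptual crux, namely identifying $Q(0 \mid k) = r^k$; I would therefore present the coupling picture (forcing Eve-survival to imply Bob-survival, which is exactly possible precisely when $q_0 \geq p_0$) so that the ansatz appears as a natural consequence of the stochastic ordering of the two erasure channels rather than a fortunate guess.
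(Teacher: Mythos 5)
Your proof is correct and is essentially the paper's own argument: the paper defines exactly the same channel, $Q(0\mid k)=\left(\frac{q_0-p_0}{1-p_0}\right)^k$, and leaves the verification (your one-line binomial computation) to the reader. The coupling interpretation you add is a nice motivation but does not change the substance of the proof.
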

\begin{proof}
One can easily verify that the channel $Q$ defined by 
\begin{align*} 
    Q(j \mid k)=\begin{cases}
   \left(\frac{q_0-p_0}{1-p_0}\right)^k, & \text{ if } j=0 \\
            1-\left(\frac{q_0-p_0}{1-p_0}\right)^k, & \text{ if } j=1
    \end{cases}
\end{align*}
satisfies the degradation relation.
%Since $1\geq q_0\geq p_0$, $0 \leq Q(i,j) \leq1$ for all pairs $(i,j)$. Moreover, $Q(i,0)+Q(i,1)=1$ for all $i$. Hence, the matrix $Q$ corresponds to a valid transition probability matrix of a channel.
\end{proof}

By using Lemma~\ref{lem:degraded}, we consider the coupled distribution $\tilde{P}_{\fyh\fzh|\fx}=P_{\fyh \mid \fx}Q$ for the component channels with the marginals $P_{\fzh \mid \fx}$ and $P_{\fyh \mid \fx}$. This type of coupling for all the component channels yield a joint distribution that satisfy the Markov chain $W-\vfx-\vfyh-\vfzh$.  So,  we can write  $I(W; \vfyh) - I(W; \vfzh)=I(W; \vfyh\mid \vfzh) \leq I(\vfx; \vfyh\mid \vfzh)$. \\

\subsubsection{An upper bound in terms of the component channel mutual information}\label{sec:upp_component}
For $\sum\limits_{i =1}^{|\Sigma^L|}\fx(a_i)=M$, carrying on from \eqref{eq:mutual_difference}, we have 
\begin{align}
   (1 -\delta_M)&R - \delta''_M - \rk   \notag\\
   &\leq  \frac{1}{ML} [I(W; \vfyh) - I(W; \vfzh)] \notag \\
   & \leq \frac{1}{ML} I(\vfx; \vfyh\mid \vfzh)  \notag \\
   &= \frac{1}{ML}[H(\vfyh \mid \vfzh)-H(\vfyh \mid \vfx, \vfzh)]  \notag \\
   &= \frac{1}{ML}\sum\limits_{i=1}^{|\Sigma^L|}\left[ H(\fyh(a_i) \mid \vfzh, \fyh(a_1), \ldots \fyh(a_{i-1}))
   % \right.\nonumber\\
   % & \mkern 30mu \left.
   -H(\fyh(a_i) \mid \vfx,\vfzh, \fyh(a_1), \ldots \fyh(a_{i-1}))\right]  \notag \\
   & \stackrel{(a)}{\leq} \frac{1}{ML} \sum\limits_{i =1}^{|\Sigma^L|}\left[ H(\fyh(a_i) \mid \fzh(a_i))
   % \right.\nonumber\\
   % & \mkern 150mu\left.
   -H(\fyh(a_i)\mid \fx(a_i),\fzh(a_i))\right]\\
   &=\frac{1}{ML}\sum\limits_{i =1}^{|\Sigma^L|} I(\fx(a_i); \fyh(a_i)\mid \fzh(a_i)) \notag \\
   &=\frac{|\Sigma^L|}{ML} \sum\limits_{i =1}^{|\Sigma^L|} \frac{1}{|\Sigma^L|} I(\fx(a_i); \fyh(a_i) \mid \fzh(a_i)) \label{eq:convex_MI}
\end{align}
where $(a)$ follows from the fact that conditioning reduces entropy and that $\fyh(a_i)$ is conditionally independent of $\vfx,\vfzh, \fyh(a_1), \ldots \fyh(a_{i-1})$ given $\fx(a_i)$ and $\fzh(a_i)$. The summation in \eqref{eq:convex_MI} is a convex combination of conditional mutual information terms $I(\fx(a_i); \fyh(a_i) \mid \fzh(a_i))$ evaluated with respect to an input distribution $(m_0(i), \ldots, m_M(i)) \in \Delta^M$.  Let $\mu^F$ be the distribution induced by the encoder, $\phi$, on the set, $F$, of all input frequency vectors $\vf_X$ such that $\sum\limits_{i=1}^{|\Sigma^L|}\fx(a_i)=M$. Let $m_j(i)$ denote the probability (under $\mu^F$) that the component $\fx(a_i)$ equals $j$:
$$
m_j(i) = \sum_{\vfx \in F:\, \fx(a_i) = j} \mu^F(\vfx).
$$
Then, for any fixed $i$, we have 
$$
\sum_{j=0}^M j m_j(i) = \sum_{j=0}^M \sum_{\vfx: \fx(a_i) = j}j \mu^F(\vfx) = \sum_{\vfx \in F} \fx(a_i) \mu^F(\vfx),
$$
from which we obtain that $\sum\limits_{i=1}^{|\Sigma^L|}\sum\limits_{j=0}^{M}jm_j(i) =  \sum\limits_{\vfx \in F} \mu^F(\vfx) \sum\limits_{i=1}^{|\Sigma^L|} \fx(a_i) = M$. \\

We use the fact that for a Markov chain $X - Y - Z$, $I(X;Y|Z)$ is a concave function in the distribution of $X$ \cite[Lemma~1]{leung77} to conclude that
\begin{align}
   (1-\delta_M)R - \delta''_M  - \rk & \leq \frac{|\Sigma^L|}{ML} \sum\limits_{i =1}^{|\Sigma^L|} \frac{1}{|\Sigma^L|} I(\fx(a_i); \fyh(a_i) \mid \fzh(a_i)) \notag \\
   & \leq \frac{|\Sigma^L|}{ML} I(\fx; \fyh \mid \fzh) \label{ineq:IfX} \\
%  & = \frac{|\Sigma^L|}{ML} f(m_0, \ldots, m_M) \notag \\
   &\leq \frac{|\Sigma^L|}{ML} \sup f(m_0, \ldots, m_M) \label{eq:supf}.
\end{align}
In \eqref{ineq:IfX}, $I(\fx; \fyh \mid \fzh)$ is evaluated at the input distribution $(m_0, \ldots, m_M)=\frac{1}{|\Sigma^L|}\Big(\sum \limits_{i=1}^{|\Sigma^L|}m_0(i), \ldots, \linebreak \sum \limits_{i=1}^{|\Sigma^L|}m_M(i)\Big)$, which satisfies the constraint $\sum \limits_{j=1}^{M}jm_j=\frac{M}{|\Sigma^L|}$, and the supremum of $f(m_0,\ldots,m_M) := I(\fx; \fyh \mid \fzh)$ in \eqref{eq:supf} is over such input distributions. By taking limits on both sides, we get the bound 
\begin{align}
    \hat{C}_s - \rk \leq \liminf_{M \to \infty}  \frac{|\Sigma^L|}{ML} \sup f(m_0,  \ldots, m_M)
    \label{Cs_upbnd}
\end{align}
where, again, the supremum is over $(m_0, \ldots, m_M) \in \Delta^M$ subject to $\sum \limits_{j=1}^{M}jm_j=\frac{M}{|\Sigma^L|}$.\\

\subsubsection{An upper bound on $f(m_0, \ldots, m_M)$ using constrained optimization}
To derive an upper bound on $\sup f(m_0,  \ldots, m_M)$, note that $f(m_0, \ldots, m_M) :=  I(\fx; \fyh \mid \fzh) = I(\fx; \fyh)- I(\fx; \fzh)$ evaluates to 
\begin{align}
H(l_0, l_1,\ldots, l_M) &- \sum \limits_{j=1}^M m_j H\big(\text{Bin}(j,p_0)\big)  -  h\Big(\sum\limits_{j=0}^M  m_j q_0^j\Big)+ \sum\limits_{j=1}^M m_j h(q_0^j),
\label{eq:fexpr}
\end{align}
% \begin{align}
%     &f(m_0, m_1, \ldots, m_M) =  I(\fx; \fyh \mid \fzh) = I(\fx; \fyh)- I(\fx; \fzh)\nonumber\\
%     &= H(l_0, l_1,\ldots, l_M) -\sum \limits_{i=1}^Mm_iH\big(Bin(i,p_0)\big)-h\left(\sum\limits_{i=0}^Mm_iq_0^i\right)\nonumber\\ 
%     &\mkern 250mu+ \sum\limits_{i=1}^Mm_ih(q_0^i),
% \end{align}
 where $l_k=\sum \limits_{j\geq k}m_j \binom{j}{k}(1-p_0)^kp_0^{j-k}$. 
In Appendix~\ref{app:A}, we show that the first term above is 
\begin{equation}
H(l_0,l_1,\ldots,l_M) = h\Big(\sum \limits_{j=1}^M m_j(1-p_0^j)\Big) + o\left(\frac{ML}{|\Sigma^L|}\right)
\label{eq:term1}
\end{equation}
We drop the second term in \eqref{eq:fexpr}, and re-write the third term as $h\Big(\sum\limits_{i=0}^Mm_iq_0^i\Big)= h\Big(\sum \limits_{j=1}^M m_j(1-q_0^j)\Big)$.
The last term in \eqref{eq:fexpr} can be bounded as $\sum\limits_{j=1}^M m_j h(q_0^j) \leq \sum\limits_{j=1}^M m_j \leq \sum \limits_{j=1}^M jm_j= \frac{M}{|\Sigma^L|}$. Thus, for any $(m_0,m_1,\ldots,m_M) \in \Delta^M$ such that $\sum\limits_{j=1}^M j m_j = \frac{M}{|\Sigma^L|}$, where $M$ is sufficiently large enough, we have 
\begin{align}
    f(m_0, \ldots, m_M) &\leq h\Big(\sum \limits_{j=1}^M m_j(1-p_0^j)\Big)  - h\Big(\sum \limits_{j=1}^M m_j(1-q_0^j)\Big)+ o\left(\frac{ML}{|\Sigma^L|}\right) \nonumber\\
    & \leq  h\Big(\frac{M}{|\Sigma^L|}(1-p_0)\Big)-h\left(\frac{M}{|\Sigma^L|}(1-q_0)\right)+ o\left(\frac{ML}{|\Sigma^L|}\right) \label{f_upbnd}
\end{align}
the last inequality being proved in Appendix~\ref{app:B}.\\

\subsubsection{Evaluation of the upper bound on $\hat{C}_s$}
From \eqref{Cs_upbnd} and \eqref{f_upbnd}, we have
 \begin{align}
    \hat{C}_s  - \rk % &\leq \liminf_{M \to \infty}  \frac{|\Sigma^L|}{ML} \sup f(m_0,  \ldots, m_M) \nonumber\\
    &\leq \liminf_{M \to \infty}  \frac{|\Sigma^L|}{ML}\left[ h\left(\frac{M}{|\Sigma^L|}(1-p_0)\right)-h\left(\frac{M}{|\Sigma^L|}(1-q_0)\right)
    % \right. \nonumber \\
    % & \mkern 300mu \left. 
    + o\left(\frac{ML}{|\Sigma^L|}\right) \right]\nonumber\\
    &= \liminf_{M \to \infty}  \frac{|\Sigma^L|}{ML} \left[h\left(\frac{M}{|\Sigma^L|}(1-p_0)\right)-h\left(\frac{M}{|\Sigma^L|}(1-q_0)\right)\right] \label{eq:binary_freq_term}
\end{align}
By setting $x := \frac{M}{|\Sigma^L|}$  and noting that $\frac{-\log x}{L}=\frac{L - \log M}{L} \to  1-\frac{1}{\beta}$ as $M \to \infty$, we obtain
\begin{align*}
    \hat{C}_s - \rk & \le \frac{\beta-1}{\beta}\liminf_{x \to 0} \frac{h((1-p_0)x)-h((1-q_0)x)}{-x\log x} = \left(1-\frac{1}{\beta}\right)(q_0-p_0),
\end{align*}
the evaluation of the limit requiring the use of L'H{\^o}pital's rule (twice). Combining this with \eqref{ineq:Cs_le_hatCs}, we have $C_s \le \left(1-\frac{1}{\beta}\right)(q_0-p_0)+\rk$.

To obtain the bound  $C_s \le \left(1-\frac{1}{\beta}\right)(1-p_0)$, we consider the inequality \eqref{ineq:conv_dna_channel_1}, which can be upper-bounded in a different way as
\begin{align*}
       MLR = I(W,K; Y^{N_{\text{m}}L},S) + MLR\delta_M+1
       &\leq I(W,K; \vfyh)+ MLR\delta_M+1\\
       & \leq H(\vfyh)+ MLR\delta_M+1.
\end{align*}
This inequality is nothing but  \cite[Eq.~(10)]{shomorony21}. It was shown in \cite{shomorony21} that\footnote{If we set $\vfzh$ to be a constant (which corresponds to setting $q_0=1$) in the analysis of the term $I(\vfx;\vfyh|\vfzh)$ in Sec.~\ref{sec:upp_component} and the subsequent sections, we will get the bound $\left(1-\frac{1}{\beta}\right)(1-p_0)$, which is identical to the one that we get from \cite{shomorony21}.}
\begin{align*}
    R \leq \left(1-\frac{1}{\beta}\right)(1-p_0),
\end{align*}
which yields
\begin{align*}
    C_s \leq \hat{C}_s  \leq \left(1-\frac{1}{\beta}\right)(1-p_0),
\end{align*}
completing the proof of the converse part of Theorem~\ref{thm:dna_wtc_capacity}.

\begin{remark} The converse proof relies on the idea of showing first that the encoding schemes which map messages to binary frequency vectors, i.e., there are no repeated molecules in $X^{ML}$, are optimal. The expression on the r.h.s. of  \eqref{eq:binary_freq_term} corresponds to this case. This can be seen by using the fact that the component channels $P_{\fyh \mid \fx}$ and $P_{\fzh \mid \fx}$  are just $Z$-channels with parameters $p_0$ and $q_0$, respectively. Note that index-based schemes are a special case of the encoding schemes involving binary frequency vectors. 
\end{remark}

\section{Discussion}\label{sec:discussion}
The DNA storage wiretap channel model considered in this paper is motivated by the fact that differential knowledge of primers creates a statistical advantage for the authorized party, Bob, over the unauthorized party, Eve. The take-away message from our work is that by exploiting this advantage using wiretap channel coding schemes, we can obtain information-theoretically secure DNA-based storage within our model. 

The crucial part of our characterization of the secure storage capacity of our model is the converse proof, which is based on analytically solving an optimization problem. An alternative proof, which is along the lines of that given for the converse part of Theorem~1 in \cite{shomorony21}, is provided in Appendix~\ref{app:C} for the special case when Eve's sampling distribution is Bernoulli $Q = (q_0,q_1)$. We were unable to find a way of extending this argument directly to more general distributions $Q$.

In future work, we intend to consider the DNA wiretap channel model with noisy shuffling-sampling channels as its components. An immediate direction would be to study the effect of substitution errors on the secure storage capacity. We also note that addressing insertion and deletions errors in the context of a DNA storage wiretap channel would be a challenging task as not much is known about their effect on the storage capacity of the shuffling-sampling channel itself.

\section*{Acknowledgements}
The authors would like to thank Dr.~Vamsi Veeramachaneni, Johns U.K., and Puspabeethi Samanta for helpful discussions during the course of this work.

\appendices
\section{Oligo Erasure Probabilities}
\label{app:erasure_probs}

In this appendix, we make some observations concerning the probability that a given oligo from a pool goes undetected after a sequencing run on a sample consisting of a composite pool of information-bearing oligos mixed with background DNA material. 

Consider a multiset $X^{ML} = \{X_1^L,\ldots,X_M^L\}$ of $M$ distinct oligos encoding some information of interest to us. After synthesis, amplification, and preparation of the sample for sequencing, assume that $X_i^{L}$ has an physical redundancy of $\kappa_i$, meaning that $\kappa_i$ identical copies of $X_i^L$ are present in the sample to be sequenced. We assume that these $\kappa_i$'s are i.i.d.\ random variables tightly concentrated about their mean $\overline{\kappa}$. Thus, $\overline{\kappa} M$ is, approximately, the total number of information-bearing DNA fragments in the sample. Let $N_B$ denote the total number of ``background'' DNA fragments added to the sample. Thus, the sample contains $N_{\textsc{tot}} := \overline{\kappa}M + N_B$ DNA fragments. The proportion of informative DNA in the sample is $\rho := \frac{\overline{\kappa}M}{N_{\textsc{tot}}}$.

This sample is fed into a sequencing platform that can generate $N_r$ reads. We will throughout assume that $N_r \ll N_{\textsc{tot}}$. 

For a fixed subset $S \subseteq [M]$, we are interested in the probability, $\theta_S$, that a sequencing run with $N_r$ reads fails to detect any of the oligos $X_i^L$, $i \in S$. This probability is well-estimated as follows: 
\begin{align}
\theta_S \approx \frac{\binom{\overline{\kappa}(M-|S|)+N_B}{N_r}}{\binom{N_{\textsc{tot}}}{N_r}} = \frac{\binom{N_{\textsc{tot}} - \overline{\kappa}|S|}{N_r}}{\binom{N_{\textsc{tot}}}{N_r}} =\frac{\frac{(N_{\textsc{tot}}-\overline{\kappa}|S|)!}{(N_{\textsc{tot}}-\overline{\kappa}|S|-N_r)!}}{\frac{N_{\textsc{tot}}!}{(N_{\textsc{tot}}-N_r)!}} &= \frac{\prod_{j=0}^{N_r-1}  (N_{\textsc{tot}}- \overline{\kappa}|S| - j)}{\prod_{j=0}^{N_r-1}  (N_{\textsc{tot}} - j)}  \notag \\
&= \prod_{j=0}^{N_r-1} \left(1-\frac{\overline{\kappa}|S|}{N_{\textsc{tot}}-j}\right).
\label{eq:theta_S}
\end{align}

%The probability that a particular oligo (out of the $M$ in the pool) goes undetected after a sequencing run is 
%\begin{align}
%q = \frac{\binom{k(M-1)+N_B}{N_r}}{\binom{N_{\textsc{tot}}}{N_r}} &= \frac{\binom{N_{\textsc{tot}} - k}{N_r}}{\binom{N_{\textsc{tot}}}{N_r}} \notag \\
%&=\frac{\frac{(N_{\textsc{tot}}-k)!}{(N_{\textsc{tot}}-k-N_r)!}}{\frac{N_{\textsc{tot}}!}{(N_{\textsc{tot}}-N_r)!}} \notag \\
%&= \frac{\prod_{j=0}^{N_r-1}  (N_{\textsc{tot}}- k - j)}{\prod_{j=0}^{N_r-1}  (N_{\textsc{tot}} - j)} \notag \\
%& = \prod_{j=0}^{N_r-1} \left(1-\frac{k}{N_{\textsc{tot}}-j}\right).
%\end{align}
%We will call this the \emph{oligo erasure probability}.
%
%Extending this reasoning, the probability that a fixed subset\footnote{We use the notation $[M] := \{1,2,\ldots,M\}$.}, $S \subseteq [M]$, of the oligos in the pool go undetected after a sequencing run is
%\begin{align}
%q_S = \frac{\binom{k(M-|S|)+N_B}{N_r}}{\binom{N_{\textsc{tot}}}{N_r}} 
%= \prod_{j=0}^{N_r-1} \left(1-\frac{k|S|}{N_{\textsc{tot}}-j}\right).
%\end{align}

\subsection*{Case 1: $\rho \ll 1$} We consider here $\rho$ to be $10^{-3}$ or less. This is the case assumed to be encountered by Eve. In this case, since $\overline{\kappa}M \ll N_{\textsc{tot}}$ and $N_r \ll N_{\textsc{tot}}$, we can use the approximation $1-\frac{\overline{\kappa}m}{N_{\textsc{tot}}-j} \approx \exp\bigl(-\frac{\overline{\kappa}m}{N_{\textsc{tot}}}\bigr)$ for $0 \le m \le M$ and $0 \le j \le N_r-1$. Hence, for any $S \subseteq [M]$
\begin{align*}
\theta_S \approx \exp\biggl(- \frac{\overline{\kappa}|S| N_r}{N_{\textsc{tot}}} \biggr) = {[\exp(-\rho N_r /M)]}^{|S|},
\end{align*}
It follows that the number of oligos, out of the $M$ in the multiset $X^{ML}$, that go undetected after a sequencing run is well-approximated as a Bin$(M,q_0)$ random variable, where $q_0 := \exp(-\rho N_r/M)$ is the \emph{oligo erasure probability}.  

We want to keep a high oligo erasure probability at Eve's end, so a good choice of $\rho$ would be $0.5 M/N_r$, which would yield $q_0 \approx \exp(-1/2) \approx 0.6065$. For example, if there are $M = 10,000$ informative oligos, and Eve is allowed $N_r = 500$ million reads, then the sample for sequencing should be prepared with $\rho = 10^{-5}$. Doing so would ensure that the (random) number of informative oligos that are missed by Eve is tightly concentrated about $q_0 M \approx 6000$.

\subsection*{Case 2: $\rho$ ``not too small''} Here, we consider $\rho$ to be $0.05$ or more. This is the case assumed to be seen by Bob, since Bob is expected to run several cycles of PCR on the sample to amplify the informative DNA alone. Now, we may no longer have $\overline{\kappa}M \ll N_{\textsc{tot}}$, but the assumption that $N_r \ll N_{\textsc{tot}}$ is still valid, so we use the approximation $1-\frac{\overline{\kappa}m}{N_{\textsc{tot}}-j} \approx 1-\frac{\overline{\kappa}m}{N_{\textsc{tot}}} = 1- \rho m/M$, for $0 \le j \le N_r-1$ and $0 \le m \le M$. Hence, for any $S \subseteq [M]$, we have via \eqref{eq:theta_S},
$$
\theta_S \approx (1-\rho|S|/M)^{N_r}.
$$

In particular, the probability that any given single oligo (out of the $M$ in the multiset $X^{ML}$) is undetected after a sequencing run is approximately $p_0 := (1-\rho/M)^{N_r}$, which we may take to be Bob's oligo erasure probability.

 For example, with $\rho=0.1$ and $M=10,000$, if Bob is allowed $N_r = 1$ million reads, then $p_0 \approx 4.5 \times 10^{-5}$. So, out of the pool of $M=10,000$ informative oligos, Bob can expect to miss about $p_0  M$ oligos, which is less than half an oligo.

\section{Proof of Equation~\eqref{eq:term1}}
\label{app:A}
We begin by breaking up $H(l_0,l_1,\ldots,l_M)$ using the grouping property of entropy:
\begin{align}
H (l_0,l_1,\ldots,l_M) =\ H\Big(l_0,\sum_{i=1}^M l_i\Big)+\Big(\sum_{i=1}^M l_i\Big) H(\tilde{l}_1, \ldots, \tilde{l}_M), \label{eq:H_grouping}
%   &\leq h\left(\sum_{j=1}^M m_j(1-p_0^j)\right)+ o\left(\frac{ML}{|\Sigma^L|}\right)
\end{align} 
where $\tilde{l}_i= \left.l_i\middle/\sum_{j=1}^M l_j\right.$, $1\leq i\leq M$. Note first that 
\begin{equation*}
    H\Big(l_0,\sum_{i=1}^M l_i\Big) \ = \ h\Big(\sum_{j=1}^M m_j(1-p_0^j)\Big),
    \label{eq:Hl0}
\end{equation*}
since 
\begin{align*}
    \sum_{i=1}^M l_i \ &= \ \sum_{i=1}^M \sum_{j\geq i}m_j \binom{j}{i}(1-p_0)^ip_0^{j-i} = \ \sum_{j=1}^M m_j \sum_{1\leq i\leq j}\binom{j}{i}(1-p_0)^ip_0^{j-i} = \ \sum_{j=1}^M m_j(1-p_0^j).
\end{align*}

It remains to show that the second term in \eqref{eq:H_grouping} is $o\left(\frac{ML}{|\Sigma^L|}\right)$. To this end, define $$g^{\#}(M):=\max  \Big(\sum_{i=1}^M l_i\Big) H(\tilde{l}_1, \ldots, \tilde{l}_M),$$
where the maximum is over $(m_0,\ldots,m_M)\in \Delta^M$ such that $\sum_{j=1}^{M} jm_j=\frac{M}{|\Sigma^L|}$. The aim is to show that $g^{\#}(M)=o\left(\frac{ML}{|\Sigma^L|}\right)$.

Recall that $l_i:=\sum_{j\geq i}m_j \binom{j}{i}(1-p_0)^ip_0^{j-i}$ for $1\leq i \leq M$. Consider the sum $\sum_{i=1}^Mil_i=\linebreak\sum_{i=1}^Mi\sum_{j\geq i}m_j \binom{j}{i}(1-p_0)^ip_0^{j-i} =\sum_{j=1}^Mm_j\sum_{1\leq i\leq j}i \binom{j}{i}(1-p_0)^ip_0^{j-i}=\sum_{j=1}^Mm_j j(1-p_0)=(1-p_0)\sum_{j=1}^M jm_j$. So the constraint $\sum_{i=1}^{M}im_i=\frac{M}{|\Sigma^L|}$ is equivalent to the constraint $\frac{1}{1-p_0}\sum_{i=1}^Mil_i=\frac{M}{|\Sigma^L|}$.
Since $1-p_0\leq 1$, we have $il_i\leq \frac{il_i}{1-p_0}$. Therefore, the constraint $\sum_{i=1}^{M}\frac{il_i}{1-p_0}=\frac{M}{|\Sigma^L|}$ implies that $\sum_{i=1}^{M}il_i\leq \frac{M}{|\Sigma^L|}$. With this constraint relaxation, we can upper bound $g^{\#}(M)$ as
\begin{align}
    g^{\#}(M) &\leq \max \Big(\sum_{j=1}^M a_j\Big) H(\tilde{a}_1, \ldots, \tilde{a}_M)
    \label{eq:g_upbnd}
\end{align}
where $\tilde{a}_i= \left.a_i\middle/\sum_{j=1}^M a_j\right.$ for $1\leq i\leq M$ and the maximum is over $(a_1,\ldots,a_M)\geq \boldsymbol{0}$ such that $\sum_{i=1}^{M}ia_i\leq \frac{M}{|\Sigma^L|}$. Note that
\begin{align*}
   \Big(\sum_{j=1}^M a_j\Big) & H(\tilde{a}_1, \ldots, \tilde{a}_M) = \ - \sum_{j=1}^M a_j \log \tilde{a}_j =\Big(\sum_{j=1}^M a_j\Big) \log\Big(\sum_{j=1}^M a_j\Big) - \sum_{j=1}^M a_j \log a_j. 
\end{align*}
 This is a concave function in $(a_1,\ldots,a_M)$, which can be seen by applying the log-sum inequality: for $\gamma \in [0,1]$, $(a_1,\ldots,a_M)\geq \boldsymbol{0}$ and $(b_1,\ldots,b_M)\geq \boldsymbol{0}$, we have 
\begin{align*}
    \gamma a_j \log \frac{\gamma a_j}{\gamma \sum_{j=1}^M a_j} + (1-\gamma) b_j \log \frac{(1-\gamma) b_j}{(1-\gamma) \sum_{j=1}^M b_j} \geq (\gamma a_j+(1-\gamma) b_j) \log \frac{\gamma a_j+(1-\gamma) b_j}{\gamma \sum_{j=1}^M a_j+(1-\gamma) \sum_{j=1}^M b_j},
\end{align*}
and the concavity follows after multiplying both sides by $-1$ and taking summation. 

We can assume that the maximizer $(a^*_1,\ldots,a^*_M)$ is such that $a_j^* > 0$ for all $j$. If not, then there exist two distinct indices $i_1<i_2$ and $j$ so that $a_j^*=0$ and $a_{i_1}^*\neq 0 \neq a_{i_2}^*$ (since the maximum in \eqref{eq:g_upbnd} is positive, there must be at least two non-zero entries in the maximizer). Set $b_{i_1}=a_{i_1}^*-\epsilon\left(\frac{j}{2i_1}\right)$, $b_{i_2}=a_{i_2}^*-\epsilon\left(\frac{j}{2i_2}\right)$, $b_{j}=\epsilon$ and $b_l=a_l^*$ for $l \notin \{i_1,i_2,j\}$, where $\epsilon$ is small enough that $b_{i_1},b_{i_2}>0$. Note that for this choice $\sum_{i=1}^{M}ia^*_i=\sum_{i=1}^{M}ib_i$. Since $\frac{d}{d\epsilon}\left[\Big(\sum_{j=1}^M b_j\Big) \log\Big(\sum_{j=1}^M b_j\Big) - \sum_{j=1}^M b_j \log b_j\right] \to \infty$ as $\epsilon \to 0^+$, there exists another point in the neighbourhood of $(a^*_1,\ldots,a^*_M)$ that takes a strictly larger value, hence violating the optimality of $(a^*_1,\ldots,a^*_M)$. 

The KKT conditions corresponding to the optimization problem in \eqref{eq:g_upbnd} are
\begin{align*}
   &\log\Big(\sum_{j=1}^M a_j\Big) -\log a_j -j\lambda =0, \quad j=1, \ldots, M,\\
   &\lambda \bigg( \frac{M}{|\Sigma^L|}-\sum_{j=1}^{M}ja_j \bigg)=0,\\
   &(a_1,\ldots,a_M)> \boldsymbol{0}, \quad  \sum_{j=1}^{M}ja_j\leq \frac{M}{|\Sigma^L|} \text{ and } \lambda \geq 0.
\end{align*}

Suppose $\lambda =0$, then $\log\Big(\sum_{j=1}^M a_j\Big) = \log a_j$ for all $j$, which forces all the entries to zero and the corresponding  value of the function is zero (not the maximum value). When $\lambda>0$, $\sum_{j=1}^{M}ja_j= \frac{M}{|\Sigma^L|}$. This implies that $\frac{a_j}{\sum_{j=1}^M a_j}= 2^{-\lambda j}$, for all $j$. Taking summation on both the sides, we get the condition $\sum_{j=1}^M 2^{-\lambda j}=1$. Let $C:=\sum_{j=1}^M a_j$, which satisfies  $C\sum_{j=1}^{M}j2^{-\lambda j}= \frac{M}{|\Sigma^L|}$. Since the function is concave and all the coordinates of the function are positive, the KKT necessary conditions give the optimality of this point. Hence the optimizer is  $(a^*_1,\ldots,a^*_M)=(C2^{-\lambda}, \ldots,C2^{-M\lambda})$ where $\lambda$ is chosen such that $\sum_{j=1}^M 2^{-\lambda j}=1$ and $C= \frac{M}{|\Sigma^L|\sum_{j=1}^{M}j2^{-\lambda j}}$. As $1=\sum_{j=1}^M 2^{-\lambda j}\leq \sum_{j=1}^{\infty} 2^{-\lambda j}=\frac{2^{-\lambda}}{1-2^{-\lambda}}$,  we have $\lambda \leq 1$. By evaluating the function at this optimizer, we obtain
\begin{align*}
    g^{\#}(M) \leq \max \Big(\sum_{j=1}^M a_j\Big) H(\tilde{a}_1, \ldots, \tilde{a}_M) &=\Big(\sum_{j=1}^M a^*_j\Big) \log\Big(\sum_{j=1}^M a^*_j\Big) - \sum_{j=1}^M a^*_j \log a^*_j\\
    &=C\log C - \sum_{j=1}^M\left( C 2^{-\lambda j} \log C 2^{-\lambda j}\right)\\
    &=C\log C - \sum_{j=1}^M C 2^{-\lambda j} \log C - \sum_{j=1}^M C 2^{-\lambda j} \log 2^{-\lambda j}\\
    &=C\log C-C\log C + C\lambda \sum_{j=1}^M j 2^{-\lambda j}\\
    &=C\lambda \frac{M}{|\Sigma^L|C} \leq  \frac{M}{|\Sigma^L|} 
\end{align*}
Thus we have shown that $g^{\#}(M)=o\left(\frac{ML}{|\Sigma^L|}\right)$, as desired.

\section{Proof of Inequality~\eqref{f_upbnd}}
\label{app:B}
We formally state as a proposition the statement we wish to prove.

\begin{proposition} \label{prop:sup_entropy_diff}
Let $\delta:=\frac{M}{|\Sigma^L|}$ and  $0 \leq p_0\leq q_0\leq 1$. For sufficiently large $M$,
\begin{align}\label{eq:sup_entropy_diff}
    \max \left[h\biggl(\sum_{j=1}^M m_j(1-p_0^j)\biggr) - h\biggl(\sum_{j=1}^M m_j(1-q_0^j)\biggr) \right] =  h(\delta(1-p_0))-h(\delta(1-q_0))
\end{align}
where the maximum is over  $(m_0,\ldots,m_M)\in \Delta^M$ such that $\sum_{i=1}^{M}im_i=\delta$.
\end{proposition}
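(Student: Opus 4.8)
The plan is to exploit that the objective depends on $(m_0,\ldots,m_M)$ only through the two linear functionals $A := \sum_j m_j(1-p_0^j)$ and $B := \sum_j m_j(1-q_0^j)$, and to view the task as maximizing $f := h(A)-h(B)$ over the compact polytope $\mathcal{P} := \{(m_0,\ldots,m_M)\in\Delta^M : \sum_j j m_j = \delta\}$. The claimed maximizer is the point $m^\star$ with $m^\star_0 = 1-\delta$, $m^\star_1=\delta$ and all other entries zero; there $A = u := \delta(1-p_0)$ and $B = v := \delta(1-q_0)$, so $f(m^\star)=h(u)-h(v)$, the asserted right-hand side. The goal is therefore to show $f(m)\le h(u)-h(v)$ for every $m\in\mathcal{P}$.

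The obstruction is that $f$ is a difference of concave functions of $m$, hence neither convex nor concave, so its maximum need not be at a vertex. My main device is to \emph{linearize the concave part at the candidate point} so as to restore convexity. Since $h$ is concave, $h(A)\le h(u)+h'(u)(A-u)$ for all $A$, with equality at $A=u$. Defining $\tilde f(m) := h(u)+h'(u)\,(A(m)-u)-h(B(m))$, I get $f\le \tilde f$ on $\mathcal{P}$, with equality at $m^\star$. Crucially $\tilde f$ is \emph{convex} in $m$: the first two terms are affine, and $-h(B(m))$ is convex because $B$ is linear in $m$ and $h$ is concave. A convex function on a compact polytope attains its maximum at a vertex, so $\max_{\mathcal{P}} f \le \max_{\mathcal{P}}\tilde f = \max_{\text{vertices}}\tilde f$.

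For $M$ large we have $\delta = M/|\Sigma^L| < 1$, so any feasible $m$ must put mass at $0$, and the two equality constraints force each vertex of $\mathcal{P}$ to be supported on a pair $\{0,k\}$: the vertex $m^{(k)}$ with $m^{(k)}_k=\delta/k$, $m^{(k)}_0=1-\delta/k$, for $k=1,\ldots,M$. Writing $A_k := \tfrac{\delta}{k}(1-p_0^k)$ and $B_k := \tfrac{\delta}{k}(1-q_0^k)$, one checks $\tilde f(m^{(1)}) = h(u)-h(v)=f(m^\star)$. Thus the entire statement reduces to proving, for every integer $k$ with $2\le k\le M$, the scalar inequality
\[
h(v)-h(B_k)\ \le\ h'(u)\,(u-A_k).
\]
Indeed, once this holds we have $\max_{\text{vertices}}\tilde f=\tilde f(m^{(1)})$, and then $\max_{\mathcal{P}} f\le h(u)-h(v)\le \max_{\mathcal{P}} f$, forcing equality.

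Proving this scalar inequality uniformly in $k$ is where I expect the real work. Setting $\Delta_x(j):=j(1-x)-(1-x^j)$, which is nonnegative and decreasing in $x\in[0,1]$, one obtains $u-A_k=\delta\,c_p(k)$ and $v-B_k=\delta\,c_q(k)$ with $c_p(k)=(1-p_0)-\tfrac{1-p_0^k}{k}\ge c_q(k)=(1-q_0)-\tfrac{1-q_0^k}{k}$; moreover $c_p(k)-c_q(k)=(q_0-p_0)-\tfrac{q_0^k-p_0^k}{k}$ is bounded below by a positive constant for all $k\ge 2$ whenever $p_0<q_0$. The role of ``$M$ large'' is that $\delta\to 0$ confines $u,v,A_k,B_k$ to $(0,\delta]$, where $h'(t)=\log\frac{1-t}{t}=\log(1/t)+O(t)$ is logarithmically slowly varying; writing the left side as $\int_{B_k}^{v}h'$ and the right as $h'(u)\,\delta c_p(k)$, both behave like $\delta\log(1/\delta)$ times $c_q(k)$ and $c_p(k)$ respectively, so the strict, uniform gap $c_p(k)>c_q(k)$ closes the inequality for $M$ large. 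The delicate point is \emph{uniformity} over $2\le k\le M$: for $k=O(1)$ one has $B_k\asymp\delta$ and the comparison is governed directly by $c_p(k)$ versus $c_q(k)$, whereas for $k$ as large as $M$ the value $B_k$ can be as small as $\asymp 1/|\Sigma^L|$, and there I would instead bound the left side crudely by $h(v)\sim(1-q_0)\,\delta\log(1/\delta)$ against a right side $\gtrsim(1-p_0)\,\delta\log(1/\delta)$, using $1-q_0\le 1-p_0$. Finally, the degenerate case $p_0=q_0$ is immediate, since then $A\equiv B$, so $f\equiv 0$ and the right-hand side of the proposition is also $0$.
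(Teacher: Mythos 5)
Your proposal is correct in substance, and it takes a genuinely different route from the paper's proof. The paper substitutes $\lambda_j = j m_j/\delta$ to map the feasible set onto the full simplex $\Delta^{M-1}$, writes the objective as $-\int_{p_0}^{q_0} u'(x)\,dx$ with $u(x) = h\bigl(\delta\sum_j \lambda_j (1-x^j)/j\bigr)$, and proves a pointwise derivative-domination lemma ($u'(x)\ge v'(x)$ on $[0,q_0]$, uniformly over the simplex, for $\delta$ small) via a first-order optimality analysis; integrating from $p_0$ to $q_0$ then gives the bound. You instead exploit the difference-of-concave structure directly: linearizing $h(A)$ at the candidate optimum produces a convex majorant $\tilde f$ tight at $m^\star$, so $\max \tilde f$ is attained at a vertex of the polytope, and the vertices are exactly the two-point masses $m^{(k)}$ supported on $\{0,k\}$. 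This collapses the $(M+1)$-dimensional problem to the one-parameter family of scalar inequalities $h(v)-h(B_k)\le h'(u)\,(u-A_k)$, $2\le k\le M$ --- a reduction the paper never makes, and one that resonates nicely with the paper's Remark~1: the vertex $m^{(k)}$ is the input pool in which every stored molecule is repeated $k$ times, and your scalar inequality says precisely that repetition ($k\ge 2$) is strictly suboptimal. Your approach buys dimension reduction and a transparent geometric picture; the paper's approach buys a single multivariate lemma that handles all $k$ at once without case analysis, at the cost of a more intricate KKT-type argument.

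One caveat in your uniformity argument needs repair, though it is fixable within your framework. Your crude large-$k$ bound compares $h(v)\sim (1-q_0)\,\delta\log(1/\delta)$ against $h'(u)\,\delta c_p(k) \sim c_p(k)\,\delta\log(1/\delta)$; since $c_p(k) < 1-p_0$ for every finite $k$, the claim ``right side $\gtrsim (1-p_0)\delta\log(1/\delta)$'' is too strong, and what you actually need is $c_p(k) > 1-q_0$, which holds only for $k \gtrsim 1/(q_0-p_0)$. So the regime split must be at a constant threshold $k_0(p_0,q_0)$, with the finitely many $2 \le k < k_0$ handled by the $c_p$-versus-$c_q$ comparison (finitely many cases, so uniformity is automatic). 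Alternatively, a single uniform estimate avoids the split altogether: using $h(x) = x\log(1/x) + x\log e + O(x^2)$ and the uniform boundedness of $t \mapsto t\log(1/t)$ on $[0,1]$, both sides of the scalar inequality equal $\delta\bigl[c\,\log(1/\delta) + O(1)\bigr]$ with $c = c_q(k)$ and $c = c_p(k)$ respectively, with errors uniform in $k$; the uniform gap $c_p(k)-c_q(k) = \int_{p_0}^{q_0}(1-x^{k-1})\,dx \ge (q_0-p_0)\bigl(1-\tfrac{p_0+q_0}{2}\bigr) > 0$ for $p_0 < q_0$ and $k\ge 2$ then closes the argument, with the degenerate cases $p_0=q_0$ and $q_0=1$ trivial, as you note.
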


\begin{proof}
First, note that 
\begin{align*}
\sup \left[h\biggl(\sum_{j=1}^M m_j(1-p_0^j)\biggr) - h\biggl(\sum_{j=1}^M m_j(1-q_0^j)\biggr) \right]  \geq  h(\delta(1-p_0))-h(\delta(1-q_0))
\end{align*}
which is obtained by setting $(m_0,m_1,\ldots,m_M)=(1-\delta,\delta, \ldots,0)$.

For the other direction, let  $\lambda_i:= \frac{im_i}{\delta}$ for $1\leq i\leq M$. With this change of variables, the set of  constraints, $(m_0,\ldots,m_M)\in \Delta^M$ and $\sum_{i=1}^{M}im_i=\delta$,  imply the constraint $(\lambda_1,\ldots, \lambda_M) \in \Delta^{M-1}$. Hence, the left-hand side of \eqref{eq:sup_entropy_diff} is bounded above by $\sup \bigg[h\Big(\delta\sum_{j=1}^M \lambda_j\frac{(1-p_0^j)}{j}\Big) - h\Big(\delta\sum_{j=1}^M \lambda_j\frac{(1-q_0^j)}{j}\Big) \bigg]$ over $\Delta^{M-1}$.  Since the function is continuous on the compact set $\Delta^{M-1}$, the supremum is attained at a point in $\Delta^{M-1}$. In particular, we show that the maximum value is attained at the vertex $(1,0,\ldots,0)$ for sufficiently small $\delta$ (or equivalently, for sufficiently large $M$). In other words, we prove that for sufficiently small $\delta$, 
\begin{align}
    h\biggl(\delta\sum_{j=1}^M \lambda_j\frac{(1-p_0^j)}{j}\biggr) - h\biggl(\delta\sum_{j=1}^M \lambda_j\frac{(1-q_0^j)}{j}\biggr)   \leq h(\delta(1-p_0))-h(\delta(1-q_0))
    \label{eq:sup_entropy_diff_bound}
\end{align}
for every $(\lambda_1,\ldots, \lambda_M) \in \Delta^{M-1}$.

Let us first consider the $q_0=1$ case. The inequality \eqref{eq:sup_entropy_diff_bound} becomes $h\biggl(\delta\sum_{j=1}^M \lambda_j\frac{(1-p_0^j)}{j}\biggr) \leq h(\delta(1-p_0))$. Note that  $w(x):=1-p_0^x$ is a concave function, and $w(0)=0$; both these conditions imply the relation $\frac{(1-p_0^{j_1})}{j_1} \geq \frac{(1-p_0^{j_2})}{j_2}$ if $j_1 \leq j_2$. As a result, we have $\sum_{j=1}^M \lambda_j\frac{(1-p_0^j)}{j} \leq (1-p_0)\sum_{j=1}^M \lambda_j=1-p_0$. If we choose $\delta$ such that $\delta(1-p_0)\leq \frac{1}{2}$, then we can conclude  that $h\biggl(\delta\sum_{j=1}^M \lambda_j\frac{(1-p_0^j)}{j}\biggr) \leq h(\delta(1-p_0))$ because $h(x)$ is monotonically increasing in the interval $\left[0, \frac{1}{2}\right]$.

Now consider the case of $q_0<1$. Define $u(x)=h\biggl(\delta\sum_{j=1}^M \lambda_j\frac{(1-x^j)}{j}\biggr)$ and $v(x)=h(\delta(1-x))$. The derivatives of these functions are $u'(x)=-\delta\Big(\sum_{j=1}^M \lambda_jx^{j-1}\Big)h'\biggl(\delta\sum_{j=1}^M \lambda_j\frac{(1-x^j)}{j}\biggr)$ and $v'(x)=-\delta h'(\delta(1-x))$, where $h'(y)= \log \frac{1-y}{y}$ for $y \in (0,1)$. By Lemma~\ref{lem:entropy_derv_bound} below, if we choose $\delta$ sufficiently small, then $u'(x)\geq v'(x)$ for every $x \in [0,q_0]$ and every $(\lambda_1,\ldots, \lambda_M) \in \Delta^{M-1}$. So, by using the fundamental theorem of calculus and the monotonicity of integrals, we obtain
\begin{align*}
    h\biggl(\delta\sum_{j=1}^M \lambda_j\frac{(1-p_0^j)}{j}\biggr) - h\biggl(\delta\sum_{j=1}^M \lambda_j\frac{(1-q_0^j)}{j}\biggr) \ = \ -\int_{p_0}^{q_0}u'(x) dx &\ \leq \ -\int_{p_0}^{q_0}v'(x) dx  \\ & =h(\delta(1-p_0))-h(\delta(1-q_0)).
\end{align*}
This completes the proof.
\end{proof}

\begin{lemma}\label{lem:entropy_derv_bound}
Let $q_0<1$ be fixed. If $\delta > 0$ is sufficiently small, then for every $x \in [0,q_0]$ and  $(\lambda_1,\ldots, \lambda_M) \in \Delta^{M-1}$, we have
$$
\Big(\sum_{j=1}^M \lambda_jx^{j-1}\Big)h'\biggl(\delta\sum_{j=1}^M \lambda_j\frac{(1-x^j)}{j}\biggr) \leq h'(\delta(1-x)).
$$
\end{lemma}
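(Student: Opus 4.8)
The plan is to prove the inequality pointwise, for each fixed $\delta$ small, each $x$ and each $\lambda$. Write $S := \sum_{j=1}^M \lambda_j x^{j-1}$, $A := \sum_{j=1}^M \lambda_j \tfrac{1-x^j}{j}$ and $B := 1-x$, so that the claim reads $S\,h'(\delta A) \le h'(\delta B)$, where $h'(y)=\log\frac{1-y}{y}$. First I would record two elementary facts: $S\le 1$ (each $x^{j-1}\le 1$) and $A\le B$ (from $1-x^j=(1-x)(1+x+\cdots+x^{j-1})$ and $1+\cdots+x^{j-1}\le j$). Since $q_0<1$ forces $x\le q_0<1$, we stay away from the singular point $x=1$: both $A$ and $B$ are strictly positive, and for $\delta$ small $\delta A,\delta B\in(0,\tfrac12)$, so $h'(\delta A),h'(\delta B)>0$ and $h'$ is decreasing there. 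The cases $S=0$ (the left side is $0$) and $S=1$ (which forces $A=B$, giving equality) are immediate, so the real content is the range $0<S<1$.

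The natural decomposition is
\begin{align*}
 h'(\delta B) - S\,h'(\delta A) = (1-S)\,h'(\delta B) - S\,[\,h'(\delta A) - h'(\delta B)\,],
\end{align*}
where the first term is a nonnegative ``budget'' and the bracket $h'(\delta A)-h'(\delta B)\ge 0$ (since $A\le B$) is a ``penalty''. The main obstacle is the regime $S\uparrow 1$, i.e.\ weight concentrating on $j=1$: there the budget $(1-S)h'(\delta B)$ shrinks, so I must show the penalty shrinks at least as fast, i.e.\ is $O(1-S)$. A crude estimate $h'(\delta A)-h'(\delta B)\le \log\frac{B}{A}+o(1)$ only shows the penalty is bounded, which is not enough near $S=1$, and one also has to make everything uniform in $M$ (recall $M\to\infty$ as $\delta\to 0$ in the application).

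The key, which resolves the obstacle, is the sharper inequality $A\ge (1-x)\,S = BS$. I would prove it by writing $\tfrac{1-x^j}{j}=(1-x)\cdot\tfrac1j\sum_{i=0}^{j-1}x^i$ and noting that the average $\tfrac1j\sum_{i=0}^{j-1}x^i$ is at least its smallest term $x^{j-1}$; summing against $\lambda$ gives $A\ge (1-x)\sum_j \lambda_j x^{j-1}=BS$. From $A\ge BS$ I obtain $\log\frac{B}{A}\le -\log S\le \frac{1-S}{S\ln 2}$ (using $-\ln S\le\frac{1-S}{S}$), and from $B-A\le B(1-S)\le 1-S$ I control the remaining piece via $h'(\delta A)-h'(\delta B)=\log\frac{B}{A}+\log\frac{1-\delta A}{1-\delta B}$ together with $\log\frac{1-\delta A}{1-\delta B}\le \frac{\delta(1-S)}{(1-\delta)\ln 2}$. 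Combining and using $S\le 1$, the penalty obeys
\begin{align*}
 S\,[\,h'(\delta A)-h'(\delta B)\,]\le \frac{1-S}{(1-\delta)\ln 2},
\end{align*}
so that $h'(\delta B)-S\,h'(\delta A)\ge (1-S)\bigl[h'(\delta B)-\tfrac{1}{(1-\delta)\ln 2}\bigr]$.

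Finally I would close the argument by observing that $h'(\delta B)$ is decreasing in $B$, hence minimized over $B\in[1-q_0,1]$ at $B=1$, where $h'(\delta)=\log\frac{1-\delta}{\delta}\to\infty$ as $\delta\to 0$, while $\frac{1}{(1-\delta)\ln 2}\to \frac{1}{\ln 2}$. Thus there is a $\delta_0>0$, independent of $M$, $x$ and $\lambda$, such that for all $\delta<\delta_0$ the bracket is nonnegative, yielding $h'(\delta B)-S\,h'(\delta A)\ge 0$. I expect the only delicate point to be the uniformity near $S=1$ (and in $M$), which is precisely what the bound $A\ge BS$ buys; the rest is routine estimation of $h'$ via $\log\frac{1-y}{y}$ and of $\log$.
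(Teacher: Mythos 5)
Your proof is correct, and it takes a genuinely different route from the paper. The paper fixes $x$, views $\psi(\lambda)=\bigl(\sum_j \lambda_j x^{j-1}\bigr)h'\bigl(\delta\sum_j\lambda_j\tfrac{1-x^j}{j}\bigr)$ as a function on the simplex, and runs a first-order optimality argument: it computes the partial derivatives after eliminating $\lambda_M$, bounds them uniformly, and shows via feasible directions that $(1,0,\ldots,0)$ is the unique local (hence global) maximizer. You instead prove the inequality pointwise by the decomposition $h'(\delta B)-S\,h'(\delta A)=(1-S)h'(\delta B)-S[h'(\delta A)-h'(\delta B)]$ and elementary logarithm estimates, with no calculus on the simplex. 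Interestingly, both arguments pivot on the very same core inequality, $(1-x)x^{j-1}\le \tfrac{1-x^j}{j}$: the paper derives it from convexity of $r(y)=a^{-y}-1$ and uses it in the form $\sum_j\lambda_j x^{j-1}\big/\sum_j\lambda_j\tfrac{1-x^j}{j}\le \tfrac{1}{1-x}$, while you derive it from ``the average of the decreasing terms $1,x,\ldots,x^{j-1}$ is at least the smallest term'' and use it as $A\ge BS$ — these are identical statements. What the approaches buy: the paper's version directly exhibits the vertex as the maximizer (matching the framing of Proposition~\ref{prop:sup_entropy_diff}) but must handle boundary feasible directions carefully, and its smallness condition $(1-q_0)h'(\delta)>\frac{\log_2 e}{(1-q_0)(1-\delta)}$ gives a threshold $\delta_0$ that degrades as $q_0\to 1$; your version is more elementary, transparently uniform in $M$, and yields a threshold $\delta_0$ determined by $h'(\delta)\ge \frac{1}{(1-\delta)\ln 2}$ alone, i.e.\ \emph{independent of $q_0$} (the hypothesis $q_0<1$ enters only to keep $A$ and $B$ strictly positive). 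Both suffice for the application, where $q_0$ is fixed, but your bound is slightly stronger as stated; the only care points in your write-up — the edge cases $S\in\{0,1\}$, the identity $h'(\delta A)-h'(\delta B)=\log\tfrac{B}{A}+\log\tfrac{1-\delta A}{1-\delta B}$, and the estimates $-\ln S\le\tfrac{1-S}{S}$, $\ln(1+u)\le u$ — are all handled correctly.
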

\begin{proof}
Fix an $x \in [0,q_0]$, and let $\psi(\lambda_1,\ldots, \lambda_M):= \Big(\sum_{j=1}^M \lambda_jx^{j-1}\Big)h'\biggl(\delta\sum_{j=1}^M \lambda_j\frac{(1-x^j)}{j}\biggr)$. Note that $\psi(1,\ldots, 0)=h'(\delta(1-x))$.  Since $\lambda_M=1-\sum_{j=1}^{M-1} \lambda_j$, $\phi(\lambda_1,\ldots, \lambda_{M-1}) := \psi (\lambda_1,\ldots, \lambda_{M-1},1-\sum_{j=1}^{M-1} \lambda_j)$  is a function of $M-1$ variables $(\lambda_1,\ldots, \lambda_{M-1})\in \Lambda:=\Big\{ (\lambda_1,\ldots, \lambda_{M-1}) \geq \boldsymbol{0}: \sum_{j=1}^{M-1} \lambda_j \leq 1\Big\}$. Note that a point $(\mu_1,\ldots, \mu_{M-1})$ is a local maximizer of $\phi$ if for all feasible directions $(d_1,\ldots, d_{M-1})$, $\sum_{j=1}^{M-1}\frac{\partial \phi}{\partial \lambda_j}d_j \leq 0$. The goal is to show that a point $(\mu_1,\ldots, \mu_{M-1}) \neq (1,0,\ldots,0)$ is not a local maximizer, i.e., there exists a feasible direction $(d_1,\ldots, d_{M-1})$ such that $\sum_{j=1}^{M-1}\frac{\partial \phi}{\partial \lambda_j}d_j > 0$. Moreover, we show that at the point $(1,0,\ldots,0)$, $\sum_{j=1}^{M-1}\frac{\partial \phi}{\partial \lambda_j}d_j < 0$  for all feasible directions $(d_1,\ldots, d_{M-1})$, which implies that it is a local maximizer. Hence, $(1,0,\ldots,0)$ is the global maximizer.

It is sufficient to show that $\frac{\partial \phi}{\partial \lambda_1} > 0$ and
$\min_{2\leq j \leq M-1}\left\{\frac{\partial \phi}{\partial \lambda_1}-\frac{\partial \phi}{\partial \lambda_j}\right\}>0$ at all points of $\Lambda$ and for all $x \in [0,q_0]$ because of the following observation. For a point $(\mu_1,\ldots, \mu_{M-1}) \neq (1,0,\ldots,0)$ in $\Lambda$, if we choose the direction $(d_1,\ldots, d_{M-1})=(1-\mu_1,0-\mu_2,\ldots,0-\mu_{M-1})$, then $\sum_{j=1}^{M-1}\frac{\partial \phi}{\partial \lambda_j}d_j =  \frac{\partial \phi}{\partial \lambda_1}(1-\mu_1)- \sum\limits_{j=2}^{M-1}\frac{\partial \phi}{\partial \lambda_j}\mu_j \geq \frac{\partial \phi}{\partial \lambda_1}\Big(\sum\limits_{j=2}^{M-1} \mu_j\Big)- \sum\limits_{j=2}^{M-1}\frac{\partial \phi}{\partial \lambda_j}\mu_j = \sum\limits_{j=2}^{M-1}\left( \frac{\partial \phi}{\partial \lambda_1}-\frac{\partial \phi}{\partial \lambda_j}\right)\mu_j\geq 0$. Observe that at least one of the inequalities is strict. If $\sum_j \mu_j < 1$, then the first inequality is strict. In the case of $\sum_j \mu_j = 1$, the condition $\mu_1 \neq 1$ implies that there exists an index $j \neq 1$ such that $\mu_j >0$. As a result,  the last inequality is strict. At the point $(1,0,\ldots,0)$, the feasible directions are $(d_1,\ldots, d_{M-1})=(\mu_1-1,\mu_2-0,\ldots,\mu_{M-1}-0)$ for every $(\mu_1,\ldots, \mu_{M-1}) \in \Lambda \setminus \{(1,0,\ldots,0)\}$. So,  we have $\sum_{j=1}^{M-1}\frac{\partial \phi}{\partial \lambda_j}d_j=  \frac{\partial \phi}{\partial \lambda_1}(\mu_1-1)+ \sum\limits_{j=2}^{M-1}\frac{\partial \phi}{\partial \lambda_j}\mu_j \leq  -\frac{\partial \phi}{\partial \lambda_1}\Big(\sum\limits_{j=2}^{M-1} \mu_j\Big)+ \sum\limits_{j=2}^{M-1}\frac{\partial \phi}{\partial \lambda_j}\mu_j = \sum\limits_{j=2}^{M-1}\left( \frac{\partial \phi}{\partial \lambda_j}-\frac{\partial \phi}{\partial \lambda_1}\right)\mu_j\leq 0$, where the strictness of one of the inequalities follows from the above argument. 

For $1\leq i \leq M-1$, the partial derivatives are \begin{align*}
    \frac{\partial \phi}{\partial \lambda_{i}}&=\left(x^{i-1}-x^{M-1}\right)h'\biggl(\delta\sum \limits_{j=1}^M \lambda_j\frac{(1-x^j)}{j}\biggr)\nonumber\\& \mkern 100mu + \delta \Big(\sum \limits_{j=1}^M \lambda_jx^{j-1}\Big)\left(\frac{(1-x^i)}{i}-\frac{(1-x^M)}{M}\right) h''\biggl(\delta\sum \limits_{j=1}^M \lambda_j\frac{(1-x^j)}{j}\biggr)\\&= \left(x^{i-1}-x^{M-1}\right)h'\biggl(\delta\sum \limits_{j=1}^M \lambda_j\frac{(1-x^j)}{j}\biggr) - \frac{\log_2e \cdot \Big(\sum \limits_{j=1}^M \lambda_jx^{j-1}\Big)\left(\frac{(1-x^i)}{i}-\frac{(1-x^M)}{M}\right) }{\biggl(\sum \limits_{j=1}^M \lambda_j\frac{(1-x^j)}{j}\biggr)\biggl(1-\delta\sum \limits_{j=1}^M \lambda_j\frac{(1-x^j)}{j}\biggr)}
\end{align*}
where $\lambda_M=1-\sum_{j=1}^{M-1} \lambda_j$, $h''(y)=-\frac{\log_2e}{y(1-y)}$ for $y\in (0,1)$. 
Since $\delta\sum_{j=1}^M \lambda_j\frac{(1-x^j)}{j} \leq \delta$ and $h'(y)$ is a monotonically decreasing non-negative function in $\left(0,\frac{1}{2}\right]$,  $h'\biggl(\delta\sum_{j=1}^M \lambda_j\frac{(1-x^j)}{j}\biggr) \geq h'(\delta)$. 
Note that the function  $r(y):=\frac{1}{a^y}-1$, for some $a\in (0,1]$, is a convex function with $r(0)=0$, therefore we have
$\frac{\frac{1}{a^{j_1}}-1}{j_1} \geq \frac{\frac{1}{a^{j_2}}-1}{j_2}$ if $j_1 \geq j_2$, which implies that $\frac{j_2a^{j_2-1}}{1-a^{j_2}}\geq \frac{j_1a^{j_1-1}}{1-a^{j_1}}$. So, $\max_{1\leq j\leq M} \frac{jx^{j-1}}{1-x^j} = \frac{1}{1-x}$ (which holds even at $x=0$). Therefore, we have
\begin{align*}
    \frac{\sum \limits_{j=1}^M \lambda_jx^{j-1}}{\sum \limits_{j=1}^M \lambda_j\frac{(1-x^j)}{j}} \leq \max \limits_{1\leq j\leq M} \frac{x^{j-1}}{\frac{(1-x^j)}{j}}= \max \limits_{1\leq j\leq M} \frac{jx^{j-1}}{1-x^j}=\frac{1}{1-x} \leq \frac{1}{1-q_0}.
\end{align*}
Furthermore, we have the bound 
\begin{align}
    \frac{\frac{(1-x^i)}{i}-\frac{(1-x^M)}{M} }{1-\delta\sum \limits_{j=1}^M \lambda_j\frac{(1-x^j)}{j}} \leq \frac{1}{1-\delta}.
\end{align}
By using the above bounds, the required quantities can be uniformly  bounded (not depending on $x$ and $(\lambda_1,\ldots, \lambda_{M-1})$) as
\begin{align*}
     \frac{\partial \phi}{\partial \lambda_{1}}&= \left(1-x^{M-1}\right)h'\biggl(\delta\sum \limits_{j=1}^M \lambda_j\frac{(1-x^j)}{j}\biggr) - \frac{\log_2e \cdot \Big(\sum \limits_{j=1}^M \lambda_jx^{j-1}\Big)\left((1-x)-\frac{(1-x^M)}{M}\right) }{\biggl(\sum \limits_{j=1}^M \lambda_j\frac{(1-x^j)}{j}\biggr)\biggl(1-\delta\sum \limits_{j=1}^M \lambda_j\frac{(1-x^j)}{j}\biggr)}\\
     &\geq (1-q_0)h'(\delta)-\frac{\log_2e}{(1-q_0)(1-\delta)}
\end{align*}
and
\begin{align*}
     \frac{\partial \phi}{\partial \lambda_{1}}-\frac{\partial \phi}{\partial \lambda_{i}}&= \left(1-x^{i-1}\right)h'\biggl(\delta\sum \limits_{j=1}^M \lambda_j\frac{(1-x^j)}{j}\biggr) - \frac{\log_2e \cdot \Big(\sum \limits_{j=1}^M \lambda_jx^{j-1}\Big)\left((1-x)-\frac{(1-x^i)}{i}\right) }{\biggl(\sum \limits_{j=1}^M \lambda_j\frac{(1-x^j)}{j}\biggr)\biggl(1-\delta\sum \limits_{j=1}^M \lambda_j\frac{(1-x^j)}{j}\biggr)}\\
     &\geq (1-q_0)h'(\delta)-\frac{\log_2e}{(1-q_0)(1-\delta)}
\end{align*}
for $2\leq i \leq M-1$. It is enough to choose a small $\delta$ such that $(1-q_0)h'(\delta)-\frac{\log_2e}{(1-q_0)(1-\delta)}>0$, which is possible as $\delta \to 0$, $h'(\delta) \to \infty$ and $\frac{1}{1-\delta} \to 1$. 
\end{proof}

\section{A simple converse result of Theorem~\ref{thm:dna_wtc_capacity} when $Q=(q_0,q_1)$}\label{app:C}
In the special case when Eve's sampling distribution is $Q=(q_0,q_1)$,  we can prove the converse result by following a simple combinatorial approach similar to that of Shomorony and Heckel \cite{shomorony21}. As in the converse proof of Theorem~\ref{thm:dna_wtc_capacity} given in Section~\ref{sec:converse}, we provide Bob side information  which allows him to tell whether or not two identical molecules in his sample $Y^{N_{\text{m}}L}$ came from the same input molecule $X_k^L$. Note that for Eve, this ambiguity does not arise, since her sampling distribution $Q$ has $q_j = 0$ for all $j \geq 2$. It is for this reason that the simple combinatorial argument given here works. Let $\vfyh$ be the frequency vector available to Bob (as described in Section~\ref{sec:converse}) and let $\vfz$ denote the frequency vector corresponding to the multiset $Z^{N_{\text{w}}L}$ observed by Eve.

Our starting point for the converse proof is again \eqref{eq:mutual_difference}:
\begin{align*}
   (1-\delta_M)R - \delta''_M - \rk &  \leq \frac{1}{ML} [I(W; \vfyh) - I(W; \vfz)].
\end{align*}
As in the argument given in Section~\ref{sec:converse}, we can consider a new coupled distribution $\tilde{P}_{\vfyh,\vfz|\vfx}$ that has the same marginals as that of $P_{\vfyh,\vfz|\vfx}$. This new distribution is constructed by decomposing the channel between $\vfx$  and  $\vfyh$ (or $\vfz$) into  $|\Sigma^L|$ i.i.d. channels, one for each of the components of the frequency vectors. Since $X^{ML}$ contains $M$ molecules, the input frequency vector is constrained by $\sum\limits_{i=1}^{|\Sigma^L|}\fx(a_i)=M$. Let $P_{\fyh\fz|\fx}=P_{\fyh \mid \fx}P_{\fz \mid \fx}$ denote the channel transition probability between the components $\fx(a_i)$ and $(\fyh(a_i), \fz(a_i))$, for $a_i \in \Sigma^L$. Since $\fyh$ (resp.\ $\fz$) is a sum of $\fx$ number of independent Ber($p_0$) (resp.\ Ber($q_0$)) random variables, the channel transition probabilities are 

\begin{align*}
    P_{\fyh \mid \fx}(j \mid i)= \begin{cases}
    \binom{i}{j}p_0^{i-j}(1-p_0)^j, & \text{if $j \leq i$ and }\\
            0, & \text{otherwise}
    \end{cases}
\end{align*}
and 
\begin{align*}
    P_{\fz \mid \fx}(j \mid i)= \begin{cases}
    \binom{i}{j}q_0^{i-j}(1-q_0)^j, & \text{if $j \leq i$ and }\\
            0, & \text{otherwise}
    \end{cases}
\end{align*}
where $\{0,1,\ldots,M\}$ is the alphabet for $\fx$, $\fyh$ and $\fz$. 

\begin{lemma}\label{lem:degraded_special}
If $q_0\geq p_0$, then the channel $P_{\fz \mid \fx}$ is a degraded version of $P_{\fyh \mid \fx}$, i.e., there exists a channel $Q$ such that
\begin{align*}
    P_{\fz \mid \fx}(j|i)=\sum \limits_{k=0}^M Q(j|k)P_{\fyh \mid \fx}(k|i)
\end{align*}
for all $i,j \in \{0,1,\ldots,M\}$.
\end{lemma}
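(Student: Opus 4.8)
The plan is to exhibit the degrading channel $Q$ explicitly as a \emph{binomial thinning} (sub-sampling) channel, and then verify the degradation relation through the probabilistic interpretation of composing two independent sampling stages. The starting observation is that $P_{\fyh\mid\fx}$ is precisely the channel that, on input count $i$, retains each of the $i$ molecules independently with probability $1-p_0$ (erasing it with probability $p_0$), so that $\fyh\sim\mathrm{Bin}(i,1-p_0)$; likewise $P_{\fz\mid\fx}$ retains each molecule with probability $1-q_0$, giving $\fz\sim\mathrm{Bin}(i,1-q_0)$. The key fact is that the composition of two independent thinnings with retention probabilities $1-p_0$ and $s$ is again a single thinning, with retention probability $(1-p_0)s$. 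Since $q_0\ge p_0$ (and $q_0\le 1$), the value
\[
s := \frac{1-q_0}{1-p_0}
\]
lies in $[0,1]$ and satisfies $(1-p_0)\,s = 1-q_0$, matching the retention probability of Eve's channel.

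Accordingly, I would define $Q$ to be the thinning channel with retention probability $s$, namely
\begin{align*}
Q(j\mid k) = \binom{k}{j}\left(\frac{1-q_0}{1-p_0}\right)^{j}\left(\frac{q_0-p_0}{1-p_0}\right)^{k-j}, \qquad 0\le j\le k,
\end{align*}
with $Q(j\mid k)=0$ for $j>k$. Here the complementary probability $1-s=\frac{q_0-p_0}{1-p_0}$ is nonnegative \emph{precisely} under the hypothesis $q_0\ge p_0$, which confirms that $Q$ is a legitimate stochastic matrix; this is exactly where the assumption is used, mirroring its role in Lemma~\ref{lem:degraded}.

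It then remains to check $\sum_{k=0}^{M} Q(j\mid k)\,P_{\fyh\mid\fx}(k\mid i) = P_{\fz\mid\fx}(j\mid i)$. The cleanest route is the probabilistic argument: passing the count $i$ through $P_{\fyh\mid\fx}$ and then through $Q$ is equivalent to letting each of the $i$ molecules survive both stages independently, with overall survival probability $(1-p_0)\cdot s = 1-q_0$; hence the composite output is $\mathrm{Bin}(i,1-q_0)$, which is exactly $P_{\fz\mid\fx}(\cdot\mid i)$. If a purely algebraic verification is preferred, one substitutes the binomial expressions and sums over $k$ via the binomial convolution (Vandermonde) identity, the intermediate terms collapsing because $(1-p_0)s + (1-p_0)(1-s) + p_0 = 1$. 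I do not anticipate any genuine obstacle here; the only point requiring care is confirming $s\in[0,1]$, and that is simply the content of the hypothesis $q_0\ge p_0$.
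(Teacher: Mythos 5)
Your proposal is correct and matches the paper's proof essentially exactly: the paper exhibits the very same degrading channel $Q(j\mid k)=\binom{k}{j}\bigl(\tfrac{1-q_0}{1-p_0}\bigr)^{j}\bigl(\tfrac{q_0-p_0}{1-p_0}\bigr)^{k-j}$, i.e.\ binomial thinning with retention probability $\tfrac{1-q_0}{1-p_0}$, and verifies the composition via the identity $\binom{i}{k}\binom{k}{j}=\binom{i}{j}\binom{i-j}{k-j}$. Your probabilistic composition-of-thinnings argument and the paper's algebraic check are two phrasings of the same computation, and your observation that $q_0\ge p_0$ is exactly what makes $Q$ stochastic is also where the paper uses the hypothesis.
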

\begin{proof}
The channel $Q$ defined by
\begin{align*}
    Q(j|i)= \begin{cases}
    \binom{i}{j}(\frac{q_0-p_0}{1-p_0})^{(i-j)}(\frac{1-q_0}{1-p_0})^j, & \text{if $j \leq i$ and }\\
            0, & \text{otherwise}
    \end{cases}
\end{align*}
satisfies the degradation condition. This can be easily verified using the identity $\sum_{k=j}^{i}\binom{i}{k} \binom{k}{j}(x-1)^{k-j}=\binom{i}{j}x^{i-j}$, which follows from the fact that $\binom{i}{k} \binom{k}{j}=\binom{i}{j}\binom{i-j}{k-j}$. 
\end{proof}

Using the above degradation relation ($W-\vfx-\vfyh-\vfz$), we can bound the rate as
\begin{align*}
   (1-\delta_M)R - \delta''_M - \rk
   \leq  \frac{1}{ML} [I(W; \vfyh) - I(W; \vfz)] \leq \frac{1}{ML} I(\vfx; \vfyh\mid \vfz) \leq \frac{1}{ML}H(\vfyh \mid \vfz).
\end{align*}
We will now bound $H(\vfyh \mid \vfz)$ using a counting argument. To this end, define the events
\begin{align*}
    E_Y := \left\lbrace \left| \frac{\lVert \vfyh\rVert_1 }{M} - (1-p_0)\right| \geq \delta \right\rbrace \quad \text{ and } \quad
    E_Z := \left\lbrace \left| \frac{\lVert \vfz\rVert_1 }{M} - (1-q_0)\right| \geq \delta \right\rbrace,
\end{align*}
where $\lVert \cdot \rVert_1 $ denotes the $L_1$ norm of a vector. Since $\lVert \vfyh\rVert_1$ (resp. $\lVert \vfz\rVert_1$) is a sum of $M$ independent Bernoulli random variables with parameter $p_0$ (resp. $q_0$), we have, by Hoeffding's inequality, 
$\mathbb{P}(E_Y) \leq 2 e^{-2M\delta^2}$ and $\mathbb{P}(E_Z) \leq 2 e^{-2M\delta^2}$. Therefore $\mathbb{P}(E_Y^c \cap E_Z^c) \geq 1-4e^{-2M\delta^2}$. Let $\mathbf{1}_E$ be the indicator function of the event $E:=E_Y^c \cap E_Z^c$. The following combinatorial lemma allows us to bound the number of frequency vectors $\vfyh$ possible given a realization of  $\vfz$.
\begin{lemma}
Let  $\mathbf{b}=(b_1, b_2,\ldots,b_{\ell})$ be a vector with integer entries. Let $S_{\ell}(\mathbf{b},n)$ denote the number of possible integer vectors $\mathbf{s} = (s_1,s_2,\ldots,s_{\ell})$ such that $s_i \ge b_i$ for all $i$, and $\sum_{i=1}^{\ell} s_i = n$.
Then,
\begin{align*}
    S_\ell(\mathbf{b},n) = \binom{n-b_1-\cdots-b_\ell+\ell-1}{\ell-1}.
\end{align*}
\end{lemma}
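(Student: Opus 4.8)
The plan is to reduce this to the standard ``stars and bars'' count by a shift of variables. For each $i \in \{1,\ldots,\ell\}$, set $t_i := s_i - b_i$. Since the $b_i$ are fixed integers, the map $(s_1,\ldots,s_\ell) \mapsto (t_1,\ldots,t_\ell)$ is a bijection between integer vectors, and the constraint $s_i \ge b_i$ becomes $t_i \ge 0$, while the constraint $\sum_{i=1}^\ell s_i = n$ becomes
\begin{align*}
    \sum_{i=1}^\ell t_i = n - \sum_{i=1}^\ell b_i =: m.
\end{align*}
Hence $S_\ell(\mathbf{b},n)$ equals the number of nonnegative integer vectors $(t_1,\ldots,t_\ell)$ with $\sum_{i=1}^\ell t_i = m$.

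Next I would invoke (or, for completeness, prove) the elementary fact that the number of nonnegative integer solutions of $t_1 + \cdots + t_\ell = m$ is $\binom{m+\ell-1}{\ell-1}$. The cleanest justification is the standard bijection: encode a solution as a string of $m$ ``stars'' and $\ell-1$ ``bars'', where the bars partition the stars into $\ell$ (possibly empty) blocks whose sizes are $t_1,\ldots,t_\ell$; there are $\binom{m+\ell-1}{\ell-1}$ such strings since we choose the positions of the $\ell-1$ bars among $m+\ell-1$ total symbols. Substituting $m = n - \sum_{i=1}^\ell b_i$ yields exactly
\begin{align*}
    S_\ell(\mathbf{b},n) = \binom{n - b_1 - \cdots - b_\ell + \ell - 1}{\ell - 1},
\end{align*}
as claimed.

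I do not expect any genuine obstacle here, as the statement is a textbook stars-and-bars identity; the only point meriting a word of care is the degenerate regime. When $m = n - \sum_{i} b_i < 0$ there are no valid vectors and the count is $0$, which is consistent with the binomial coefficient being interpreted as $0$ whenever its upper argument is negative (equivalently, smaller than the lower argument $\ell-1$). Since this lemma will be applied with $\vfzh$ or $\vfz$ fixed and $\vfyh$ ranging over vectors dominating it componentwise, the quantity $m$ will be nonnegative in the intended application, so this edge case does not affect the subsequent counting bound on $H(\vfyh \mid \vfz)$.
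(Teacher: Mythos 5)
Your proof is correct and follows essentially the same route as the paper's: shift each coordinate by $b_i$ to reduce to counting nonnegative integer vectors summing to $n-\sum_i b_i$, then apply the standard stars-and-bars count. The only additions are your explicit stars-and-bars bijection and the remark on the degenerate case $n < \sum_i b_i$, both of which are fine but not needed beyond what the paper does.
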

\begin{proof}
Note that any vector $\mathbf{s}$ counted by $S_{\ell}(\mathbf{b},n)$ can be uniquely expressed as the sum of $\mathbf{b}$ and a non-negative integer vector whose components sum to $n-\sum_i b_i$. Therefore, $S_{\ell}(\mathbf{b},n)$ is equal to the number of non-negative integer vectors with $\ell$ components that sum to $n-\sum_i b_i$. The latter number is precisely $\binom{n-b_1-\cdots-b_\ell+\ell-1}{\ell-1}$.
% Another way to prove this result is by  induction on $l$. For $l=0$, $S_k(\mathbf{0},n)=  \binom{n+k-1}{k-1}$ where $\mathbf{0}$ denotes the zero sequence. This is due to the fact there exactly $\binom{n+k-1}{k-1}$ non-negative sequences whose components sum to $n$. Assume the induction hypothesis that the expression for $S_k(s,n)$ holds for all $(l-1+k)$-length sequences $s$.
%   For an $(l+k)$-length sequence $\mathbf{b}$, 
%   \begin{align*}
%       S_k(\mathbf{b},n)&=\sum_{i=b_1}^{n-\sum_{i=2}^l b_i} S_k(\mathbf{b}_2^{l+k},n-i) \\
%       &=\sum_{i=b_1}^{n-\sum_{i=2}^l b_i} \binom{n-i-b_2-\cdots-b_l+k+l-2}{k+l-2}\\
%       &=\binom{n-b_1-\cdots-b_l+k+l-1}{k+l-1}
%     \end{align*}
% where $ \mathbf{b}_2^{l+k}:= ( b_2,\ldots,b_l,0,0,\ldots,0)$ and the last equality follows from the identity $\sum_{r=i}^m \binom{r}{i}=\binom{m+1}{i+1}$.
\end{proof}

For a realization of $\vfz$ corresponding to the event $E$, let $\mathbf{b}=\vfz$ and $S_\ell(\mathbf{b},n)$ denote the possible realizations of $\vfyh$ in the event $E$. For this choice,  $n \leq \left \lceil M(1-p_0 +\delta)\right \rceil $,  $\ell= |\Sigma|^L$, and  $\mathbf{b}=(b_1, b_2,\ldots,b_\ell)$ satisfies $\sum_i b_i \geq \left \lfloor M(1-q_0 - \delta)\right \rfloor$. So, we have
\begin{align*}
    S_\ell(\mathbf{b},n) 
    % &= \binom{n-b_1-\cdots-b_\ell+|\Sigma|^L-1}{|\Sigma|^L-1} \\
%    &= \binom{n-b_1-\cdots-b_l+k+l-1}{|\Sigma|^L-1}\\
    & \leq \binom{\left \lceil M(1-p_0 +\delta)\right \rceil- \left \lfloor M(1-q_0 -\delta)\right \rfloor+|\Sigma|^L}{|\Sigma|^L-1}
    % & = \binom{\left \lceil M(1-p_0 +\delta)\right \rceil- \left \lfloor M(1-q_0 -\delta)\right \rfloor+|\Sigma|^L}{\left \lceil M(1-p_0 +\delta)\right \rceil- \left \lfloor M(1-q_0 -\delta)\right \rfloor+1}
    % &\stackrel{(b)}{\leq} \left(\frac{e(\left \lceil M(1-p_0 +\delta)\right \rceil- \left \lfloor M(1-q_0 -\delta)\right \rfloor+|\Sigma|^L)}{\left \lceil M(1-p_0 +\delta)\right \rceil- \left \lfloor M(1-q_0 -\delta)\right \rfloor+1}\right)^{\left \lceil M(1-p_0 +\delta)\right \rceil- \left \lfloor M(1-q_0 -\delta)\right \rfloor+1}\\
    % &\leq \left(\frac{e(\left \lceil M(1-p_0 +\delta)\right \rceil- \left \lfloor M(1-q_0 -\delta)\right \rfloor+|\Sigma|^L)}{\left \lceil M(1-p_0 +\delta)\right \rceil- \left \lfloor M(1-q_0 -\delta)\right \rfloor}\right)^{M(q_0-p_0+2\delta) +3}\\
    % &=  \left(e\left(1+\frac{|\Sigma|^L}{\left \lceil M(1-p_0 +\delta)\right \rceil- \left \lfloor M(1-q_0 -\delta)\right \rfloor}\right)\right)^{M(q_0-p_0+2\delta) +3}\\
    % &\leq  \left(e\left(1+\frac{|\Sigma|^L}{M(q_0-p_0+2\delta)-2}\right)\right)^{M(q_0-p_0+2\delta) +3}
\end{align*}
which holds because $n-b_1-\cdots-b_\ell+|\Sigma|^L-1\leq \left \lceil M(1-p_0 +\delta)\right \rceil- \left \lfloor M(1-q_0 +\delta)\right \rfloor+|\Sigma|^L$. We can use the fact that $\binom{a}{a-b}=\binom{a}{b}<
\left(\frac{ea}{b}\right)^b$ to further bound $S_{\ell}(\mathbf{b},n)$ as 
\begin{align*}
    S_\ell(\mathbf{b},n)\leq  \left[e\left(1+\frac{|\Sigma|^L}{M(q_0-p_0+2\delta)-2}\right)\right]^{M(q_0-p_0+2\delta) +3}
\end{align*}
This implies that
\begin{align*}
&H(\vfyh \mid \vfz,  \mathbf{1}_E=1) \le [M(q_0-p_0+2\delta) +3]\log \left[e\left(1+\frac{|\Sigma|^L}{M(q_0-p_0+2\delta)-2}\right)\right]
\end{align*}
Note also that $H(\vfyh \mid \vfz,  \mathbf{1}_E=0)\leq \log \binom{M+|\Sigma|^L-1}{M}\leq M\log\left( \frac{e(M+|\Sigma|^L-1)}{M}\right)$. Thus

\begin{align}
   (1-&\delta_M)R - \delta''_M - \rk \leq \frac{1}{ML} H(\vfyh \mid \vfz) \notag \leq \frac{1}{ML}  H(\vfyh, \mathbf{1}_E \mid \vfz) 
    % & = \frac{1}{ML}\left[H( \mathbf{1}_E \mid \vfz)+H(\vfyh \mid \vfz,  \mathbf{1}_E)\right] \notag \\
    \leq \frac{1}{ML}\left[1+H(\vfyh \mid \vfz,  \mathbf{1}_E)\right] \notag \\
     &\leq \frac{1}{ML}\left[1+\mathbb{P}(E)H(\vfyh \mid \vfz,  \mathbf{1}_E=1)+\mathbb{P}(E^c)H(\vfyh \mid \vfz,  \mathbf{1}_E=0)\right] \notag \\
     &\leq  \frac{1}{ML}\left[1+\mathbb{P}(E)(M(q_0-p_0+2\delta) +3)\log \left(e\left(1+\frac{|\Sigma|^L}{M(q_0-p_0+2\delta)-2}\right)\right) \right.\notag\\
     &\mkern 350mu \left.+\mathbb{P}(E^c)M\log\left( \frac{e(M+|\Sigma|^L-1)}{M}\right)\right] \label{R_ineq}
\end{align}
Observe that
\begin{align*}
    \lim_{M\to\infty}\frac{1}{L}\log \left(e\left(1+\frac{|\Sigma|^L}{M(q_0-p_0+2\delta)-2}\right)\right)= \left(1-\frac{1}{\beta}\right).
\end{align*}
Also,
\begin{align*}
    \lim_{M\to\infty}\frac{1}{L}\log\left( \frac{e(M+|\Sigma|^L-1)}{M}\right)= \left(1-\frac{1}{\beta}\right), 
\end{align*}
which implies that \begin{align*}
    \liminf_{M\to\infty}\frac{\mathbb{P}(E^c)}{L}\log\left( \frac{e(M+|\Sigma|^L-1)}{M}\right)= 0.
\end{align*}
By taking limits on both sides of the inequality for $R$ in \eqref{R_ineq}, we get $R\leq (q_0-p_0+2\delta) \left(1-\frac{1}{\beta}\right)$.
Since $\delta$ is arbitrary, we conclude that  $ C_s - \rk\leq (q_0-p_0)\left(1-\frac{1}{\beta}\right).$
% \begin{align*}
%     C_s - \rk\leq (q_0-p_0)\left(1-\frac{1}{\beta}\right).
% \end{align*}

\section{Block-erasure wiretap channel}\label{app:block-erasure}
In this appendix, we will formally describe a block-erasure channel and a block-erasure wiretap channel. For these channels, we will also present the capacity results.  

\subsection{Block-Erasure Channel}
Let $m$ and $l$ be non-negative integers, and let $\mc{X}$ be a finite set.  Given a sequence $a^{ml}=(a_1, a_2, \ldots, a_{ml})$ and  $i \in [m]$, we refer to the contiguous subsequence $\underline{a}_i:=(a_{(i-1)l+1}, \ldots, a_{il})$ as the $i$th block (of length $l$) of $a^{ml}$. A \emph{block-erasure channel with parameter $\epsilon$}, denoted by B-EC($\epsilon, m, l$), takes a sequence $X^{ml}  \in \mc{X}^{ml}$ as the input and acts sequentially on blocks of length $l$ in the following manner. For each $i \in [m]$, the channel considers the $i$th block $\uX_i$ and either outputs $l$ erasure symbols $(\Delta, \ldots, \Delta) =:\underline{\Delta}$ with probability $\epsilon$ or noiselessly transmits $\uX_i$ with probability $1-\epsilon$, independently of the rest of the blocks. Define $\mc{Y}:=\mc{X} \cup \Delta$.  Let $Y^{ml}=(Y_1, Y_2, \ldots, Y_{ml}) \in \mc{Y}^{ml}$ be the output of B-EC($\epsilon, m, l$). The transition probability is given by 
\begin{align*}
    P_{Y^{ml}|X^{ml}}(y^{ml}|x^{ml})= \prod_{i=1}^{m}P_{\uY_i|\uX_i}(\uy_i|\ux_i),
\end{align*} 
where
\begin{align}\label{eq:cond_prob_bec}
    P_{\uY_i|\uX_i}(\uy_i|\ux_i) = \left\{\begin{array}{ll}
        \epsilon & \text{if } \uy_i=\underline{\Delta},\\
        1-\epsilon & \text{if } \uy_i=\ux_i,\\
        0 & \text{otherwise},\end{array} \right.
\end{align}
for $i \in [m]$, $\ux_i \in \mc{X}^l$, and $\uy_i\in \mc{Y}^{l}$.

\begin{figure}[h]
\centering
\resizebox{0.9\width}{!}{\tikzstyle{block}=[rectangle, draw, thick, minimum width=2em, minimum height=2em]

% \begin{tikzpicture}[node distance=2cm,auto,>=latex']
%     \node (x) {$(X_1, \ldots, X_{ML})$};
%     \node [block, align=center] (main) [right of=x, node distance=4cm] {B-EC($\epsilon, M, L$)};
%   \node (y) [right of=main] {$(Y_1, Y_2,\ldots, Y_M)$};
%   \node (w) [left of=x, node distance=2.5cm] {$W$};
%     \node (alice) [above right of=w, node distance=1.3cm] {Alice};
%     \node (w_hat) [right of=y, node distance=2.5cm] {$\widehat{W}$};
%     \node (bob) [above of= y, node distance=0.9cm] {Bob};
   
%     \draw[->, thick] (x.east) --  (main.west);

%     \draw[->, thick] (main) --  (y);

%     \draw[->, thick] (w) -- (x.west);
%     \draw[->, thick] (y.east) -- (w_hat);

% \end{tikzpicture}
\begin{tikzpicture}[node distance=3.2cm,auto,>=latex']
    \node (x) {$(X_1, \ldots, X_{ml})$};
    \node [block, align=center] (main) [right of=x,] {B-EC($\epsilon, m, l$)};
   \node (y) [right of=main] {$(Y_1, \ldots, Y_{ml})$};

    \draw[->, thick] (x.east) --  (main.west);

    \draw[->, thick] (main) --  (y);

\end{tikzpicture}}
\caption{Block-erasure channel with erasure probability $\epsilon$ acting on blocks of length $l$.}
\label{fig:erasure_p2p}
 \end{figure}

We are mainly interested in the channel capacity of a sequence of channels $\left\{\text{B-EC}(\epsilon, m, l)\right\}_{m=1}^{\infty}$, where $l\in \mathbb{N}$ is some function of $m$; specifically, in the DNA storage situation, $l \sim (\beta-1)\log m$ for some constant $\beta>1$. It is not difficult to show that the channel capacity of $\left\{\text{B-EC}(\epsilon, m, l)\right\}_{m=1}^{\infty}$ is $(1-\epsilon)\log|\mc{X}|$. 

\begin{remark}
Suppose $l$ is a constant. Then, if we view each block as a symbol from $\mc{X}^l$, the resulting channel becomes a discrete memoryless channel (with input alphabet $|\mc{X}|^l$), whose  channel capacity \cite[Theorem~7.7.1]{cover_thomas} is  $\frac{1}{l}(1-\epsilon)\log|\mc{X}|^l=(1-\epsilon)\log|\mc{X}|$. But, if $l$ grows with $m$, then the input alphabet $|\mc{X}|^l$ also grows with $m$. In this case, though the resulting channel is memoryless, the alphabet size is not fixed. Hence, we cannot directly use results for discrete memoryless channels. Nonetheless, capacity results for general ``information-stable'' channels from \cite{Dobrushin,vembu95,verdu_han_capacity} can be applied to this channel without treating it as a memoryless channel. 
\end{remark}

\begin{definition}[\cite{Dobrushin,vembu95,verdu_han_capacity}]
    A channel $\bW=\{\mathrm{W}^{n}\}_{n \geq 1}$ is said to be \emph{information-stable} if there exists an input process $\bX=\{X^n\}_{n \geq 1}$ such that 
    \begin{align}\label{cond:info_stable}
        \frac{i_{X^nY^n}\left(X^n;Y^n\right)}{n C_n(\mathrm{W}^{n})} \longrightarrow 1
    \end{align}
    in probability, where $i_{X^nY^n}\left(x^n;y^n\right):=\log \frac{P_{Y^n|X^n}\left(y^n|x^n\right)}{P_{Y^n}(y^n)}$ and $C_n(\mathrm{W}^{n}):= \sup_{X^n} \frac{1}nI\left(X^n;Y^n\right)$.
\end{definition}

\begin{theorem}[\cite{Dobrushin,vembu95,verdu_han_capacity}]\label{thm:capacity_stable}
    For an information-stable channel $\bW$, the channel capacity\footnote{ 
    We can even apply this result for arbitrary sequences \cite[p. 1148]{verdu_han_capacity} of channels so long as we use a scaling in \eqref{cond:info_stable} that corresponds to the sequence.} $C(\bW)$ of $\bW$ is given by
    \begin{align*}
        C(\bW)=\liminf_{n \to \infty} \sup_{X^n} \frac{1}nI\left(X^n;Y^n\right).
    \end{align*}
\end{theorem}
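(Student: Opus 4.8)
The plan is to obtain this as the information-stable specialization of the general (Verdú--Han) formula for channel capacity, which I would quote as a black box from \cite{verdu_han_capacity}. That formula states that for an \emph{arbitrary} channel $\bW$ one has $C(\bW)=\sup_{\bX}\underline{I}(\bX;\bY)$, where the supremum ranges over all input processes $\bX=\{X^n\}$ and
\[
\underline{I}(\bX;\bY):=\sup\Big\{\alpha:\ \lim_{n\to\infty}\mathbb{P}\big[\tfrac1n\, i_{X^nY^n}(X^n;Y^n)<\alpha\big]=0\Big\}
\]
is the \emph{inf-information rate}, i.e.\ the $\liminf$ in probability of the normalized information density. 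Writing $\underline{C}:=\liminf_{n\to\infty}C_n(\mathrm{W}^n)$, the entire task then reduces to proving the two matching inequalities $\sup_{\bX}\underline{I}(\bX;\bY)\le\underline{C}$ and $\sup_{\bX}\underline{I}(\bX;\bY)\ge\underline{C}$.

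For the converse inequality $\sup_{\bX}\underline{I}(\bX;\bY)\le\underline{C}$, I would fix an arbitrary input process $\bX$ and establish the bound $\underline{I}(\bX;\bY)\le\liminf_{n}\tfrac1n I(X^n;Y^n)$. The key tool here is the change-of-measure identity $\E\big[2^{-i_{X^nY^n}(X^n;Y^n)}\big]=1$ (obtained by passing from $P_{X^nY^n}$ to $P_{X^n}P_{Y^n}$), which forces $\E[(i_{X^nY^n})^-]\le\log e$ and hence makes the negative part of $\tfrac1n i_{X^nY^n}$ vanish in mean. Consequently the $\liminf$ in probability cannot exceed the $\liminf$ of the means $\tfrac1n I(X^n;Y^n)=\E[\tfrac1n i_{X^nY^n}]$. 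Since $\tfrac1n I(X^n;Y^n)\le C_n(\mathrm{W}^n)$ for every $n$, taking $\liminf$ over $n$ and then the supremum over $\bX$ yields the claim.

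For the achievability inequality I would use the input process $\bX^{*}$ that witnesses information stability, so that $i_{X^nY^n}(X^n;Y^n)/(nC_n)\to1$ in probability. Fixing any $\alpha<\underline{C}$, I would pick $\epsilon>0$ with $\alpha<\underline{C}-\epsilon$; then eventually $C_n>\underline{C}-\epsilon>\alpha$, so $\alpha/C_n<\beta:=\alpha/(\underline{C}-\epsilon)<1$, and
$\mathbb{P}[\tfrac1n i_{X^nY^n}<\alpha]=\mathbb{P}[i_{X^nY^n}/(nC_n)<\alpha/C_n]\le\mathbb{P}[i_{X^nY^n}/(nC_n)<\beta]\to0$.
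This shows $\underline{I}(\bX^{*};\bY)\ge\underline{C}$, and invoking the achievability half of the general formula (a Feinstein-type threshold-decoding argument giving that every rate below $\underline{I}(\bX^{*};\bY)$ is achievable) yields $C(\bW)\ge\underline{C}$.

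The genuinely deep inputs are the general Verdú--Han achievability and converse lemmas, which I would cite rather than reprove. Within the specialization itself, the main obstacle is the clash of the two normalizations: information stability normalizes the information density by $nC_n$, whereas the inf-information rate normalizes by $n$, and the two coincide transparently only when $C_n$ converges. The $\epsilon$-argument above is precisely what tames the general $\liminf$ case — it isolates a threshold $\beta<1$ that survives the fluctuations of $C_n$ — and it is the one place where the precise formulation of the stability hypothesis must be used with care. A secondary technical point, the lower-tail uniform integrability needed in the converse, is rendered routine by the identity $\E[2^{-i_{X^nY^n}}]=1$.
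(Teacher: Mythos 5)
The paper contains no proof of Theorem~\ref{thm:capacity_stable} for you to be compared against: the statement is imported verbatim from \cite{Dobrushin,vembu95,verdu_han_capacity}, with only the footnote about arbitrary sequences of channels added. Judged on its own merits, your proposal is correct, and it is in substance the argument of the cited Verd\'u--Han work: the general formula $C(\bW)=\sup_{\bX}\underline{I}(\bX;\bY)$, specialized via two inequalities. Your converse half is sound and, as you implicitly note, uses no stability hypothesis at all: $\E\bigl[2^{-i_{X^nY^n}(X^n;Y^n)}\bigr]\le 1$ (in general $\le 1$ rather than $=1$, since $P_{X^n}\times P_{Y^n}$ may charge points outside the support of $P_{X^nY^n}$; Markov's inequality only needs the inequality) gives $\mathbb{P}\{i_{X^nY^n}\le-\lambda\}\le 2^{-\lambda}$, hence $\E[(i_{X^nY^n})^{-}]\le\log e$, which is exactly the uniform lower-tail control needed to conclude $\underline{I}(\bX;\bY)\le\liminf_n\tfrac1n I(X^n;Y^n)\le\liminf_n C_n(\mathrm{W}^n)$; this matches the known picture that for general channels the capacity can only fall \emph{below} $\liminf_n C_n$, stability being what excludes a strict gap. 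Your achievability half correctly isolates the one delicate point --- the mismatch between the $nC_n$ normalization in \eqref{cond:info_stable} and the $n$ normalization in $\underline{I}$ --- and the threshold argument (eventually $C_n>\underline{C}-\epsilon$, so $\{\tfrac1n i_{X^nY^n}<\alpha\}\subseteq\{i_{X^nY^n}/(nC_n)<\beta\}$ with $\beta=\alpha/(\underline{C}-\epsilon)<1$, whose probability vanishes by stability) resolves it. Two points you should make explicit in a final write-up: (i) dividing by $C_n$ presupposes $C_n>0$, which your choice of $\epsilon$ guarantees for all large $n$ whenever $\underline{C}>0$, while the case $\underline{C}=0$ is vacuous because the same tail bound already forces $\underline{I}(\bX;\bY)\ge 0$ for every input process; (ii) in the paper's actual application (Corollary~\ref{cor:capacity_bec}) the blocklengths are $ml$ with $l$ growing in $m$, so you must invoke the footnote's extension to arbitrary sequences of channels with the matching normalization --- your argument carries over unchanged, since nothing in it requires the blocklengths to be consecutive integers.
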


\begin{corollary}\label{cor:capacity_bec}
 The sequence of block-erasure channels \emph{$\bW=\left\{\text{B-EC}(\epsilon, m, l)\right\}_{m=1}^{\infty}$} is information-stable, and the channel capacity is $C(\bW)=(1-\epsilon)\log|\mc{X}|$.
\end{corollary}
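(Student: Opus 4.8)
The plan is to invoke Theorem~\ref{thm:capacity_stable}, which reduces the problem to two tasks: (i) identifying the normalized maximal mutual information $C_{ml} := \sup_{X^{ml}} \frac{1}{ml}\,I(X^{ml};Y^{ml})$ together with its $\liminf$, and (ii) exhibiting a single input process for which the information density, normalized by $ml\,C_{ml}$, converges to $1$ in probability. The natural candidate input process is the one that makes each block $\uX_i$ uniform on $\mc{X}^l$, independently across the $m$ blocks (equivalently, $X^{ml}$ uniform on $\mc{X}^{ml}$). Throughout, the one point to keep straight is that the relevant blocklength is $n=ml$, so all normalizations, and the scaling in the information-stability condition \eqref{cond:info_stable}, are with respect to $ml$ rather than $m$; this is exactly the ``arbitrary sequences'' scaling permitted in the footnote to Theorem~\ref{thm:capacity_stable}.

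First I would pin down $C_{ml}$. Since the channel acts independently across blocks, $H(Y^{ml}\mid X^{ml})=\sum_{i=1}^m H(\uY_i\mid\uX_i)=m\,h(\epsilon)$ for \emph{every} input distribution, while subadditivity of entropy gives $H(Y^{ml})\le\sum_{i=1}^m H(\uY_i)$. Writing $E_i$ for the erasure indicator of block $i$, which is recoverable from $\uY_i$ because $\underline{\Delta}\notin\mc{X}^l$, a single-block calculation yields $H(\uY_i)=h(\epsilon)+(1-\epsilon)H(\uX_i)\le h(\epsilon)+(1-\epsilon)\,l\log|\mc{X}|$. Combining these bounds gives $I(X^{ml};Y^{ml})\le ml(1-\epsilon)\log|\mc{X}|$, with equality for the uniform product input; hence $C_{ml}=(1-\epsilon)\log|\mc{X}|$ for every $m$, and since this value is constant in $m$, $\liminf_m C_{ml}=(1-\epsilon)\log|\mc{X}|$.

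Next I would verify information-stability for the uniform product input, where the product structure is again decisive. The information density factors as $i_{X^{ml}Y^{ml}}(X^{ml};Y^{ml})=\sum_{i=1}^m i(\uX_i;\uY_i)$, and a direct per-block computation shows that $i(\uX_i;\uY_i)=l\log|\mc{X}|$ when block $i$ is not erased and $i(\uX_i;\uY_i)=0$ when it is erased. Therefore $i_{X^{ml}Y^{ml}}(X^{ml};Y^{ml})=l\log|\mc{X}|\cdot N_m$, where $N_m=\sum_{i=1}^m(1-E_i)\sim\mathrm{Bin}(m,1-\epsilon)$ counts the non-erased blocks. Dividing by $ml\,C_{ml}=ml(1-\epsilon)\log|\mc{X}|$ collapses everything to $N_m/\big(m(1-\epsilon)\big)$, which tends to $1$ in probability by the weak law of large numbers. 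This establishes \eqref{cond:info_stable}, and Theorem~\ref{thm:capacity_stable} then yields $C(\bW)=\liminf_m C_{ml}=(1-\epsilon)\log|\mc{X}|$.

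The only genuine subtlety, and the reason the corollary is not immediate from textbook results, is that the per-symbol alphabet $\mc{X}^l$ grows with $m$ because $l\sim(\beta-1)\log m$, so the channel, although memoryless across blocks, is not a \emph{fixed} discrete memoryless channel and the standard coding and converse theorems do not apply verbatim. The information-stable framework is precisely what sidesteps this difficulty; once it is in place, the growing alphabet causes no trouble, since the per-block information density takes only the two values $0$ and $l\log|\mc{X}|$ and the whole argument reduces to a law of large numbers for the number of erased blocks. I would double-check the bookkeeping of the normalization by $ml$ and the recoverability of $E_i$ from $\uY_i$, but I expect no essential obstacle beyond these routine points.
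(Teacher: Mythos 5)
Your proposal is correct and follows essentially the same route as the paper's proof: both establish $C_{ml}=(1-\epsilon)\log|\mc{X}|$ via the block decomposition of mutual information, take the i.i.d.\ uniform input, observe that the information density collapses to $l\log|\mc{X}|$ times the number of non-erased blocks, apply the weak law of large numbers to verify \eqref{cond:info_stable}, and conclude via Theorem~\ref{thm:capacity_stable}. The only cosmetic difference is that you derive the single-block bound $I(\uX_i;\uY_i)=(1-\epsilon)H(\uX_i)$ directly from $H(\uY_i)=h(\epsilon)+(1-\epsilon)H(\uX_i)$, whereas the paper cites the erasure-channel capacity formula for input alphabet $\mc{X}^l$.
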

\begin{proof}
First note that $C_{ml}(W^{ml})=C_{ml}(\text{B-EC}(\epsilon, m, l))=\sup_{X^{ml}} \frac{1}{ml}I\left(X^{ml};Y^{ml}\right)=(1-\epsilon)\log|\mc{X}|$, which follows from 
\begin{align}
    I\left(X^{ml};Y^{ml}\right)= H(Y^{ml})-H(Y^{ml}|X^{ml})& \leq \sum_{i=1}^{m}H(\uY_i)-\sum_{i=1}^{m}H(\uY_i|X^{ml},\uY_1,\ldots,\uY_{i-1})\notag\\
    & \stackrel{(a)}{=} \sum_{i=1}^{m}H(\uY_i)-\sum_{i=1}^{m}H(\uY_i|\uX_i)\notag\\
    & = \sum_{i=1}^{m}I(\uX_i;\uY_i)\notag\\
    & \leq m\sup_{\uX} I(\uX;\uY)
    % & = m(1-\epsilon)\sup_{\uX} H(\uX)\\
    \stackrel{(b)}{=} ml(1-\epsilon)\log|\mc{X}|,\label{eq:sup_p2p_capacity_bec}
\end{align}
where $(a)$ is due to the fact that given an input block, the corresponding output block does not depend on the rest of the blocks, and $(b)$ follows from the capacity expression of an erasure channel with input alphabet $\mc{X}^l$. The above supremum is achieved by the input $\hat{X}^{ml}$ whose components are  independent and uniformly distributed random variables. For this input process $\bX=\{\hat{X}^{ml}\}_{m \geq 1}$, we will now show \eqref{cond:info_stable}. Let $\bY=\{\hat{Y}^{ml}\}_{m \geq 1}$ be the corresponding output of $\bW$. Observe that for $i \in [m]$,
\begin{align*}
    P_{\uhY_i(\uy_i)} = \left\{\begin{array}{ll}
        \epsilon & \text{if } \uy_i=\underline{\Delta},\\
        \frac{1-\epsilon}{|\mc{X}|^l} & \text{if } \uy_i \in \mc{X}^l,\\
        0 & \text{otherwise},\end{array} \right.
\end{align*}
which together with \eqref{eq:cond_prob_bec} yield
\begin{align*}
    i_{\hat{X}^{ml}\hat{Y}^{ml}}\left(\hat{X}^{ml};\hat{Y}^{ml}\right)
    % &=\log \frac{P_{\hat{Y}^{ml}|\hat{X}^{ml}}\left(\hat{Y}^{ml}|\hat{X}^{ml}\right)}{P_{\hat{Y}^{ml}}(\hat{Y}^{ml})}\\
    % &=\log \prod_{i=1}^{m}\frac{P_{\uhY_i|\uhX_i}\left(\uhY_i|\uhX_i\right)}{P_{\uhY_i}(\uhY_i)}\\
    &=\sum_{i=1}^{m}\log \frac{P_{\uhY_i|\uhX_i}\left(\uhY_i|\uhX_i\right)}{P_{\uhY_i}(\uhY_i)}= l\log|\mc{X}| \sum_{i=1}^{m}\mathbf{1}_{\left\lbrace\uhY_i = \uhX_i\right\rbrace},
\end{align*}
where the last equality holds with probability one.
By the weak law of large numbers, for every $\gamma>0$, 
\begin{align}\label{eq:convergence}
\lim_{m \to \infty}& \Pr\left\{\left|\frac{i_{\hat{X}^{ml}\hat{Y}^{ml}}\left(\hat{X}^{ml};\hat{Y}^{ml}\right)}{{ml} C_{ml}(W^{ml})} -  1\right| > \gamma\right\}  = \lim_{m \to \infty} \Pr\left\{\left|\frac{\sum_{i=1}^{m}\mathbf{1}_{\left\lbrace\uhY_i = \uhX_i\right\rbrace}}{m(1-\epsilon)}-1 \right|> \gamma \right\} = 0,
\end{align}
which proves that $\bW$ satisfies the information-stable condition \eqref{cond:info_stable}. So, by Theorem~\ref{thm:capacity_stable}, the channel capacity is  $C(\bW)=\liminf_{m \to \infty} C_{ml}(\text{B-EC}(\epsilon, m, l)) = (1-\epsilon)\log|\mc{X}|$.
\end{proof}

\begin{remark}
    The convergence in \eqref{eq:convergence} is  exponential in $m$, i.e., for every $\alpha>0$,
    \begin{align}\label{eq:exp_convergence}
    \Pr\left\{\left|\frac{\sum_{i=1}^{m}\mathbf{1}_{\left\lbrace\uhY_i = \uhX_i\right\rbrace}}{m}-(1-\epsilon) \right|> \alpha \right\} \leq 2\cdot 2^{-2m\alpha^2},
    \end{align}
    which follows from  application of Hoeffding's inequality to the indicator random variables that are independent. Therefore, the sequence of block-erasure channels $\left\{\text{B-EC}(\epsilon, m, l)\right\}_{m=1}^{\infty}$ are indeed ``exponentially\footnote{Strictly speaking, the notion of exponentially information-stable requires the left-hand side of \eqref{eq:exp_convergence} to decay exponentially fast in $ml$; however, exponential decay in $m$ is sufficient for our purposes in the next section.} information-stable" \cite[Remark~3]{bloch_strong_secrecy}.
\end{remark}

\subsection{Block-Erasure Wiretap Channel}
In a \emph{block-erasure wiretap channel}, both the main and wiretap channels are block-erasure channels. We use the notation $\text{B-EWTC}(\epm, \epw,  m, l)$ to denote a block-erasure wiretap channel with $\text{B-EC}(\epm, m, l)$ and $\text{B-EC}(\epw, m, l)$ as the main and wiretap channels, respectively, where $m, l \in \mathbb{N}$ --- see Fig.~\ref{fig:erasure_wtc}.

\begin{figure}[h]
\centering
\resizebox{0.9\width}{!}{\tikzstyle{block}=[rectangle, draw, thick, minimum width=2em, minimum height=2em]

\begin{tikzpicture}[node distance=3.2cm,auto,>=latex']
    \node (x) {$(X_1, \ldots, X_{ml})$};
    \node (dummy) [right of=x] {};
    \node [block, align=center] (main) [above of = dummy, node distance=0.8cm] {$\text{B-EC}(\epm, m, l)$};
    \node [block, align=center] (wiretap) [below of = dummy, node distance=0.8cm] {$\text{B-EC}(\epw, m, l)$};
    \node (y) [right of=main] {$(Y_1, \ldots, Y_{ml})$};
    \node (z) [right of=wiretap] {$(Z_1, \ldots, Z_{ml})$};

    \draw[->, thick] (x.east) --  (main.west);
    \draw[->, thick] (x.east) --  (wiretap.west);
    \draw[->, thick] (main) --  (y);
    \draw[->, thick] (wiretap) --  (z);

\end{tikzpicture}

% \begin{tikzpicture}[node distance=3.2cmcm,auto,>=latex']
%     \node (x) {$(X_1, \ldots, X_M)$};
%     \node (dummy) [right of=x] {};
%     \node [block, align=center] (main) [above of = dummy, node distance=1cm] {$\text{B-EC}(\epm, m, l)$};
%     \node [block, align=center] (wiretap) [below of = dummy, node distance=1cm] {$\text{B-EC}(\epw, m, l)$};
%     \node (y) [right of=main] {$(Y_1, \ldots, Y_M)$};
%     \node (z) [right of=wiretap] {$(Z_1, \ldots, Z_M)$};
%     \node (w) [left of=x, node distance=2.5cm] {$W$};
%     \node (alice) [above right of=w, node distance=1.3cm] {Alice};
%     \node (w_hat) [right of=y, node distance=2.5cm] {$\widehat{W}$};
%     \node (bob) [above of= y, node distance=0.9cm] {Bob};
%     \node (eve) [below of= z, node distance=0.9cm] {Eve};
%     \node (confused) [right of=z, node distance=2.5cm] {$??$};
    
%     \draw[->, thick] (x.east) --  (main.west);
%     \draw[->, thick] (x.east) --  (wiretap.west);
%     \draw[->, thick] (main) --  (y);
%     \draw[->, thick] (wiretap) --  (z);
%     \draw[->, thick] (w) -- (x.west);
%     \draw[->, thick] (y.east) -- (w_hat);
%     \draw[->, thick] (z.east) -- (confused);    

% \end{tikzpicture}}
\caption{Block-erasure wiretap channel $\text{B-EWTC}(\epm, \epw,  m, l)$.}
\label{fig:erasure_wtc}
\end{figure}

Let $\bW=\left\{\text{B-EWTC}(\epm, \epw,  m, l)\right\}_{m=1}^{\infty}$ and $l\sim (\beta -1)\log m$. The \emph{secrecy capacity} of a block-erasure wiretap channel is defined similarly to the secure storage capacity of a DNA wiretap channel in Section~\ref{sec:problem} and Section~\ref{sec:problem_shared_key}. To be precise, let $W$ be a uniform random variable taking values in $\mathcal{W}\triangleq \left[2^{\lfloor mlR \rfloor}\right]$ for some $R \geq 0$. A uniform random variable $K \in \mathcal{K}\triangleq \big[2^{\lfloor ml\trk \rfloor}\big]$, called a \emph{shared key}, is available only at Alice and Bob for some $\trk \geq 0$. Alice encodes her message $W$, which is independent of $K$, into a length-$ml$ sequence over the alphabet $\mc{X}$ using an encoding function $f: \mathcal{W} \times \mc{K} \longrightarrow \mc{X}^{ml}$. The encoded sequence $X^{ml}$ is the input to $\text{B-EWTC}(\epm, \epw,  m, l)$. Bob observes the output $Y^{ml}$ of the main channel $\text{B-EC}(\epm, m, l)$, and similarly, Eve observes the output $Z^{ml}$ of the wiretap channel $\text{B-EC}(\epw, m, l)$. Bob uses a decoding function $g: \mc{Y}^{ml} \times \mc{K} \longrightarrow \mathcal{W} \cup \{\textbf{e}\}$, where $\textbf{e}$ denotes an error. We say that a secure message transmission rate $R$ is achievable  with a shared key of rate $\trk$ if there exists a sequence of pairs of encoding and decoding functions $\{(f,g)\}_{m=1}^{\infty}$ satisfying Bob's recoverability condition,
 \begin{align} \label{eq:bob_recoverability_erasurewtc}
     \Pr\left\{g(Y^{ml}, K) \neq W\right\} \to 0,
 \end{align}
 and Eve's (strong) secrecy condition,
 \begin{align}\label{eq:eve_secrecy_erasurewtc}
    I(W ; Z^{ml}) \to 0,
 \end{align}
 as $m \to \infty$.  The secrecy capacity of a block-erasure wiretap channel with a shared key of rate $\trk$ is defined by 
  $C_s(\bW, \trk) \triangleq \sup\{R: R \text{ is achievable with a shared key of rate } \trk \}$.

As the block-erasure wiretap channel is not memoryless, if $\trk=0$, then we can obtain a characterization of $C_s(\bW, 0)$ using Theorem~2 of \cite{bloch_strong_secrecy} together with Remark~3 of \cite{bloch_strong_secrecy}, whose results can be applied to arbitrary wiretap channels. These results were obtained using ``channel resolvability" arguments. However, in the case of $\trk>0$, the secrecy capacity is known for memoryless wiretap channels from the work of \cite{rafael_shared_key, kang_shared_key}. We combine the ideas from \cite{rafael_shared_key, kang_shared_key} with the channel resolvability arguments of \cite{bloch_strong_secrecy} to derive an expression for the (strong) secrecy capacity $C_s(\bW, \trk)$ in the following theorem. To simplify the presentation of the result, we assume $\mc{X}=\{0,1\}$; however, for an arbitrary finite alphabet $\mc{X}$, there will be an extra factor, $\log |\mc{X}|$, accompanying the terms $\epw-\epm$ and $1-\epm$ in the result.

\begin{theorem}\label{thm:capacity_bewtc}
For a sequence of block-erasure wiretap channels \emph{$\bW=\left\{\text{B-EWTC}(\epm, \epw,  m, l)\right\}_{m=1}^{\infty}$} with a shared key available to Alice and Bob of rate $\trk$ and $l \sim (\beta -1)\log m$, we have 
\emph{
\begin{align}\label{eq:capacity_bewtc}
    C_s(\bW, \trk) = \min\left\{\trk+[\epw-\epm]^{+}, 1-\epm\right\}.
\end{align} }
Here $[u]^{+} \triangleq \max\{u,0\}$.
\end{theorem}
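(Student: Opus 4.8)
The plan is to recognize this as the standard shared-key wiretap formula $C_s(\bW,\trk)=\min\{C_M,\,C_s^{(0)}+\trk\}$, where $C_M=1-\epm$ is the main-channel capacity and $C_s^{(0)}=[\epw-\epm]^+$ is the no-key secrecy capacity, specialized to the block-erasure wiretap channel. The only genuinely new feature relative to the memoryless treatments of \cite{rafael_shared_key,kang_shared_key} is that the symbol alphabet $\mc{X}^l$ grows with $m$; this is exactly what the (exponential) information-stability established in Corollary~\ref{cor:capacity_bec} and the subsequent remark is meant to absorb, so that the channel-resolvability machinery of \cite{bloch_strong_secrecy} can be invoked despite $|\mc{X}^l|\to\infty$.

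For achievability I would first dispose of the $\trk=0$ case: since $\bW$ is exponentially information-stable, Theorem~2 together with Remark~3 of \cite{bloch_strong_secrecy} yields a resolvability-based wiretap code achieving every rate below $[\epw-\epm]^+$ with strong secrecy $I(W;Z^{ml})\to 0$. To bring in the key, I would layer a one-time pad on top, in the style of \cite{rafael_shared_key,kang_shared_key}: split the message as $W=(W_1,W_2)$, mask $W_1$ by the key to form $\tilde W_1 = W_1\oplus K$ (transmitted as ordinary data, so that $W_1$ is perfectly protected by the independent uniform key), and protect $W_2$ by the resolvability code. The crucial accounting point is that the masked bits $\tilde W_1$ themselves double as confusion randomness for Eve, so that resolvability-based secrecy of $W_2$ only requires the combined rate of $\tilde W_1$ and the dummy symbols to reach Eve's capacity $1-\epw$. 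Reliability at Bob holds whenever the total transmitted rate stays below $C_M=1-\epm$ (Corollary~\ref{cor:capacity_bec}), and optimizing the split of $W$ produces exactly $R\le \min\{\trk+[\epw-\epm]^+,\,1-\epm\}$. Strong secrecy of the pair reduces to the resolvability estimate for $W_2$ together with the independence of $W_1$ from Eve's view given $W_2$.

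For the converse I would prove the two bounds separately. The bound $R\le 1-\epm$ follows from Bob's recoverability: using $W\perp K$ and Fano's inequality, $mlR=H(W\mid K)\le I(W;Y^{ml}\mid K)+o(ml)\le I(W,K;Y^{ml})+o(ml)\le ml\,C_M+o(ml)$, the last step by Corollary~\ref{cor:capacity_bec}. For the bound $R\le \trk+[\epw-\epm]^+$ I would combine the secrecy constraint with the stochastic ordering of erasure channels. Writing $mlR=H(W\mid K)\le I(W;Y^{ml}\mid K)+o(ml)\le I(W;Y^{ml})+H(K)+o(ml)$ and invoking $I(W;Z^{ml})\to 0$ gives $mlR\le [I(W;Y^{ml})-I(W;Z^{ml})]+ml\trk+o(ml)$. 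When $\epw\ge\epm$ the channel $Z^{ml}$ is a degraded version of $Y^{ml}$, so $W-X^{ml}-Y^{ml}-Z^{ml}$ and $I(W;Y^{ml})-I(W;Z^{ml})=I(W;Y^{ml}\mid Z^{ml})\le I(X^{ml};Y^{ml}\mid Z^{ml})\le ml(\epw-\epm)$, the last step being a per-block computation $I(\uX_i;\uY_i\mid\uZ_i)\le l(\epw-\epm)$ for the coupled erasure blocks, combined with single-letterization; when $\epw<\epm$ the ordering reverses and $I(W;Y^{ml})-I(W;Z^{ml})\le 0$, giving $R\le\trk$. Either way $R\le \trk+[\epw-\epm]^+$.

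The step I expect to be the main obstacle is the achievability under a positive key rate: one must show that layering a one-time pad on a resolvability-based wiretap code preserves strong secrecy for the growing-alphabet block-erasure channel, i.e.\ that the resolvability estimate of \cite{bloch_strong_secrecy} stays valid when part of the confusion randomness is supplied by $\tilde W_1$ rather than by dummy symbols, and that $I(W;Z^{ml})\to 0$ survives the composition. This is precisely where the exponential information-stability does the real work, since it is what forces the resolvability bound to decay despite $|\mc{X}^l|\to\infty$.
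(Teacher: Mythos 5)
Your proposal is correct, and its converse is exactly the paper's: the bound $R \le 1-\epm$ from Fano's inequality plus Corollary~\ref{cor:capacity_bec}, and the bound $R \le \trk+[\epw-\epm]^{+}$ from replacing $P_{Y^{ml},Z^{ml}|W}$ by a coupled distribution in which the weaker output is degraded with respect to the stronger one, then bounding $I(X^{ml};Y^{ml}\mid Z^{ml}) \le ml(\epw-\epm)$ block by block --- this is precisely the chain \eqref{ineq:conv_13}--\eqref{ineq:conv_14}. The genuine difference is architectural, in how the key enters the achievability scheme. The paper does not layer a one-time pad over a no-key wiretap code: it builds a single random codebook indexed by triples $(w,k,\tw)$, lets Bob decode knowing $k$ (so reliability needs $R+\tR \le (1-\epm)-2\gamma$), and proves secrecy by resolvability over the pair $(k,\tw)$ (so secrecy needs $\trk+\tR \ge (1-\epw)+\gamma$); it also handles $\epw<\epm$ separately by a pure one-time pad with perfect secrecy. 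Your Kang--Liu-style split $W=(W_1,W_2)$ with $\tilde{W}_1=W_1\oplus K$ doubling as confusion randomness achieves the same rate region, and your secrecy decomposition $I(W;Z^{ml})=I(W_2;Z^{ml})+I(W_1;Z^{ml}\mid W_2)$, with the second term vanishing by the pad, is sound; its advantage is modularity, since the key mechanism sits on top of the code rather than inside it. Two caveats, though. First, the step you flag as the main obstacle cannot be black-boxed away: whether the confusion index is $(k,\tw)$ or $(\tilde{W}_1,D)$, one must redo the resolvability estimate for the enlarged sub-codebooks, which is exactly the content of the paper's Lemma~\ref{lem:code_exist_bewtc}; its proof uses the exponential information stability \eqref{eq:exp_convergence} together with $l\sim(\beta-1)\log m$, and then Lemma~\ref{lemm:csis:secrecy} to convert total variation into KL divergence even though $|\mc{W}|$ grows like $2^{mlR}$. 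So your route requires the same lemma-level work as the paper's, just phrased for a differently indexed codebook. Second, when $\epw<\epm$ your layered scheme works only because it degenerates: the requirement that the confusion rate reach $1-\epw$ is then incompatible with reliability at rate below $1-\epm$, so $W_2$ must be empty and secrecy comes entirely from the pad rather than from resolvability. Your write-up should state this explicitly, as the paper does in its case $(i)$.
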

\begin{proof}
We will prove \eqref{eq:capacity_bewtc} by considering two separate cases: $\epw <\epm$ and $\epw \geq \epm$.\\

 $(i) \ \epw < \epm:$  In this case, we need to show that \begin{align}\label{eq:capacity_bewtc_w_leq_m}
    C_s(\bW, \trk) = \min\left\lbrace\trk, 1-\epm \right\rbrace.
\end{align}
\underline{Achievability part}: If $\trk > 1-\epm$, then Alice and Bob can utilize only a part of the key $K$ of rate $1-\epm$ for message transmission and discard the remaining part. So, if  we show that when $\trk \leq 1-\epm$, $\trk$ is an achievable secure message transmission rate, then $\min\left\lbrace\trk, 1-\epm \right\rbrace \leq C_s(\bW, \trk)$.

Let $\trk \leq 1-\epm$. Alice uses the ``one-time pad" scheme to encode the message $W \in \big[2^{\lfloor ml\trk \rfloor}\big]$ using the shared key $K \in \big[2^{\lfloor ml\trk \rfloor}\big]$. It means that Alice computes ciphertext $$C:= W + K \bmod 2^{\lfloor ml\trk \rfloor}$$ and encodes $C$ into $\mc{X}^{ml}$ for transmission on $\text{B-EWTC}(\epm, \epw,  m, l)$. It is easy to see that $C$ is uniform over its range because $W$ and $K$ are independent uniform random variables.

Since the rate of the ciphertext $\trk$ is less than the channel capacity of the main block-erasure channel $1 - \epm$, Corollary~\ref{cor:capacity_bec} guarantees the existence of a sequence of codes such that $\lim_{m \to \infty}\Pr\{\widehat{C} \neq C\}=0,$ where $\widehat{C}$ is Bob's estimate for $C$. As Bob also has the shared key $K$, he can recover the message $W$ from $\widehat{C}$ by computing $\widehat{W}:=\widehat{C} + K$. Therefore, we have
\begin{align*}
    \Pr\lbrace\widehat{W} \neq W\rbrace = \Pr\lbrace\widehat{C} + K \neq C + K\rbrace= \Pr\{\widehat{C} \neq C\}\to 0
\end{align*}
as $m \to \infty$, which satisfies  Bob's recoverability condition \eqref{eq:bob_recoverability_erasurewtc}.

It remains to verify that the above scheme achieves the strong secrecy condition \eqref{eq:eve_secrecy_erasurewtc}. By virtue of the underlying one-time pad, we can, in fact, argue the strongest form of secrecy, namely, the perfect secrecy, i.e., $I(W; Z^{ml})=0$.  To see this, first note the above encoding scheme induces a distribution on the random variables that satisfy the Markov chain $(W,K) - C - X^{ml} - \left(Y^{ml}, Z^{ml}\right)$. By data processing inequality, $I(W; Z^{ml}) \leq I(W; C)=0$, where the last inequality follows from the independence of $W$ and $C$. Hence,  $\trk$ is an achievable secure message transmission rate.

\noindent\underline{Converse part:} Suppose that there is a sequence of pairs of encoding and decoding functions satisfying conditions \eqref{eq:bob_recoverability_erasurewtc} and \eqref{eq:eve_secrecy_erasurewtc} with secure message transmission rate $R$. First let us show that $R \leq 1- \epm$ irrespective of how $\epw$ and $\epm$ are related. 
% It is straightforward that any achievable secure transmission rate for $\bW$ is also an achievable transmission rate  for the main channel $\{\text{B-EC}(\epm, m, l)\}_{m=1}^{\infty}$ (without out secrecy constraint \eqref{eq:eve_secrecy_erasurewtc}). Therefore, $C_s(\bW, \trk)$ is upper bounded by the channel capacity of the main channel when Alice and Bob shares a key $K$. Without a shared-key $K$, the channel capacity of the main channel is $1-\epm$ (Corollary~\ref{cor:capacity_bec}). It is easy to see that the channel capacity is not affected by $K$. To this end, let  $\{(f,g)\}_{m=1}^{\infty}$ be a sequence of encoding and decoding functions satisfying the recoverability condition \eqref{eq:bob_recoverability_erasurewtc} for the main channel with shared-key. This condition can be written as 
%  \begin{align} 
%      \sum_{k \in \mc{K}} \frac{1}{|\mc{K}|}\Pr\left\{\left.g(Y^{ml}, k) \neq W\right| K =k\right\}  \to 0,
%  \end{align}
\begin{align}
        mlR -1  \leq \log |\mathcal{W}|  = H(W) = H(W|K)
        & \stackrel{(a)}{=} H(W|K)-H(W \mid Y^{ml}, K)+mlR\delta_m +1\notag\\
        & = I(W ; Y^{ml} \mid K)+ mlR\delta_m +1\notag\\
        & \leq  I(W, K ; Y^{ml})+ mlR\delta_m +1 \label{ineq:conv_12}\\
        & \stackrel{(b)}{\leq}  I(X^{ml} ; Y^{ml})+ mlR\delta_m +1\notag
\end{align}
for $\delta_m \to 0$ as $m \to \infty$, where $(a)$ is because of  Bob's recoverability condition \eqref{eq:bob_recoverability_erasurewtc} and Fano's inequality, and  $(b)$  follows by applying data processing inequality to the Markov chain $(W, K)-X^{ml}-Y^{ml}.$ Therefore, we have
\begin{align*}
    R\leq \liminf_{m \to \infty}\frac{1}{ml} \sup_{X^{ml}} I\left(X^{ml};Y^{ml}\right)=1-\epm,
\end{align*}
where the equality follows from  \eqref{eq:sup_p2p_capacity_bec} of Corollary~\ref{cor:capacity_bec}. This implies that 
\begin{align}\label{ineq:conv_ew leq em_1}
    C_s(\bW, \trk) \leq  1-\epm
\end{align}
which holds for all $\epm$ and $\epw$.

Now we show that if $\epw < \epm$, then $C_s(\bW, \trk)$ is upper bounded by $\trk$. We proceed from the inequality \eqref{ineq:conv_12} as follows:
\begin{align}
        mlR  \leq  I(W, K ; Y^{ml})+ mlR\delta_m +2&  = I(K ; Y^{ml}|W)+ I(W ; Y^{ml})+ mlR\delta_m +2\notag \\
        & \leq H(K) + I(W ; Y^{ml})+ mlR\delta_m +2 \notag \\
        & \stackrel{(b)}{\leq} H(K) + I(W ; Y^{ml}) - I(W ; Z^{ml})+ mlR\delta_m +\delta'_m+2, \label{ineq:conv_13}
\end{align}
where $(b)$ follows from Eve's secrecy condition \eqref{eq:eve_secrecy_erasurewtc} with $\delta'_m \to 0$ as $m \to \infty$. Since the difference $I(W ; Y^{ml}) - I(W ; Z^{ml})$ depends only on the marginal distributions $P_{Y^{ml}|W}$ and $P_{Z^{ml}|W}$, we can evaluate this difference for a new distribution $\tilde{P}_{Y^{ml}, Z^{ml}|W}$ with the same marginals. If $\epw<\epm$, then the output $Y^{ml}$ of $\text{B-EC}(\epm, m, l)$ is a degraded version of the output $Z^{ml}$ of $\text{B-EC}(\epw, m, l)$. This allows us to consider a new distribution $\tilde{P}_{Y^{ml}, Z^{ml}|W}$ that satisfies the Markov chain $W-X^{ml}-Z^{ml}-Y^{ml}$. Hence, we can further bound \eqref{ineq:conv_13} as
\begin{align*}
        mlR &\leq H(K) + I(W ; Y^{ml}) - I(W ; Z^{ml})+ mlR\delta_m +\delta'_m+2 \\
        & \stackrel{(a)}{=} H(K) - I(W ; Z^{ml}| Y^{ml})+ mlR\delta_m +\delta'_m+2 \\
        % & \leq H(K) + mlR\delta_m +\delta'_m+2\\
        & \leq ml\trk + mlR\delta_m +\delta'_m+2,
\end{align*}
where $(a)$ follows from the Markov chain $W-X^{ml}-Z^{ml}-Y^{ml}$ for the coupled distribution $\tilde{P}_{Y^{ml}, Z^{ml}|W}$. By normalizing and taking limits on both sides of the above inequality, we get $R \leq \trk$, which means that 
\begin{align}\label{ineq:conv_ew leq em_2}
     C_s(\bW, \trk) \leq  \trk.
\end{align}
By combining \eqref{ineq:conv_ew leq em_1} and \eqref{ineq:conv_ew leq em_2}, we prove the converse result $C_s(\bW, \trk) \leq \min\left\lbrace\trk, 1-\epm \right\rbrace.$
\\

\noindent$(ii) \ \epw \geq \epm:$ In this case, \eqref{eq:capacity_bewtc} becomes
\begin{align*} %\label{eq:capacity_bewtc_w_geq_m}
    C_s(\bW, \trk) = \min\left\{\trk+(\epw-\epm), 1-\epm\right\}.
\end{align*}
\noindent\underline{Achievability part:} We will show that $\min\{\trk+(\epw-\epm), 1-\epm\}$ is an  achievable secure message transmission rate. Fix 
$\gamma > 0$. Let $K \in  \big[2^{\lfloor ML\trk \rfloor}\big]$ be the shared key of rate $\trk$ available with Alice and Bob. Let  $R\geq 0$ and $\tR\geq 0$ be the rates of the message $W \in \left[2^{\lfloor MLR \rfloor}\right]$ and the auxiliary message $\tW \in  \big[2^{\lfloor ML\tR \rfloor}\big]$, respectively. The main purpose of using an auxiliary message is to confuse Eve. We will now present a scheme that achieves the desired rate.

\begin{itemize}[leftmargin=*]
    \item \emph{Codebook generation:} We denote by $P_{\hat{X}^{ml}}$ the distribution of the sequence  $\hat{X}^{ml}$ whose components are independent and uniformly distributed random variables. For each $(w,k,\tw) \in \left[2^{\lfloor mlR \rfloor}\right] \times \big[2^{\lfloor ml\trk \rfloor}\big] \times \big[2^{\lfloor ml\tR \rfloor}\big]$, pick a codeword $x^{ml}_{w,k,\tw} \in \mc{X}^{ml}$  independently of the other codewords according to the distribution $P_{\hat{X}^{ml}}$. The collection of these codewords is called a \emph{codebook}. Let $\code$ denote a realization of the random codebook $\randcode$.
    
    \item \emph{Encoding:}  Suppose that Alice and Bob observe the shared key value $K=k$, and Alice wishes to transmit message $W=w$ to Bob. Then, given a codebook $\code$, Alice chooses an auxiliary message $\Tilde{W}=\tw$  uniformly at random from  $\big[2^{\lfloor ml\tR \rfloor}\big]$ and transmits the codeword  $x^{ml}_{w,k,\tw} \in \code$ corresponding to the triple $(w,k,\tw)$.
     \item \emph{Decoding:} After receiving $y^{ml}$ (the output of the main channel), Bob recovers the transmitted message using the shared key value $K=k$ as follows. Let 
     \begin{align*}
         \mc{T}^{ml}\triangleq\left\lbrace\left(x^{ml}, y^{ml}\right) \in \mc{X}^{ml} \times \mc{Y}^{ml} :  \left|\frac{1}{ml}\log \frac{P_{\hat{Y}^{ml}|\hat{X}^{ml}}\left(y^{ml}|x^{ml}\right)}{P_{\hat{Y}^{ml}}\left(y^{ml}\right)} - (1-\epm)\right|< \gamma\right\rbrace,
     \end{align*}
    where $P_{\hY^{ml}|\hX^{ml}}$ is the transitional probability of the main block-erasure channel, and $P_{\hY^{ml}}$ is the output distribution of  Bob's channel  when the input distribution is $P_{\hX^{ml}}$. If there exists a unique index pair $(w,\tw)$ such that   $x^{ml}_{w,k,\tw} \in \code$ and $\left(x^{ml}_{w,k,\tw}, y^{ml}\right)\in \mc{T}^{ml}$, then Bob outputs $\widehat{W}=w$. In any other case, Bob declares an error. 
\end{itemize}

Let $P_e(\code)$ denote the average probability of error in recovering the transmitted message when codebook $\code$ is used. We write $\Bar{P}_e$ to denote $P_e(\code)$ averaged over the codebook ensemble, i.e., $\bar{P}_e:=\E_{\randcode}\left[P_e(\randcode)\right]$. Let $\skl(\code):= D_{\op{\scriptscriptstyle KL}}\left(P_{W \Bar{Z}^{ml}}\|P_{W}P_{\Bar{Z}^{ml}}\right)=I(W;\Bar{Z}^{ml})$ and $\stv(\code):= \linebreak\dtv\left(P_{W\Bar{Z}^{ml}}, P_{W}P_{\Bar{Z}^{ml}}\right)$, where $P_{W, \Bar{Z}^{ml}}$ is the joint distribution induced by $\code$ on $W$ and the output of Eve's channel. Here, $D_{\op{\scriptscriptstyle KL}}$ and $\dtv$ denote the Kullback-Liebler divergence and the total variation distance between distributions, respectively. 

\begin{lemma}\label{lem:code_exist_bewtc}
Let $\gamma>0$. There exists a sequence of codes $\{\code\}_{m=1}^{\infty}$ such that 
\begin{enumerate}
    \item if \emph{$R+\tR \leq (1-{\epm})-2\gamma$}, then $\lim_{m \to \infty} P_e(\code) =0,$
    \item if \emph{$\trk+\tR \geq (1-{\epw})+\gamma$}, then $\lim_{m \to \infty} \skl(\code) =0.$
\end{enumerate}
\end{lemma}

\noindent\emph{Proof of Lemma~\ref{lem:code_exist_bewtc}.} We prove Lemma~\ref{lem:code_exist_bewtc} along the lines of the proofs of Lemma~4 and Lemma~5 of \cite{bloch_strong_secrecy} by appropriately incorporating a shared key. We crucially use the condition $l \sim (\beta-1)\log m$ and the fact that Eve's  channel $\text{B-EC}(\epw, m, l)$ is exponentially information-stable \eqref{eq:exp_convergence} to prove this lemma.   

For $\gamma>0$, we will show that if $R+\tR \leq (1-\epm)-2\gamma$ and  $\trk+\tR \geq (1-\epw)+\gamma$, then 
\begin{align}\label{ineq:bounds_0}
    \bar{P}_e < \delta_m \quad \text{\  and  \ } \quad E\left[\stv\left(\randcode\right)\right] <  2^{-\alpha m},
\end{align} 
for some $\delta_m \to 0$ as $m \to \infty$ and $\alpha >0$. Then, by Markov's inequality,
% we have
% \begin{align*}
%     \Pr&\left\lbrace P_e(\randcode) < 2\delta_m,\  \stv\left(\randcode\right) < 2^{-\frac{\alpha}{2} m} \right\rbrace \\
%     % =1 - \Pr\left\lbrace \bar{P}_e > \delta_m \text{ or } \E_{\randcode} \left[\stv\left(\randcode\right)\right] >  2^{-\alpha M} \right\rbrace\\
%     &\geq 1- \Pr\left\lbrace P_e(\randcode) \geq 2\delta_m  \right\rbrace - \Pr\left\lbrace  \stv\left(\randcode\right) \geq  2^{-\frac{\alpha}{2} m} \right\rbrace\\
%     &\geq 1 - \frac{1}{2} - \frac{2^{-\alpha m}}{2^{-\frac{\alpha}{2} m}}\\
%     &= \frac{1}{2} - 2^{-\frac{\alpha}{2} m} >0
% \end{align*}
% for all sufficiently large $m$, which means that 
we can conclude that there exists a sequence of  codebooks satisfying  
\begin{align}\label{ineq:bounds_1}
    P_e(\code) < 2\delta_m \quad \text{\  and  \ } \quad \stv\left(\code\right) <  2^{-\frac{\alpha}{2} m}.
\end{align} 
for some $\delta_m \to 0$ as $m \to \infty$ and $\alpha >0$. Now, by applying the following lemma, we can find the limit of $\skl(\code)$ for this sequence of codebooks.
% Let $|\mc{W}|$ be the cardinality of $\mc{W}$, which is the range of $W$.
\begin{lemma}[Lemma~1 of \cite{csiszar04}]\label{lemm:csis:secrecy} If $|\mc{W}|\geq 4$, then
    \begin{align*}
        \skl \leq \stv \log \frac{|\mc{W}|}{\stv}.
    \end{align*}
\end{lemma}
As $\stv\left(\code\right) <  2^{-\frac{\alpha}{2} m}$ for all sufficiently large $m$ and $|\mc{W}|= 2^{\lfloor mlR \rfloor}$, we have $\lim_{m \to \infty}\skl\left(\code\right) \leq \lim_{m \to \infty} [ 2^{-\frac{\alpha}{2} m} mlR - \stv\left(\code\right) \log \stv\left(\code\right)] =0,$
% \begin{align*}
%     \lim_{m \to \infty}\skl\left(\code\right) 
%     % &\leq  \lim_{m \to \infty} \stv\left(\code\right) \log \frac{|\mc{W}|}{\stv\left(\code\right)}\\
%     % &=  \lim_{m \to \infty} \left[\stv\left(\code\right) \log |\mc{W}| - \stv\left(\code\right) \log \stv\left(\code\right)\right]\\
%         &\leq  \lim_{m \to \infty}\left[ 2^{-\frac{\alpha}{2} m} mlR - \stv\left(\code\right) \log \stv\left(\code\right)\right] \\
%         & = 0,
% \end{align*}
where the last equality is due to the fact that $l \sim (\beta-1)\log m$ and $\lim_{x \to 0} x\log x =0$. Lemma~\ref{lemm:csis:secrecy} together with \eqref{ineq:bounds_1} would complete the proof of Lemma~\ref{lem:code_exist_bewtc}.

Now it remains to prove \eqref{ineq:bounds_0}. Let us analyze the average probability of error for the above coding scheme. Given a codebook $\code$, denote by $D_{w,k,\tw}$ the set   $\big\lbrace y^{ml} : \big(x^{ml}_{w,k,\tw}, y^{ml}\big) \in \mc{T}^{ml} \big\rbrace$, and by $P_e(\code|W=w, K=k,\tW=\tw)$ the probability of error when Alice transmits the codeword word  $x^{ml}_{w,k,\tw} \in \code$ corresponding to the triple $(w,k,\tw)$. We have
        % \begin{align*}
    %     P_e(\code) = \sum_{(w,k,\tw)}\frac{1}{2^{\lfloor MLR \rfloor  + \lfloor ML\tR \rfloor}} P_e(\code|W=w, K=k,\tW=\tw),
    % \end{align*}
    % where
    \begin{align*}
        &P_e(\code|W=w, K=k,\tW=\tw)\\  
        % &= \Pr \Big\lbrace   D^c_{w,k,\tw} \cup \Big(\cup\limits_{(r,t) \neq (w,\tw) }D_{r,k,t} \Big)\Big|X_{w,k,\tw}^{ml}= x^{ml}_{w,k,\tw}\Big\rbrace\\
        &\qquad\leq  \Pr \left\lbrace  D^c_{w,k,\tw}\Big|X_{w,k,\tw}^{ml}= x^{ml}_{w,k,\tw} \right\rbrace + \sum_{(r,t) \neq (w,\tw)} \Pr \left\lbrace   D_{r,k,t}\Big|X_{w,k,\tw}^{ml}= x^{ml}_{w,k,\tw}\right\rbrace,
    \end{align*}
    where the last inequality follows from the union bound. By taking expectation with respect to $W, K, \tW$, and codebook ensemble, we obtain
        \begin{align*}
        \bar{P}_e 
        % & = \E_{\randcode}\E_{W,K,\tW}P_e(\randcode|W,K,\tW) \\
        = \E_{W,K,\tW} \E_{\randcode} P_e(\randcode|W,K,\tW) 
        & \stackrel{(a)}{=} \E_{\randcode} P_e(\randcode|W=1,K=1,\tW=1)\\
        &\leq   \E_{\randcode} \Pr \left\lbrace  D^c_{1,1,1}\big| X_{1,1,1}^{ml} \right\rbrace + \sum_{(w,\tw) \neq (1,1) } \E_{\randcode} \Pr \left\lbrace \left.   D_{w,1,\tw}\right| X_{1,1,1}^{ml}\right\rbrace\\
        &=   P_{\hX^{ml}\times\hY^{ml}} \left[\left(\mc{T}^{ml}\right)^c\right] + 2^{\lfloor mlR \rfloor  + \lfloor ml\tR \rfloor} \left(P_{\hX^{ml}}\times P_{\hY^{ml}}\right)\left(\mc{T}^{ml} \right)\\
        % &\leq o(1) + 2^{\lfloor mlR \rfloor  + \lfloor ml\tR \rfloor} 2^{-ml[(1-\epm)-\gamma]}\\
        & < o(1) + 2^{ -ml[-R-\tR +(1-\epm)-\gamma]+2}:= \delta_m,
    \end{align*}
    where $(a)$ is due to the symmetry in random codebook generation.
    % to the fact that $\E_{\randcode} P_e(\randcode|W=1,K=k,\tW=1)$ is the same for every $(w, \tw) \in \left[2^{\lfloor MLR \rfloor}\right] \times \left[2^{\lfloor ML\tR \rfloor}\right]$.  
    If we choose $R+\tR \leq (1-\epm)-2\gamma$, then $\bar{P}_e < \delta_m$, where $\delta_m \to 0$ as $m \to \infty$.

To establish the security of the above scheme, we will show that $\E\left[ \stv\left(\code\right) \right] \leq 2^{-\alpha m}$ for some $\alpha >0$. For a given codebook $\code$, we can write $\stv\left(\code\right) =  \dtv\left( P_{W}P_{\Bar{Z}^{ml}}, P_{W\Bar{Z}^{ml}}\right) = \E_{W}\left[ \dtv\left( P_{\Bar{Z}^{ml}}, P_{\Bar{Z}^{ml}|W}\right)\right] \linebreak \leq 2 \E_{W}\left[ \dtv\left( P_{\hZ^{ml}}, P_{\Bar{Z}^{ml}|W} \right)\right],$
% \begin{align*}
%     \stv\left(\code\right) &=  \dtv\left( P_{W}P_{\Bar{Z}^{ml}}, P_{W\Bar{Z}^{ml}}\right) \\
%     % &= \frac{1}{2} \sum_{w} \sum_{z^{ml}} \left|P_{W}(w)P_{\Bar{Z}^{ml}} (z^{ml}) -  P_{W, \Bar{Z}^{ml}} (w, z^{ml})\right| \\
%     % &= \frac{1}{2} \sum_{w} P_{W}(w) \sum_{z^{ml}} \left|P_{\Bar{Z}^{ml}} (z^{ml}) -  P_{\Bar{Z}^{ml}|W} (z^{ml}|w)\right|\\
%     &= \E_{W}\left[ \dtv\left( P_{\Bar{Z}^{ml}}, P_{\Bar{Z}^{ml}|W}\right)\right]\\
%     % & \leq  \dtv\left( P_{\Bar{Z}^{ml}}, P_{\hZ^{ml}}\right) + \E_{W}\left[ \dtv\left( P_{\hZ^{ml}}, P_{\Bar{Z}^{ml}|W} \right)\right]\\
%     % & \leq \E_{W}\left[ \dtv\left( P_{\Bar{Z}^{ml}|W}, P_{\hZ^{ml}}\right)\right] + \E_{W}\left[ \dtv\left( P_{\hZ^{ml}}, P_{\Bar{Z}^{ml}|W} \right)\right]\\
%     & \stackrel{(a)}{\leq} 2 \E_{W}\left[ \dtv\left( P_{\hZ^{ml}}, P_{\Bar{Z}^{ml}|W} \right)\right],
% \end{align*}
where $P_{\hZ^{ml}}$ is the distribution of the output of Eve's channel when the input distribution is $P_{\hX^{ml}}$, and the inequality involves the use of the triangle inequality.
This implies that 
\begin{align*}
    \E \left[\stv\left(\randcode\right)\right] &\leq  2 \E_{\randcode} \E_{W}\left[ \dtv\left( P_{\hZ^{ml}}, P_{\Bar{Z}^{ml}|W} \right)\right]
    % & =  2  \E_{W} \E_{\randcode} \left[ \dtv\left( P_{\hZ^{ml}}, P_{\Bar{Z}^{ml}|W} \right)\right]\\
     \stackrel{(a)}{=} 2 \E_{\randcode} \left[ \dtv\left( P_{\hZ^{ml}}, P_{\Bar{Z}^{ml}|W=1} \right)\right],
\end{align*}
where $(a)$ is, again, due to the symmetry in random codebook generation. If $\trk+\tR \geq (1-\epw)+\gamma$, then we can bound $\E_{\randcode} \left[ \dtv\left( P_{\hZ^{ml}}, P_{\Bar{Z}^{ml}|W=1} \right)\right]$ in following manner by using the ``channel-resolvability" arguments of \cite[Ch. 6]{han_book} and \cite{bloch_strong_secrecy}: For arbitrary $\tau>0$,
\begin{align*}
    \E_{\randcode} \left[ \dtv\left( P_{\hZ^{ml}}, P_{\Bar{Z}^{ml}|W=1} \right)\right] \leq & \ 2  \Pr\left\lbrace \frac{1}{ml}\log \frac{P_{\hZ^{ml}|\hX^{ml}}\big(\hZ^{ml}|\hX^{ml}\big)}{P_{\hZ^{ml}}\big(\hZ^{ml}\big)} > (1-\epw)  +\gamma + \frac{\log \rho}{ml}\right\rbrace \\
    &+ 2\Pr\left\lbrace \frac{1}{ml}\log \frac{P_{\hZ^{ml}|\hX^{ml}}\big(\hZ^{ml}|\hX^{ml}\big)}{P_{\hZ^{ml}}\big(\hZ^{ml}\big)} > (1-\epw) +\gamma \right\rbrace+ 2 \cdot  \frac{2^{-ml\frac{\gamma}{2}}}{\rho^2}\\
    &+ \frac{2}{\rho^2}\Pr\left\lbrace \frac{1}{ml}\log \frac{P_{\hZ^{ml}|\hX^{ml}}\big(\hZ^{ml}|\hX^{ml}\big)}{P_{\hZ^{ml}}\big(\hZ^{ml}\big)} > (1-\epw) +\frac{\gamma}{2}    \right\rbrace + 2\tau,
\end{align*}
where $\rho = \frac{2^{\tau}-1}{2}$. By choosing $\tau = 2^{-\eta m}$ with $0< \eta < \frac{\gamma^2}{4}$, we get $\rho = \frac{\ln 2}{2}2^{-\eta m} + o(2^{-\eta m})$, which satisfies $\frac{1}{\rho^2} \leq   2^{2\eta m+4}$ and $\lim_{m \to \infty}\frac{\log \rho}{ml} = 0$. Hence, for all large enough $m$, $\frac{\log \rho}{ml} > -\frac{\gamma}{2}$. Since Eve's  channel $\text{B-EC}(\epw, m, l)$ is exponentially information-stable \eqref{eq:exp_convergence}, we obtain the bound
\begin{align*}
    \E \left[\stv\left(\randcode\right)\right] &\leq   4 \cdot 2^{-\frac{\gamma^2}{2}m} + 4 \cdot 2^{-2\gamma^2m} + 4 \cdot 2^{2\eta m+4} \cdot 2^{-ml\frac{\gamma}{2}} + 4 \cdot 2^{2\eta m+4} \cdot 2^{-\frac{\gamma^2}{2}m} + 4\cdot 2^{-\eta m} \\
    & < 2^{-\alpha m}
\end{align*}
for some $\alpha>0$ and for all sufficiently large $m$, proving  \eqref{ineq:bounds_0}.

Since $\gamma$ is arbitrary in Lemma~\ref{lem:code_exist_bewtc}, every rate-triple  $(R, \trk, \tR)$ satisfying  $R+\tR \leq (1-\epm)$ and  $\trk+\tR \geq (1-\epw)$ is achievable. Observe that if $(\epw-\epm)+\trk \geq 0$, which trivially holds as $\epw \geq \epm$, and $(1-\epw)\geq \trk$, then the rate-triple $\left((\epw-\epm)+\trk, \trk, (1-\epw)-\trk\right)$ is achievable. On the other hand, if $\trk> (1-\epw)$, then by setting $\tR=0$, we can see that $((1-\epm), 0, \trk)$ is achievable. Therefore, 
% the secure message trasmission rate   $\min\left\{\trk+[\epw-\epm], 1-\epm\right\}$ is achievable:
\begin{align*}
    C_s(\bW, \trk) \geq \min\left\{\trk+(\epw-\epm), 1-\epm\right\}.
\end{align*}

\noindent\underline{Converse part:} We start from the inequality \eqref{ineq:conv_13}, and consider a new distribution $\tilde{P}_{Y^{ml}, Z^{ml}|W}$ with the marginals $P_{Y^{ml}|W}$ and $P_{Z^{ml}|W}$, as in the case of $\epw<\epm$. However, if $\epw\geq \epm$, then the output $Z^{ml}$ of $\text{B-EC}(\epw, m, l)$ is a degraded version of the output $Y^{ml}$ of $\text{B-EC}(\epm, m, l)$, which allows us to consider a new distribution $\tilde{P}_{Y^{ml}, Z^{ml}|W}$ that satisfies the Markov chain $W-X^{ml}-Y^{ml}-Z^{ml}$. For this coupled distribution, we can further bound \eqref{ineq:conv_13} as
\begin{align}
        mlR & \leq H(K) + I(W ; Y^{ml}) - I(W ; Z^{ml})+ mlR\delta_m +\delta'_m+2 \notag\\
        & \stackrel{(a)}{=} H(K) + I(W ; Y^{ml}| Z^{ml})+ mlR\delta_m +\delta'_m+2 \notag\\
        & \stackrel{(b)}{\leq} H(K) + I(X^{ml} ; Y^{ml}| Z^{ml})+ mlR\delta_m +\delta'_m+2 \notag\\
        & \leq H(K) + \sup_{X^{ml}}I(X^{ml} ; 
        Y^{ml}| Z^{ml})+ mlR\delta_m +\delta'_m+2 \notag\\
        & \stackrel{(c)}{=} H(K) + ml(\epw-\epm)+ mlR\delta_m +\delta'_m+2\label{ineq:conv_14}
\end{align}
where $(a)$ and $(b)$ are due to the Markov chain $W-X^{ml}-Y^{ml}-Z^{ml}$ for the coupled distribution $\tilde{P}_{Y^{ml}, Z^{ml}|W}$, and $(c)$ is because of the following bound:
\begin{align*}
     I(X^{ml} ; Y^{ml}| Z^{ml})&= H(Y^{ml}|Z^{ml})-H(Y^{ml}|Z^{ml},X^{ml})\\
    &= \sum_{i=1}^{m}H(\uY_i|Z^{ml},\uY^{i-1})-\sum_{i=1}^{m}H(\uY_i|X^{ml},Z^{ml},\uY^{i-1})\notag\\
    % &\leq \sum_{i=1}^{m}H(\uY_i|\uZ_i)-\sum_{i=1}^{m}H(\uY_i|X^{ml},Z^{ml},\uY^{i-1})\notag\\
    &\stackrel{(d)}{\leq} \sum_{i=1}^{m}H(\uY_i|\uZ_i)-\sum_{i=1}^{m}H(\uY_i|\uX_i,\uZ_i)\notag\\
    &= \sum_{i=1}^{m}I(\uX_i;\uY_i|\uZ_i) \leq m\sup_{\uX} I(\uX;\uY|\uZ)
    % & = m(1-\epsilon)\sup_{\uX} H(\uX)\\
    % & =m\sup_{\uX} [H(\uX|\uZ)-H(\uX|\uY)]\notag\\
    % & =m\sup_{\uX} (\epw-\epm)H(\uX)\notag\\
     = ml(\epw-\epm),
\end{align*}
where  $(d)$ is due to the fact that $\uY_i$ is independent of $\uX^{i-1}, \uX_{i+1}^m, \uZ^{i-1},\uZ_{i+1}^m$ given $\uX_i$ and $\uZ_i$.
By normalizing and taking limits on both sides of \eqref{ineq:conv_14}, we get $R \leq \trk+(\epw-\epm)$, which implies that 
\begin{align}\label{ineq:conv_ew leq em_3}
     C_s(\bW, \trk) \leq  \trk+(\epw-\epm).
\end{align}
By combining \eqref{ineq:conv_ew leq em_1} and \eqref{ineq:conv_ew leq em_3}, we get the converse result $C_s(\bW, \trk) \leq \min\left\lbrace \trk+(\epw-\epm), 1-\epm \right\rbrace.$

\end{proof}

\bibliographystyle{IEEEtran}
\bibliography{IEEEabrv,References}

\end{document}